\def\maxwidth{ %
  \ifdim\Gin@nat@width>\linewidth
    \linewidth
  \else
    \Gin@nat@width
  \fi
}
\definecolor{fgcolor}{rgb}{0.345, 0.345, 0.345}
\definecolor{shadecolor}{rgb}{.97, .97, .97}
\definecolor{messagecolor}{rgb}{0, 0, 0}
\definecolor{warningcolor}{rgb}{1, 0, 1}
\definecolor{errorcolor}{rgb}{1, 0, 0}
\newenvironment{knitrout}{}{} 
\definecolor{maize}{rgb}{0.836,0.660,0.016}
\definecolor{grey}{rgb}{0.484,0.492,0.496}
\definecolor{purple}{rgb}{0.518,0.082,0.765}
\definecolor{darkgreen}{rgb}{0,0.392,0}
\definecolor{royalblue}{rgb}{0.263,0.431,0.933}
\newenvironment{mathsize}[2]{\fontsize{#1}{#2}\selectfont}{\normalfont}
\newcommand\Set[1]{\left\{{#1}\right\}}
\newcommand\CondSet[2]{\left\{{#1}\;\middle\vert\;{#2}\right\}}
\newcommand\Prob[1]{\mathsf{Prob}\left[{#1}\right]}
\newcommand\CondProb[2]{\mathsf{Prob}\left[{#1}\;\middle\vert\;{#2}\right]}
\newcommand\Expect[1]{\mathbbm{E}\left[{#1}\right]}
\newcommand\CondExpect[2]{\mathbbm{E}\left[{#1}\;\middle\vert\;{#2}\right]}
\newcommand\indicator[1]{\mathbbm{1}_{{#1}}}
\newcommand\Indicator[1]{\mathbbm{1}\{{#1}\}}
\newcommand\lik{\mathcal{L}}
\newcommand{\pkg}[1]{\texttt{\textbf{#1}}}
\newcommand{\dd}{\mathrm{d}}
\newcommand{\halfopen}[1]{\left[{#1}\right)}
\newcommand{\halfclosed}[1]{\left({#1}\right]}
\newcommand\leftlim[1]{\widetilde{#1}}
\newcommand\rightlim[1]{\undertilde{#1}}
\newcommand\rdm[1]{\boldsymbol{#1}}
\newcommand\spc[1]{\mathbb{#1}}
\newcommand\deme[1]{\mathrm{#1}}
\newcommand\jump[1]{\mathsf{#1}}
\newcommand\func[1]{\mathsf{#1}}
\newcommand\lab[1]{\mathsf{#1}}
\newcommand\col[1]{\func{d}({#1})}
\newcommand\ctr[1]{\func{m}({#1})}
\newcommand\cols[2]{\func{col}_{#1}({#2})}
\newcommand\fork[3]{{\kappa}_{#1}^{#2}\mkern-1mu{#3}}
\newcommand\chop[3]{{\chi}_{#1}^{#2}\mkern-1mu{#3}}
\newcommand\swap[3]{{\sigma}_{#1}^{#2}\mkern-1mu{#3}}
\newcommand\backfork[3]{{\leftlim{\kappa}}_{#1}^{#2}\mkern-1mu{#3}}
\newcommand\backchop[3]{{\leftlim{\chi}}_{#1}^{#2}\mkern-1mu{#3}}
\newcommand\backswap[3]{{\leftlim{\sigma}}_{{#1}}^{#2}\mkern-1mu{#3}}
\newcommand\event[1]{\func{ev}({#1})}
\def\emptyset{\varnothing}
\def\preceq{\preccurlyeq}
\def\cadlag{{c\`adl\`ag}}
\def\caglad{{c\`agl\`ad}}
\def\Xspace{\spc{X}}
\def\Jumps{\spc{U}}
\def\Demes{\spc{D}}
\def\Zp{\spc{Z}_{+}}
\def\R{\spc{R}}
\def\Rp{\spc{R}_{+}}
\def\ie{{i.e.,} }
\def\eg{{e.g.,} }
\def\cf{{cf.}~}
\def\st{\text{s.t.}}
\def\prune{\func{prune}}
\def\obs{\func{obs}}
\def\leaves{\mathbb{L}}
\def\partit{\func{partit}}
\def\time{\func{t}}
\def\ColorSet{\Demes\times\Set{0,1}}
\def\X{X}
\def\Xr{\rdm{\X}}
\def\Xt{\leftlim{\X}}
\def\Xrt{\rdm{\Xt}}
\def\Xh{\hat{\X}}
\def\Xrh{\rdm{\hat{\X}}}
\def\U{U}
\def\Ur{\rdm{\U}}
\def\Uh{\hat{\U}}
\def\Urh{\rdm{\hat{\U}}}
\def\Y{Y}
\def\Yr{\rdm{\Y}}
\def\Yt{\leftlim{\Y}}
\def\Yrt{\rdm{\Yt}}
\def\fY{\func{Y}}
\def\Z{Z}
\def\Zr{\rdm{\Z}}
\def\Zt{\leftlim{\Z}}
\def\T{T}
\def\Th{\hat{\T}}
\def\Trh{\rdm{\hat{\T}}}
\def\K{K}
\def\Kr{\rdm{\K}}
\def\H{H}
\def\Hr{\rdm{\H}}
\def\G{G}
\def\Gr{\rdm{\G}}
\def\Gh{\hat{\G}}
\def\Grh{\rdm{\Gh}}
\def\y{y}
\def\yr{\rdm{\y}}
\def\yrh{\rdm{\hat{\y}}}
\def\yt{\leftlim{\y}}
\def\yrt{\rdm{\yt}}
\def\P{P}
\def\Ph{\hat{P}}
\def\Pr{\rdm{\P}}
\def\Phir{\rdm{\Phi}}
\def\Phit{\leftlim{\Phi}}
\def\Phirt{\rdm{\Phit}}
\def\Philt{\rdm{\rightlim{\Phi}}}
\def\Phih{\hat{\Phi}}
\def\Phirh{\rdm{\Phih}}
\def\Vr{\rdm{V}}
\def\ut{\leftlim{u}}
\def\vt{\leftlim{v}}
\def\wt{\leftlim{w}}
\renewcommand\binom[2]{{\scriptscriptstyle\begin{pmatrix}{#1}\\{#2}\end{pmatrix}}}
\newcommand\binratio[4]{{\scriptscriptstyle\begin{pmatrix}{#1}&{#2}\\{#3}&{#4}\end{pmatrix}}}
\newcommand\tbinratio[4]{\begin{psmallmatrix}{#1}&{#2}\\{#3}&{#4}\end{psmallmatrix}}
\renewcommand\thesection{\arabic{section}}
\renewcommand\thesubsection{\thesection.\arabic{subsection}}
\renewcommand\thefigure{\arabic{figure}}
\renewcommand\thetable{\arabic{table}}
\newcommand\periodafter[1]{{#1}.}
\titleformat{\section}[hang]{\large\bfseries}{\periodafter\thesection}{2ex}{}{}
\titleformat{\subsection}[runin]{\normalsize\bfseries}{\periodafter\thesubsection}{1ex}{\periodafter}{}
\titleformat{\subsubsection}[runin]{\normalsize\bfseries}{\periodafter\thesubsubsection}{1ex}{\periodafter}{}
\titleformat{\paragraph}[runin]{\normalsize\bfseries}{\theparagraph}{0em}{\periodafter}{}
\titlespacing*{\section}{0em}{*1}{*1}
\titlespacing{\subsection}{0em}{*0}{*1}
\titlespacing{\subsubsection}{0em}{*0}{*1}
\titlespacing{\paragraph}{0em}{*0}{*1}
\theoremstyle{plain}
\newtheorem{thm}{Theorem}
\newtheorem{corol}[thm]{Corollary}
\newtheorem{lemma}{Lemma}
\theoremstyle{definition}
\newtheorem*{defn}{Definition}
\theoremstyle{remark}
\newtheorem{remark}{Remark}
\title[Exact Phylodynamic Likelihood]{Exact Phylodynamic Likelihood\\ via Structured Markov Genealogy Processes}
\author[King]{Aaron~A.~King}
\address{
  A.~A.~King,
  Department of Ecology \& Evolutionary~Biology,
  Center for the Study of Complex~Systems, \&
  Department of Mathematics,
  University of Michigan,
  Ann~Arbor, MI~48109~USA
  {and}
  Santa~Fe~Institute,
  1399 Hyde~Park~Road,
  Santa~Fe, NM~87501~USA
}
\email{kingaa@umich.edu}
\urladdr{\href{https://kinglab.eeb.lsa.umich.edu/}{https://kinglab.eeb.lsa.umich.edu/}}
\author[Lin]{Qianying~Lin}
\address{
  Q.-Y.~Lin,
  Department of Biostatistics,
  College of Public Health,
  Ohio State University,
  Columbus, OH~43210~USA
}
\author[Ionides]{Edward~L.~Ionides}
\address{
  E.~L.~Ionides,
  Department of Statistics,
  University of Michigan,
  Ann~Arbor, MI~48109~USA
}
\date{\today}
\begin{document}

\begin{abstract}
  We show that each member of a broad class of Markovian population models induces a unique stochastic process on the space of genealogies.
  We construct this genealogy process and derive exact expressions for the likelihood of an observed genealogy in terms of a filter equation, the structure of which is completely determined by the population model.
  We show that existing phylodynamic methods based on the coalescent and linear birth-death processes are special cases.
  We derive some properties of filter equations and describe a class of algorithms that can be used to numerically solve them.
  Importantly, because these algorithms rely only on simulation of the population model, they retain the plug-and-play property upon which simulation-based inference depends.
  Our results open the door to statistically efficient likelihood-based phylodynamic inference for a much wider class of models than has been possible.
\end{abstract}

\maketitle

\section{Introduction}

When the genome of an infectious agent accumulates mutations on timescales similar to those of transmission and infection progression, the resulting pattern of differences among genomes contains information on the history of the pathogen's passage through individual hosts and the host population.
As \citet{Grenfell2004} observed, one can extract this information to gain insight into the structure and dynamics of the host-pathogen system.
In particular, one can formalize mathematical models of transmission, estimate their parameters, and compare their ability to explain data, following standard statistical paradigms.
This collection of tasks is known as \emph{phylodynamic} inference;
\citet{Alizon2024} provides a recent review.

A common approach to phylodynamic inference involves building a mathematical linkage between the tree-like \emph{genealogy} or \emph{phylogeny} that expresses the relationships of shared ancestry among sampled genomes and a model of the dynamics of the transmission system.
Various linkages are possible, but, to attain maximal statistical efficiency (\ie lose the least information), it is desirable to be able to compute the likelihood function for models of interest.
The likelihood function is the probability density of a given set of data, conditional on a given model, viewed as a function of the parameters of that model.
While various kinds of data may be available, we focus here on the case where the data are a set of genomic sequences.
Notationally, if the data, $S$, is a set of genome sequences, $\Phi$ a genealogical tree relating these sequences, $E$ a model of sequence evolution, and $D$ a dynamic transmission model, then the full likelihood is
\begin{equation*}
  \lik(D,E) = f(S|D,E) = \int{f(S|\Phi,E)\,f(\Phi|D)\,\dd{\Phi}},
\end{equation*}
where the integral is taken over all possible genealogies and we somewhat loosely use the symbol $f$ for the various distinct probability densities, the nature of each of which is clear from its arguments.
In this expression, $f(S|\Phi,E)$ is typically the \citet{Felsenstein2004} phylogenetic likelihood.
The function $f(\Phi|D)$, which links the phylogeny to the dynamic model, is the \emph{phylodynamic likelihood}.
In the Bayesian context, the phylodynamic likelihood is sometimes referred to as a \emph{tree prior} \citep{Moeller2018,Volz2018}.
Its computation has remained out of reach, except in several special cases.
This paper presents theory that facilitates its computation for a very broad range of dynamic models.


With few exceptions, existing approaches to the phylodynamic likelihood have been based on one of two mathematical idealizations.
The first is the \citet{Kingman1982a} coalescent, by which the likelihood of a given genealogy is computed using a reverse-time argument.
This computation provides the exact likelihood for a genealogy resulting from a particular, constant population-size, dynamic model \citep[the Moran model, \eg][]{Moran1958,Kingman1982b,Moehle2000,King2020,King2022}.
Extensions of this approach develop approximate likelihoods for the case when the population size varies as a function of time \citep{Griffiths1994,Drummond2005} or according to an SIR process \citep{Volz2009a,Rasmussen2011}, as long as the population size is large and the sample-fraction remains negligible.
The second idealization is the linear birth-death process, for which the likelihood is available in closed form \citep{Stadler2010}.
Linearity in this context amounts to the assumption that distinct lineages do not interact:
it is the resulting self-similarity of genealogies that renders the likelihood analytically tractable.
Extensions of this approach develop approximations via linearization of nonlinear processes or restriction to scenarios in which population growth is nearly linear \citep[\eg][]{MacPherson2021}.
Although the tractability of these approaches is appealing, concern naturally arises as to validity of the approximations in specific cases, the biases introduced by them, and the amount of information in data left uncaptured by these approximate methods.
For this reason, there is interest in improved phylodynamic inference techniques.

What would an ideal phylodynamic inference method look like?
First, it would allow us to ask the questions we wish to ask.
Since the set of scientifically interesting hypotheses is not contained within the set of statistically convenient models, an ideal phylodynamic inference methodology would put no restrictions on the form of the models that it can accommodate.
In particular, since nonlinearity, nonstationarity, noise, and measurement error are prominent and ubiquitous in epidemiology, such a method would accommodate nonlinear, time-inhomogeneous, stochastic transmission models.
Moreover, because many of the most scientifically and practically important uncertainties concern heterogeneities in transmission rates and the susceptibility, behavior, age, and location of hosts, it would accommodate host populations structured by these and other factors.
Second, it would be both statistically efficient and robust to model misspecification.
In particular, it would achieve maximal statistical efficiency by being based on the exact phylodynamic likelihood, which would also facilitate objective comparison among parameterizations and models.
At the same time, model predictions would be robust to small deviations from the model assumptions.
Finally, an ideal inference method would be computationally efficient.
If numerical computations were absolutely required for its use, these would scale well with model complexity and data volume.

Of course, there is no hope for the realization of such an ideal method.
In particular, in selecting methodology, one must navigate trade-offs between statistical efficiency and robustness, and between computational expense and fidelity of model to question.
With respect to these trade-offs, far more attention has been paid to date on relatively inexpensive approaches that are restricted to models that can only approximate the motivating questions to a greater or lesser extent.
The present paper offers a complementary perspective:
we develop mathematical expressions, and corresponding numerical algorithms, for the phylodynamic likelihood of almost arbitrarily complex models.
Our focus on the exact likelihood offers maximum information capture, even as the computations it involves are somewhat delicate.
Numerical examples are postponed to \zcref{sec:worked-examples}, and computational scalability is discussed in \zcref{sec:discussion}.

While the theory presented here greatly expands the class of models for which exact likelihood is computable, it is limited to models with \emph{discrete structure}.
Thus, while structuring factors such as age, stage of infection progression, and spatial location are most naturally expressed in terms of continuous variables, to apply the results presented here, one would have to discretize these factors.
We suspect that this limitation may often be relatively painless in practice, since discretely structured models have repeatedly proved their value in epidemiology.
In particular, compartmental models offer great flexibility and have often been used as approximations when continuous structure leads to uncomfortably high model dimension.

To connect a model at the level of a population with genealogies based on samples taken from individual hosts, it is necessary to make assumptions about the individuals in the population.
The simplest such assumption is that the individuals that are identical with respect to the population dynamics are fully statistically identical.
That is, that they are \emph{exchangeable}.
In a compartmental model, this is tantamount to the assumption that the residence times of the individuals within each compartment are identically distributed, though not independent.
Although exchangeability is indeed an additional assumption, it is so natural that it is frequently unrecognized as such, and one often reads statements to the effect that exchangeability of individuals is a consequence of the Markovian assumption.
Nonetheless, since it adds minimal additional structure, it is the natural assumption, the one adopted by the coalescent- and birth-death-process-based approaches described above, and the one we will make in this paper.

From the mathematical point of view, a fundamental difficulty in determining expressions for the likelihood of a genealogy generated by a given population process is the fact that a genealogy's structure is determined by events occurring at potentially widely separated times.
For example, the appearance of a branch in an observed genealogy depends both on there having been a divergence of lineages at that time, but also on two of the emerging lineages having been sampled, perhaps much later.
As discussed above, both coalescent and linear birth-death approaches to phylodynamic inference employ restrictive assumptions to get around this non-locality.
The approaches of \citet{Vaughan2019} and \citet{King2022} avoid the need for such restrictions by exploiting the observation that, conditional on population history, there is a factorization of the likelihood with time-localized factors.
The main contribution of this paper is the extension of these results to accommodate population structure.
Additionally, the proofs given here are both simpler and more general than the ones given in the earlier papers.

In the following, we take as our starting point a population model in the form of a discretely structured continuous-time Markov process.
We show how such a process uniquely induces several stochastic processes in the space of genealogies.
We go on to show that exact likelihoods for these genealogies can be computed by solving a certain \emph{filter equation}.
In particular, for each given population-level model there is a specific filter equation.
We then show that our approach generalizes both the coalescent- and linear birth-death-process-based approaches.
Indeed, the filter equations for the models on which these approaches are based can be solved to obtain the familiar likelihood formulae.
In appendices, we develop the properties of filter equations generally, provide a sequential Monte Carlo algorithm that can be used when numerical integration is needed for their solution, and work out the filter equations and their solutions in two examples.
At the end of the main text, \zcref{tab:symbols} is an index of the mathematical notation used in the paper.

Codes sufficient for the reproduction of all the figures in this paper are freely available for download at \url{https://github.com/kingaa/structured-genealogy-process-paper}.
An archival version of these will be stored on Zenodo or its equivalent upon publication of a peer-reviewed version of this paper [and this sentence will be accordingly revised].
The open-source \pkg{R} package \pkg{phylopomp} (\url{https://github.com/kingaa/phylopomp}) implements the simulation and likelihood-computation algorithms employed here.

\section{Mathematical preliminaries}

\subsection{Notation}
\label{sec:notation}

Throughout the paper, we will adopt the convention that a boldface symbol (\eg $\Xr$), denotes a random element.
We will be concerned with a variety of stochastic processes, in both discrete and continuous time.
In both cases, we will use a subscript to indicate the time parameter: \eg $\Xr_t$ or $\Gr_k$, where $t$ takes values in the non-negative reals $\Rp$ and $k$ in the non-negative integers $\Zp$.
In the case of continuous-time processes, we will find ourselves needing to refer to their left- or right-limits.
Accordingly, if $\Phir_t$ is any process (random or not), let
\begin{equation*}
  \begin{gathered}
    \Phirt_t\coloneq
    \displaystyle\lim_{s\,\uparrow\,{t}}\;\Phir_{s}
    \qquad\text{and}\qquad
    \Philt_t\coloneq
    \displaystyle\lim_{s\,\downarrow\,{t}}\;\Phir_{s}.
  \end{gathered}
\end{equation*}
Most of the processes treated here will either be right-continuous with left limits (\ie \cadlag, so that $\Phir_t=\Philt_t$) or left-continuous with right limits (\caglad, \ie $\Phirt_t=\Phir_t$).
Note that, if $\Phir_t$ is \cadlag, then $\Phirt_t$ is \caglad\ and if $\Phir_t$ is \caglad, then $\Philt_t$ is \cadlag.

If $\Phir_t$, $t\in\Rp$ is a pure jump process, knowledge of its sample path is equivalent to knowledge of the number, $\Kr_t$, of jumps it has taken as of time $t$, the jump times $\Trh_k$, and the embedded chain $\Phirh^{}_k\coloneq{\Phir_{\Trh^{}_k}}$, $k=0,\dots,\Kr^{}_t$.
In particular, $\Phir_t$ has \cadlag\ sample paths and we adopt the convention that $\Trh_0=0$ and $\Trh_{\Kr_t+1}=t$, then
$\Phir_s=\Phirh_k$ for $s\in\halfopen{\Trh_k,\Trh_{k+1}}$, $k=0,\dots,\Kr_t$.

We will use the symbol $\indicator{P}$ or $\Indicator{P}$ to denote an indicator, which is $1$ when $P$ is true and $0$ when it is false.
A complete index to the symbols and acronyms used in the paper is to be found in \zcref{tab:symbols}.

\subsection{Population process}
\label{sec:population-process}

We are motivated by the desire for exact phylodynamic inference methods for as wide a class of epidemiological models as possible.
In particular, we would like to be able to formulate and parameterize an arbitrary compartmental model and to quantify its ability to explain data using likelihood.
\zcref[S]{fig:example-models} depicts a few of the simplest such models in order to give a sense of the kinds of complexities that can arise.
Of course, with the ability to entertain models with countably many compartments, much greater complexity is possible.
In particular, one can model not only complex infection progression, but also strain structure, behavioral structure, age structure, and spatial structure using compartmental models.
As is well known, one can discretize continuous structure-variables and employ the linear chain trick to accommodate non-exponential residence times \citep[e.g.,][]{Hurtado2021,Keeling2008}.
While the utility of these approximations will vary, a very wide range of model assumptions lie within the scope of the theory presented here.

\begin{figure}
  \begin{center}
  \resizebox{0.9\linewidth}{!}{
    \begin{tikzpicture}[scale=1]
      \tikzstyle{box}=[draw=black, text=black, fill=white, very thick, minimum size=3em]
      \tikzstyle{ibox}=[draw=darkgreen]
      \tikzstyle{deme}=[fill=black!20!white]
      \tikzstyle{label}=[font=\Large]
      \tikzstyle{coordinate}=[inner sep=0pt,outer sep=0pt]
      \tikzstyle{flow}=[draw=black, very thick, >=stealth]
      \tikzstyle{modulate}=[draw=darkgreen, >=Circle]

      \coordinate (origin) at (0,0);
      \node[label] (lab) at (origin) {\textbf{A}};
      \node [box] (S) at ($(origin)+(1,-1)$) {$\deme{S}$};
      \node [box,deme] (E) at ($(S)+(2,0)$) {$\deme{E}$};
      \node [box,ibox,deme] (I) at ($(E)+(2,0)$) {$\deme{I}$};
      \node [box] (R) at ($(I)+(2,0)$) {$\deme{R}$};
      \coordinate (overR) at ($(R)+(0,1)$);
      \coordinate (midSE) at ($(S)!0.5!(E)$);
      \draw [flow,->] (S) -- (E);
      \draw [flow,->] (E) -- (I);
      \draw [flow,->] (I) -- (R);
      \draw [flow,->] (R) -- (overR) -- (S |- overR) -- (S);
      \draw [modulate,->] (I.north west) .. controls ($(I)+(-1,1)$) and ($(midSE)+(0,1)$) .. (midSE);
      \node[font=\normalsize] (demes) at ($(E)!0.5!(I)+(0,-1)$) {$\Demes=\{\deme{E},\deme{I}\}$};

      \coordinate (origin) at (0,-3);
      \node[label] (lab) at (origin) {\textbf{B}};
      \node [box] (S) at ($(origin)+(1,-2)$) {$\deme{S}$};
      \node [box] (R1) at ($(S)+(6,1)$) {$\deme{R}_1$};
      \node [box] (R2) at ($(S)+(6,-1)$) {$\deme{R}_2$};
      \coordinate (overR) at ($(R1)+(0,1)$);
      \coordinate (underR) at ($(R2)-(0,1)$);
      \node [box,deme] (E1) at ($(S)+(2,1)$) {$\deme{E}_1$};
      \node [box,deme] (E2) at ($(S)+(2,-1)$) {$\deme{E}_2$};
      \node [box,ibox,deme] (I1) at ($(E1)+(2,0)$) {$\deme{I}_1$};
      \node [box,ibox,deme] (I2) at ($(E2)+(2,0)$) {$\deme{I}_2$};
      \draw [flow,->] (S) -- (E1);
      \draw [flow,->] (E1) -- (I1);
      \draw [flow,->] (I1) -- (R1);
      \draw [flow,->] (S) -- (E2);
      \draw [flow,->] (E2) -- (I2);
      \draw [flow,->] (I2) -- (R2);
      \draw [flow,->] (R1) -- (overR) -- (S |- overR) -- (S);
      \draw [flow,->] (R2) -- (underR) -- (S |- underR) -- (S);
      \coordinate (midSE1) at ($(S)!0.5!(E1)$);
      \coordinate (midSE2) at ($(S)!0.5!(E2)$);
      \draw [modulate,->] (I1.north west) .. controls ($(I1)+(-1,1)$) and ($(midSE1)+(0,1.8)$) .. (midSE1);
      \draw [modulate,->] (I2.south west) .. controls ($(I2)+(-1,-1)$) and ($(midSE2)+(0,-1.8)$) .. (midSE2);
      \node[font=\normalsize] (demes) at ($(E2)!0.5!(I2)+(0,-2)$) {$\Demes=\{\deme{E}_1,\deme{E}_2,\deme{I}_1,\deme{I}_2\}$};

      \coordinate (origin) at (9,0);
      \node[label] (lab) at (origin) {\textbf{C}};
      \node [box] (S) at ($(origin)+(1,-2)$) {$\deme{S}$};
      \node [box] (R) at ($(S)+(7,1)$) {$\deme{R}$};
      \coordinate (overR) at ($(R)+(0,1)$);
      \node [box,deme] (E) at ($(S)+(2,0)$) {$\deme{E}$};
      \node [box,ibox,deme] (Ia) at ($(E)+(2,1)$) {$\deme{I_A}$};
      \node [box,ibox,deme] (Is) at ($(E)+(2,-1)$) {$\deme{I_S}$};
      \node [box] (H) at ($(Is)+(2,0)$) {$\deme{H}$};
      \node [box] (D) at ($(H)+(2,0)$) {$\deme{D}$};
      \draw [flow,->] (S) -- (E);
      \draw [flow,->] (E) -- (Ia.west);
      \draw [flow,->] (E) -- (Is.west);
      \draw [flow,->] (Ia) -- (R);
      \draw [flow,->] (Is) -- (H);
      \draw [flow,->] (H) -- (R);
      \draw [flow,->] (Is) -- (R);
      \draw [flow,->] (H) -- (D);
      \draw [flow,->] (R) -- (overR) -- (S |- overR) -- (S);
      \coordinate (midSE) at ($(S)!0.5!(E)$);
      \draw [modulate,->] (Ia.north west) .. controls ($(Ia.north west)+(165:1)$) and ($(midSE)+(0,2)$) .. (midSE);
      \draw [modulate,->] (Is.south west) .. controls ($(Is.south west)+(195:1)$) and ($(midSE)+(0,-2)$) .. (midSE);
      \node[font=\normalsize] (demes) at ($(Is)+(0,-1)$) {$\Demes=\{\deme{E},\deme{I_A},\deme{I_S}\}$};

      \coordinate (origin) at (9,-5);
      \node[label] (lab) at (origin) {\textbf{D}};
      \node [box] (S) at ($(origin)+(2,-1)$) {$\deme{S}$};
      \node [box,deme] (E) at ($(S)+(2,0)$) {$\deme{E}$};
      \node [box,ibox,deme] (Il) at ($(E)+(2,0)$) {$\deme{I_L}$};
      \node [box,ibox,deme] (Ih) at ($(Il)+(0,-2)$) {$\deme{I_H}$};
      \node [box] (R) at ($(Il)+(2,0)$) {$\deme{R}$};
      \coordinate (overR) at ($(R)+(0,1)$);
      \coordinate (midSE) at ($(S)!0.5!(E)$);
      \draw [flow,->] (S) -- (E);
      \draw [flow,->] (E) -- (Il);
      \draw [flow,<->] (Il) -- (Ih);
      \draw [flow,->] (Il) -- (R);
      \draw [flow,->] (Ih.east) -- (R |- Ih.east) -- (R);
      \draw [flow,->] (R) -- (overR) -- (S |- overR) -- (S);
      \draw [modulate,->] (Il.north west) .. controls ($(Il.north west)+(135:0.8)$) and ($(midSE)+(0,1)$) .. (midSE);
      \draw [modulate,->] (Ih.west) .. controls ($(Ih.west)+(180:0.8)$) and ($(midSE)+(0,-1)$) .. (midSE);
      \node[font=\normalsize] (demes) at (midSE |- Ih) {$\Demes=\{\deme{E},\deme{I_L},\deme{I_H}\}$};
    \end{tikzpicture}
  }
\end{center}
  \caption{
    Examples of discretely-structured population models.
    Demes are shaded.
    Compartments containing infectious hosts are outlined in green.
    Curved green lines connect transmission rates with the compartments whose occupancies control their modulation;
    each such connection gives rise to a nonlinearity in the model.
    \textbf{(A)} An SEIRS model.
    Susceptible individuals ($\deme{S}$), once infected, enter a transient incubation phase ($\deme{E}$) before they become infectious ($\deme{I}$).
    Upon recovery ($\deme{R}$), individuals experience immunity from reinfection.
    If this immunity wanes, they re-enter the susceptible compartment.
    Pathogen lineages are to be found in hosts within the $\deme{E}$ and $\deme{I}$ compartments only.
    Accordingly, there are two demes: $\Demes=\Set{\deme{E},\deme{I}}$.
    If there is exactly one lineage per host, then the occupancy, $n(\Xr_t)=(n_{\deme{E}}(\Xr_t),n_{\deme{I}}(\Xr_t))$, is the integer 2-vector giving the numbers of hosts in the respective compartments.
    See \zcref{sec:demes} for definition and discussion of demes and deme occupancy.
    \textbf{(B)} In this four-deme model, two distinct pathogen strains compete for susceptibles.
    \textbf{(C)} A three-deme model according to which, after an incubation period, hosts may develop asymptomatic infection ($\deme{I_A}$).
    If they do not recover, symptomatically infected hosts ($\deme{I_S}$) can progress to hospitalization ($\deme{H}$) and death ($\deme{D}$).
    \textbf{(D)} A three-deme model with heterogeneity in transmission behavior.
    Contagious individuals move randomly between low-transmission ($\deme{I_L}$) and high-transmission ($\deme{I_H}$) behaviors.
    \label{fig:example-models}
  }
\end{figure}

We will refer to the stochastic process defined by the given model as the \emph{population process}.
We will assume that the population process is a time-inhomogeneous Markov jump process, $\Xr_t$, $t\in\Rp$, taking values in some space $\Xspace$.
In earlier work \citep{King2022}, we limited ourselves to the case $\Xspace=\mathbb{Z}^d$, but here we assume only that $\Xspace$ is a complete metric measure space with a countable dense subset.
The population process is completely specified by its initial-state density, $p_0$, and its transition rates $\alpha$.
In particular, we suppose that
\begin{equation}
  \label{eq:ic}
  \Prob{\Xr_0\in\mathcal{E}}=\int_{\mathcal{E}}{p_0(x)\,\dd{x}}
\end{equation}
for all measurable sets $\mathcal{E}\subseteq\Xspace$.
For any $t\in\Rp$, $x,x'\in\Xspace$, we think of the quantity $\alpha(t,x,x')$ as the instantaneous hazard of a jump from $x$ to $x'$.
More precisely, the transition rates have the following properties:
\begin{equation*}
  \begin{gathered}
    \alpha(t,x,x')\ge{0}, \qquad \int_{\Xspace}{\alpha(t,x,x')\,\dd{x'}}<\infty,\\
  \end{gathered}
\end{equation*}
for all $t\in\Rp$ and $x,x'\in\Xspace$ and that, as a function of time, $\alpha$ is \cadlag\ and continuous almost everywhere.
Henceforth, we understand that integrals are taken over all of $\Xspace$ unless otherwise specified.
Let $\Kr_t$ be the number of jumps that $\Xr$ has taken by time $t$.
We assume that $\Kr_t$ is a simple counting process so that
\begin{equation*}
  \begin{gathered}
    \CondProb{\Kr_{t+\Delta}=n+1}{\Kr_{t}=n,\Xr_{t}=x}=\Delta\,\int{\alpha(t,x,x')\,\dd{x'}}+o(\Delta),\\
    \CondProb{\Kr_{t+\Delta}>n+1}{\Kr_{t}=n}=o(\Delta),\\
    \CondProb{\Xr_{t+\Delta}\in\mathcal{E}\setminus\Set{x}}{\Xr_{t}=x}=\Delta\,\int_{\mathcal{E}\setminus\Set{x}}{\alpha(t,x,x')\,\dd{x'}}+o(\Delta),\\
    \CondProb{\Xr_{t+\Delta}=x}{\Xr_{t}=x}=1-\Delta\,\int{\alpha(t,x,x')\,\dd{x'}}+o(\Delta).
  \end{gathered}
\end{equation*}
We further assume that $\Xr_t$ has \cadlag\ sample paths and is \emph{non-explosive}, \ie $\Prob{\Kr_t<\infty}=1$ for all $t<\infty$.

\subsection{Kolmogorov equations}
\label{sec:kolmogorov-eqns}

The above may be compactly summarized by stating that if $v(t,x)$ satisfies the Kolmogorov forward equation (KFE),
\begin{equation}
  \label{eq:kfe}
  \begin{gathered}
    \frac{\partial{v}}{\partial{t}}(t,x)
    =\int\!{v(t,x')\,\alpha(t,x',x)\,\dd{x'}}
    -\int\!{v(t,x)\,\alpha(t,x,x')\,\dd{x'}},
    \qquad
    v(0,x) = p_0(x),
  \end{gathered}
\end{equation}
for $t\in[0,T]$, then $\int_{\mathcal{E}}\!{v(t,x)\,\dd{x}}=\Prob{\Xr_t\in\mathcal{E}}$ for every $t\in[0,T]$ and measurable $\mathcal{E}\subseteq{\Xspace}$.
\zcref[S]{eq:kfe} is sometimes called the \emph{master equation} for $\Xr_t$.
The adjoint form of the KFE is the Kolmogorov backward equation,
\begin{equation}
  \label{eq:kbe}
  \begin{gathered}
    -\frac{\partial{F}}{\partial{s}}(s,x)
    =\int\!{\alpha(t,x,x')\,\left[F(s,x')-F(s,x)\right]\dd{x'}},
    \qquad
    F(T,x) = f(x).
  \end{gathered}
\end{equation}
If $F$ satisfies \zcref{eq:kbe} for $s\in[0,T]$ and $x\in\Xspace$, then $F(s,x)=\CondExpect{f(\Xr_T)}{\Xr_s=x}$ for all $t\in[0,T]$ and $x\in\Xspace$.

\subsection{Inclusion of jumps at deterministic times}
\label{sec:deltas}

For modeling purposes, it is sometimes desirable to insist that certain events occur at specified times.
For example, if samples are collected at specific times in such a way that the timing itself conveys no information about the process, one might wish to condition on the sampling time.
We can expand the class of population models to allow for this as follows.
Suppose that $S=\Set{s_1,s_2,\dots,}\subset\Rp$ is a sequence of times.
Let us postulate that, at each of these times, an event occurs at which $\Xr_t$ jumps according to a given probability kernel $\pi$.
In particular, for any state $x\in\Xspace$ and measurable $\mathcal{E}\subset\Xspace$,
$\pi(s_i,x,\mathcal{E})$ is the probability that the jump at time $s_i$ is to $\mathcal{E}$, conditional on $\Xrt_{s_i}=x$.
With this notation, the KFE for the process becomes
\begin{align}
  \label{eq:kfe-reg}
  \frac{\partial{v}}{\partial{t}}(t,x)
  &=\int\!{v(t,x')\,\alpha(t,x',x)\,\dd{x'}}
  -\int\!{v(t,x)\,\alpha(t,x,x')\,\dd{x'}},
  &t\notin{S},\\
  \label{eq:kfe-sing}
  v(t,x)\,\dd{x}
  &=\int\!{\leftlim{v}(t,x')\,\pi(t,x',\dd{x})\,\dd{x'}},
  &t\in{S}.
\end{align}
Note that \zcref{eq:kfe-reg} is identical to \zcref{eq:kfe};
we call this the \emph{regular part} of the KFE.
We refer to \zcref{eq:kfe-sing} as the \emph{singular part} of the KFE.
In this notation, $\pi(t,x',\dd{x})/\dd{x}$ is the density (\ie Radon-Nikodym derivative) of $\pi$ with respect to the base measure on $\Xspace$.

As a matter of notation, one can represent \zcref{eq:kfe-reg,eq:kfe-sing} as a single equation in the form of \zcref{eq:kfe}.
In particular, if in \zcref{eq:kfe} we make the substitution
\begin{equation*}
  \alpha(t,x,x')\,\dd{x'}\mapsto\alpha(t,x,x')\,\dd{x'}+\sum_{s\in{S}}{\delta(s,t)\,\pi(t,x,\dd{x'})},
\end{equation*}
we obtain an equation which we can view as shorthand for \zcref{eq:kfe-reg,eq:kfe-sing}.
Here, $\delta(s,t)$ is the one-sided Dirac delta function satisfying $\delta(s,t)=0$ for $s\ne{t}$ and
\begin{equation*}
  \int_a^b{f(t)\,\delta(s,t)\,\dd{t}}=f(s)\,\Indicator{s\in\halfopen{a,b}},
\end{equation*}
whenever $f$ is \cadlag\ and $-\infty\le{a}<{b}\le{\infty}$.

\subsection{Jump marks}
\label{sec:jump-marks}

\begin{figure}
  \begin{center}
  \resizebox{0.9\linewidth}{!}{
    \begin{tikzpicture}[scale=1]
      \usetikzlibrary{shapes,arrows,positioning}
      \tikzstyle{coordinate}=[inner sep=0pt,outer sep=0pt]
      \tikzstyle{state}=[shape=ellipse, color=black, draw, font=\LARGE, fill=white, thick, minimum height=8em]
      \tikzstyle{trans}=[color=black, font=\Large, thick, >=stealth]
      \node[state] (base) at (0,0) {$x=(S,E,I,R)$};
      \node[state] (trans) at (165:8) {$x'=(S-1,E+1,I,R)$};
      \node[state] (prog) at (205:7) {$x'=(S,E-1,I+1,R)$};
      \node[state] (recov) at (305:5) {$x'=(S,E,I-1,R+1)$};
      \node[state] (wane) at (350:8) {$x'=(S+1,E,I,R-1)$};
      \node[state] (sample) at (18:7.5) {$x'=(S,E,I,R)$};
      \draw[trans,->] (base) -- (trans) node[midway,above,sloped] {$\lab{Trans}$};
      \draw[trans,->] (base) -- (prog) node[midway,above,sloped] {$\lab{Prog}$};
      \draw[trans,->] (base) -- (recov) node[midway,above,sloped] {$\lab{Recov}$};
      \draw[trans,->] (base) -- (wane) node[midway,above,sloped] {$\lab{Wane}$};
      \draw[trans,->] (base) -- (sample) node[midway,above,sloped] {$\lab{Sample}$};
      \node[font=\LARGE] (U) at (95:3.5) {$\Jumps=\Set{\lab{Trans},\lab{Prog},\lab{Recov},\lab{Wane},\lab{Sample}}$};
    \end{tikzpicture}
  }
\end{center}
  \caption{
    \label{fig:markov-state}
    Markov state transition diagram for the SEIRS model depicted in \zcref{fig:example-models}A.
    The state, $x$, is characterized by four numbers, $S$, $E$, $I$, and $R$.
    From a given state $x$, there are five possible kinds of jumps $x\mapsto{x'}$.
    Accordingly, the set, $\Jumps$, of jump marks has five elements.
    In the terminology of \zcref{sec:event-types}, each of these is of a different type:
    $\lab{Trans}$ (transmission) is of birth type,
    $\lab{Prog}$ (progression) is of migration type,
    $\lab{Recov}$ (recovery) is of death type,
    $\lab{Sample}$ (sampling) is of sample type,
    and $\lab{Wane}$ (loss or waning of immunity) is of neutral type.
    Note that, in this formulation, when a sampling event occurs, the state does not change.
  }
\end{figure}

It will be useful to divide the jumps of the population process $\Xr_t$ into distinct categories which differ with respect to the changes they induce in a genealogy.
For this purpose, we let $\Jumps$ be a countable set of jump \emph{marks} such that
\begin{equation*}
  \alpha(t,x,x')=\sum_{u\in\Jumps}{\alpha_u(t,x,x')}.
\end{equation*}
\zcref[S]{fig:markov-state} shows an example for which $\Jumps$ has five elements.
In the following, sums over $u$ are to be taken over the whole of $\Jumps$ unless otherwise indicated.

Let us define the \emph{jump mark process}, $\Ur_t$, to be the mark of the latest jump as of time $t$.
As usual, we take the sample paths of $\Ur_t$ to be \cadlag.
Observe that $\Ur_t$ is not Markov, though $(\Xr_t,\Ur_t)$ is.
In particular,
\begin{equation*}
  \frac{\partial{v}}{\partial{t}}(t,x,u)
  =\sum_{u'}{\int\!{v(t,x',u')\,\alpha_{u}(t,x',x)\,\dd{x'}}}
  -\sum_{u'}{\int\!{v(t,x,u)\,\alpha_{u'}(t,x,x')\,\dd{x'}}}
\end{equation*}
is the KFE for the $(\Xr_t,\Ur_t)$ process.

\subsection{Demes and deme occupancy}
\label{sec:demes}

Our first goal in this paper is to show how a given population process induces a stochastic process on the space of genealogies.
At each time, this genealogy will represent the relationships of shared ancestry among a population of lineages extant at that time.
To accommodate the structure of the population, this population of lineages will itself be subdivided into discrete categories.
In particular, we suppose that there are a countable set of subpopulations, within each of which individual lineages are exchangeable.
We call these subpopulations \emph{demes} and let $\Demes$ be an index set for them.
\zcref[S]{fig:example-models} illustrates this concept in the context of several specific models.
We note that other authors have used different terminology for the same concept, including ``colony'', ``type'', ``state'', and ``population'' \citep{Takahata1988,Notohara1990,Boskova2014,Volz2018,BaridoSottani2020,Seidel2024,Vaughan2025}.

We define the \emph{deme occupancy} function $n:\Demes\times\Xspace\to\Zp$ so that
for $i\in\Demes$, $x\in\Xspace$, $n_i(x)$ is the number of lineages in deme $i$ when the population is in state $x$.

\subsection{Examples}
\label{sec:examples}

The class of population models to which the theory presented here applies is very broad.
In particular, it encompasses the entire class of compartmental models with time-dependent hazards.
Here, to give a sense of this breadth, we briefly describe a few models of interest.

\paragraph{SIRS model}

\citet{King2022} developed formulae for the exact likelihood of a genealogy induced by an SIRS model.
The theory developed in this paper applies, but since there is only one deme in this model, it is a simple case.
In \zcref{sec:sirs-example}, we illustrate the theory by working it out in detail for this model.

\paragraph{SEIRS model}

Modifying the SIRS model by incorporating an incubation period results in the SEIRS model, which has two demes (\zcref{fig:example-models}A):
$\Demes=\Set{\deme{E},\deme{I}}$.
We can take the state space to be $\Zp^4$:
the state $x=(S,E,I,R)$ is defined by the numbers of hosts in each of the four compartments.
The deme occupancy function in this case is $n(x)=(E,I)$.
The genealogies depicted in \zcref{fig:geneal,fig:upo} pertain to this model and, in \zcref{sec:seirs-example}, we work out the theory for the SEIRS model in detail and display results of some numerical computations.

\paragraph{Two-strain competition model}

A simple model for the competition of two pathogen strains is depicted in \zcref{fig:example-models}B.
In this model, the state vector consists of seven numbers: $x=(S,E_1,E_2,I_1,I_2,R_1,R_2)$.
There are four demes ($\Demes=\Set{\deme{E}_1,\deme{E}_2,\deme{I}_1,\deme{I}_2}$) and the occupancy function is $n(x)=(E_1,E_2,I_1,I_2)$.

\paragraph{Complex infection progression}

\zcref[S]{fig:example-models}C depicts a model in which infections are heterogeneous with respect to the severity of the disease they engender.
There are three demes ($\Demes=\Set{\deme{E},\deme{I_A},\deme{I_S}}$).
Asymptomatic infections ($\deme{I_A}$) are unobservable, while symptomatic infections ($\deme{I_S}$) may develop into hospitalized cases ($\deme{H}$) and deaths ($\deme{D}$).

\paragraph{Superspreading model}

Compartmental models can be used to capture heterogeneities of various kinds.
\zcref[S]{fig:example-models}D depicts a model with behavioral heterogeneity.
There are three demes ($\Demes=\Set{\deme{E},\deme{I_L},\deme{I_H}}$).
Hosts in the low-transmitting $\deme{I_L}$ compartment have fewer contacts than those in the high-transmitting $\deme{I_H}$ compartment, and individuals move between these compartments as they engage in bouts of risky or cautious behavior.

\paragraph{Kingman coalescent and Moran model}

The \citet{Kingman1982a} coalescent is one of the two foundations upon which most existing phylodynamic approaches have been constructed.
It is the ancestral process for the Moran model, in which a fixed population of $n$ lineages experiences events at times distributed according to a rate-$\mu$ Poisson process.
At each such event, an individual lineage selected uniformly at random dies and is replaced by the offspring of a second randomly selected lineage.
In \zcref{sec:kingman}, we show that the Kingman coalescent likelihood is a special case of the theory presented in this paper.

\paragraph{Linear birth-death model}

The linear birth-death process, the second basis for widely-used phylodynamic methods, is also a special case of the theory presented here.
For this process, we have $\Xspace=\Zp$ and there is a single deme.
$\Xr_t$ represents the size of a population and $n(x)=x$.
In \zcref{sec:lbdp}, we derive exact expressions for the likelihood under this model.

\subsection{History process}
\label{sec:history-process}

Consider the Markov process $(\Xr_t,\Ur_t)$.
We define its \emph{history process}, $\Hr_t$, to be the restriction of the random function $s\mapsto(\Xr_s,\Ur_s)$ to the interval $[0,t]$.
Note that $\Hr_t$ is itself trivially a Markov process, since it contains its own history.
Alternatively, one can identify $\Hr_t$ with the sequence
$\left(\left(\Trh_k,\Xrh_k,\Urh_k\right)\right)_{k=0}^{\Kr_t}$.
In particular, conditional on $\Hr_t$, both $\Xr_t$ and $\Ur_t$ are deterministic, as are $\Kr_t$, the embedded chains, $\Xrh_k$, $\Urh_k$, and the point process of event times $\Trh_k$.
The probability measure on the space of histories can be expressed in terms of these:
\begin{mathsize}{9pt}{10pt}
  \begin{equation}
    \label{eq:Hdens}
    \Prob{\dd{\H_t}}
    =p_{0}(\Xh_{0})\,\dd{\Xh_0}\,
    \prod_{k=1}^{K_{t}}{\alpha_{\Uh_k}\!\!\left(\Th_k,\Xh_{k-1},\Xh_{k}\right)\,\dd{\Xh_k}\,\dd{\Th_k}}
    \,\exp{\left(-\sum_{k=0}^{K_t}{\int_{\Th_{k}}^{\Th_{k+1}}{\sum_{u}{\int{\alpha_{u}(t',\Xh_{k},x')\,\dd{x'}}}}\,\dd{t'}}\right)},
  \end{equation}
\end{mathsize}%
where again, by convention, $\Th^{}_0=0$ and $\Th^{}_{K^{}_t+1}=t$.

If $\H$ is a potential value of a history process, $\Hr_t$, we define $\time(\H)$ to be the right endpoint of its domain and use the notation $\event{\H}\coloneq\Set{\Th^{}_1,\dots,\Th^{}_{\K_t}}\subset{[0,\time(\H)]}$ to denote the set of its jump times.

\section{Genealogy processes}

\subsection{Genealogies}
\label{sec:genealogy}

A \emph{genealogy}, $G$, encapsulates the relationships of shared ancestry among a set of lineages that are extant at some time $\time(G)\in\Rp$ and perhaps a set of samples collected at earlier times (\zcref{fig:geneal}).
A genealogy has a tree- or forest-like structure, with four distinct kinds of nodes:
\begin{inparaenum}[(i)]
\item \emph{tip nodes}, which represent labeled extant lineages;
\item \emph{internal nodes}, which represent events at which lineages diverged and/or moved from one deme to another;
\item \emph{sample nodes}, which represent labeled samples; and
\item \emph{root nodes}, at the base of each tree.
\end{inparaenum}
Each node $a$ is associated with a specific time, $\time(a)$.
In particular, if $a$ is a tip node in $G$, then $\time(a)=\time(G)$;
if $a$ is a sample node, then $\time(a)\le{\time(G)}$ is the time at which the sample was taken.
Moreover, if node $a$ is ancestral to node $a'$, then $\time(a)\le{\time(a')}$ and $\time(a')-\time(a)$ is the distance between $a$ and $a'$ along the genealogy (\ie the patristic distance).
Without loss of generality we assume that $\time(a)=0$ for all root nodes $a$.
We let $\event{G}$ denote the set of all internal and sample node-times of the genealogy $G$;
we refer to these as \emph{genealogical event times}.

Importantly, a genealogy informs us not only about the shared ancestry of any pair of lineages, but also about where in the set of demes any given lineage was at all times.
Accordingly, we can visualize a genealogy as a graph, the nodes and edges of which are painted with a distinct color for each deme (\zcref{fig:geneal}).
Note that a genealogy will in general have \emph{branch-point nodes}, \ie internal nodes with more than one descendant, but may also have internal nodes with only one descendant.
We refer to such nodes as \emph{inline nodes}.
These occur whenever the color changes along a branch, but can also occur without a color-change.

\begin{figure}[t]
  \begin{center}
\begin{knitrout}\small
\definecolor{shadecolor}{rgb}{0.969, 0.969, 0.969}\color{fgcolor}

{\centering \includegraphics[width=0.5\linewidth]{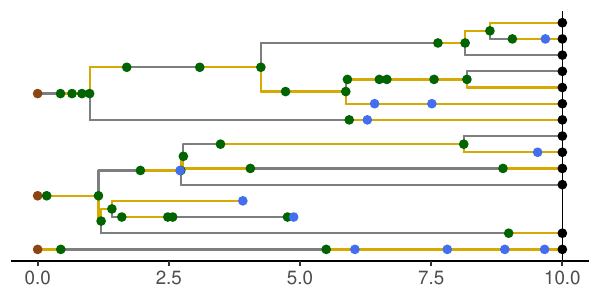} 

}

\end{knitrout}
  \end{center}
  \caption{
    A genealogy, $G$, specifies the relationships of shared ancestry (via its tree-structure) and deme occupancy histories (via the coloring of its branches) of a set of lineages extant at some time $\time(G)$, as well as some samples gathered at earlier times.
    Here, the genealogy is drawn from the genealogy process for the SEIR model.
    We have $\time(G)=10$ and there are two demes, $\Demes=\Set{\deme{E},\deme{I}}$ shown in grey and yellow, respectively.
    Tip nodes, denoting extant lineages, are shown as black dots;
    sample nodes are shown as blue dots;
    internal nodes are indicated in green.
    Note that internal nodes occur not only at branch-points, but also inline (\ie along branches).
    Wherever a lineage moves from one deme (color) to another, an internal node occurs;
    the converse does not necessarily hold.
    \label{fig:geneal}
  }
\end{figure}

Formally, we define a genealogy by extending the original construction of \citet{Kingman1982a}.
Specifically, we define a genealogy $G$, to be a triple, $(T,Z,Y)$, where $T=\time(G)\in\Rp$ is the \emph{genealogy time}, $Z$ specifies the genealogy's \emph{tree structure}, and $Y$ gives the \emph{coloring}.
In particular, let $\leaves$ be a countable set of labels---referring to samples and/or extant lineages---and let $\partit(\leaves)$ be the collection of all partititions of subsets of $\leaves$.
That is, an element $z\in{\partit(\leaves)}$ is a collection of nonempty, mutually disjoint subsets of $\leaves$.
Partition \emph{fineness} defines a partial order on $\partit(\leaves)$.
Specifically, for $z,z'\in{\partit(\leaves)}$, we say $z\preceq{z'}$ if and only if for every $b'\in{z'}$ there is $b\in{z}$ such that $b\supseteq{b'}$.
The tree structure of $G$ is a \cadlag\ map $Z:[0,T]\to\partit(\leaves)$ that is monotone in the sense that $t_1\le{t_2}$ implies $Z_{t_1}\preceq{Z_{t_2}}$.
An element $b\in{Z_t}$ is a set of labels;
it represents the branch of the tree that bears the corresponding lineages.
We use the notation $\event{Z}$ to denote the set of times at which $Z$ is discontinuous.
Note that $\event{Z}$ includes the times of all tip, sample, and branch-point nodes, but excludes root and non-sample inline nodes.
Therefore, $\event{Z}\subseteq{\event{G}}$.

The third element of $G$ specifies the coloring of branches and locations of tip, sample, and internal nodes (including inline nodes).
Mathematically, if $G=(T,Z,Y)$, then $Y$ is a function that maps each point on the genealogy to a deme and an indicator.
In particular, if $t\in[0,T]$ and $a$ is the label of any tip or sample node,
$Y_t(a)=(\col{Y_t(a)},\ctr{Y_t(a)})\in\ColorSet$, where $\col{Y_t(a)}$ is the deme in which the lineage of $a$ is located at time $t$ and $\ctr{Y_t(a)}=1$ if there is a node at $t$ and $0$ otherwise.
We call $\col{Y_t(a)}$ the \emph{color} of the branch bearing $a$ at $t$ and $\ctr{Y_t(a)}$, the \emph{event indicator}.

Since $a,a'\in{b}\in{Z_t}$ implies $Y_t(a)=Y_t(a')$, one can equally well think of $Y_t$ as a map $Z_t\to\ColorSet$ and we will find it useful to think of it in this way most of the time.
Likewise, in an abuse of notation that risks little confusion, one sees that, given $Y$, $\col{Y}:Z\to\Demes$ and $\ctr{Y}:Z\to\Set{0,1}$ are well defined in the obvious way.
In particular, $\ctr{Y}\circ{Z}:[0,T]\to\Set{0,1}$ has finite support and we take $\col{Y}\circ{Z}:[0,T]\to\Demes$ to be \cadlag.

Given a tree $Z$, we let $\fY(Z)$ denote the set of colorings $Y$ that are compatible with $Z$.
We moreover define $\fY_t(Z)\coloneq\CondSet{Y_t}{Y\in{\fY(Z)}}$.
Formally speaking, $\fY(Z)$ is a fiber bundle over $Z$, each $\fY_t(Z)$ being a fiber.
In particular the canonical map $Y_t\mapsto{Z_t}:\fY(Z)\to{Z}$ always exists and is unique.
Finally, we will sometimes use the notation $G^{\lab{Z}}$ to refer to the tree-structure portion of $G$, and $G^{\lab{Y}}$ to refer to the coloring, so that $G=(\time(G),G^{\lab{Z}},G^{\lab{Y}})$.

\subsection{Event types}
\label{sec:event-types}

To see how events in the population process leave a trace in the genealogy, we assume that, at each jump in the population process, a corresponding change occurs in the genealogy, according to whether lineages branch, die, move between demes, or are sampled.
For this purpose, there are five distinct \emph{pure types} of events:
\begin{compactenum}[(a)]
\item \emph{Birth-type events} result in the branching of one or more new lineages from an existing lineage.
  Examples of birth-type events include transmission events, speciations, and actual births.
  Importantly, we assume that all new lineages arising from a birth event share the same parent and that at most one birth event occurs at a time, almost surely.
  While it is possible to relax this assumption, the resulting notational complexities are nontrivial, so we postpone consideration of this more general case.
\item \emph{Death-type events} result in the extinction of one or more lineages.
  Examples include recovery from infection, death of a host, and species extinctions.
  We allow for the possibility that multiple lineages, potentially in multiple demes, die simultaneously.
\item \emph{Migration-type events} result in the movement of a lineage from one deme to another.
  Spatial movements, changes in host age or behavior, and progression of an infection can all be represented as migration-type events.
  In a migration-type event, one or more lineages within exactly one deme may move simultaneously, and they may move to different demes.
  Again, while it possible to relax the assumption that exactly one deme is the source of all migrating lineages within any one event, we postpone consideration of this case for the present.
\item \emph{Sample-type events} result in the collection of a sample from a lineage.
  We allow for the possibility that multiple samples are collected simultaneously, though we require that, in this case, each extant lineage is sampled at most once and that all sampled lineages be in the same deme.
  It is possible to allow for simultaneous sampling of multiple demes but, again, we postpone consideration of this case to ease the exposition.
\item \emph{Neutral-type events} result in no change to any of the lineages.
\end{compactenum}
\zcref[S]{fig:markov-state} depicts an example with jumps of all five pure types.
It is not necessary that an event be of a pure type;
\emph{compound events} partake of more than one type.
For example, a sample/death-type event, in which a lineage is simultaneously sampled and removed, has been employed \citep{Leventhal2014}, as have birth/death events in which one lineage reproduces at the same moment that another dies (\eg the \citet{Moran1958} process).
In combining the pure types, the main restriction is that, at any compound event, all parent lineages must be drawn from the same deme.
Beyond this, the theory presented here places few restrictions on the complexity of the events that can occur by combining events of the various pure types.

\subsection{Genealogy process}
\label{sec:genealogy-process}

We now show how a given population process induces a stochastic process, $\Gr_t$, on the space of genealogies.
In the case of unstructured population processes (\ie those having a single deme), \citet{King2022} gave a related construction that is equivalent to the one presented here.

\begin{figure}
  \input{event_types}
  \caption{
    Event types differ by their effects on the genealogy.
    This can be seen by examining the local structure of the genealogy in the neighborhood of a jump.
    \textbf{(A)} A birth-type jump results in the branching of one or more child lineages from the parent.
    There can be only one parent, though the demes of the child lineages may differ from that of their parent.
    Here, a parent of the grey deme sires one child lineage in each of the grey and yellow demes.
    The \emph{production} of an event is an integer vector, with one entry for each deme.
    The production of this event is therefore $r=(r_{\deme{grey}},r_{\deme{yellow}})=(2,1)$.
    The \emph{deme occupancy} of an event is the number of lineages in each deme just to the right of the event.
    The deme occupancy at this event is therefore $n=(n_{\deme{grey}},n_{\deme{yellow}})=(3,5)$.
    \textbf{(B)} A death-type event causes the extinction of a lineage.
    Since internal nodes without children are recursively removed, the affected branch is dropped.
    The production of this event is $r=(0,0)$ and the deme occupancy is $n=(3,4)$.
    \textbf{(C)} A migration-type event results in the movement of one or more lineages from one deme to another.
    Here, one lineage moves from the yellow to the grey deme.
    The production of this event is $r=(1,0)$, \ie the production is 1 for the grey deme and 0 for the yellow.
    The deme occupancy is $n=(6,2)$.
    \textbf{(D)} In a sample-type event, one or more sample nodes (blue circles) are inserted.
    Here, there are two samples, one in each of the grey and yellow demes.
    Accordingly, $r=(1,1)$ and $n=(2,6)$.
    \textbf{(E)} A neutral-type event has no effect on the genealogy and zero production in all demes: $r=(0,0)$, $n=(5,3)$.
    \textbf{(F)} The theory presented here allows for compound events.
    As an example, here a birth/death-type event occurs, wherein one yellow lineage is extinguished and a grey lineage simultaneously sires a grey child.
    For this event, we have $r=(2,0)$ and $n=(6,2)$.
    \textbf{(G)} Here, a compound sample/death-type event with $r=(0,0)$ and $n=(2,5)$ occurs.
    A grey lineage is sampled and simultaneously extinguished.
    Note that recursive removal does not occur, since sample nodes are never removed.
    \textbf{(H)} A compound birth/migration-type event with $r=(4,0)$ and $n=(6,2)$.
    \label{fig:event-types}
  }
\end{figure}

At each jump in the population process, a change is made to the genealogy, according to the mark, $u$, of the jump (\zcref{fig:event-types}).
In particular:
\begin{compactenum}[(a)]
\item
  If $u$ is of birth-type (\zcref{fig:event-types}A), it results in the creation of one new internal node, call it $b$.
  A tip node, $a$, of the appropriate deme is chosen with uniform probability from among those present and $b$ is inserted so that its ancestor is that of $a$, while $a$ takes $b$ as its ancestor.
  One new tip node, of the appropriate deme, is created for each of the children, all of which take $b$ as their immediate ancestor.
\item
  If $u$ is of death-type (\zcref{fig:event-types}B), one or more tip nodes of the appropriate demes are selected with uniform probability from among those present.
  These are deleted.
  Next, non-sample internal nodes without children are recursively removed.
  Sample nodes are never removed.
\item
  At a migration-type event (\zcref{fig:event-types}C), the appropriate number of migrating lineages are selected at random with uniform probability, from among those present in the appropriate deme.
  For each selected lineage, one new branch node is inserted between the selected tip node and its ancestor.
  The color of the descendant branch changes accordingly.
\item
  At a sample-type event (\zcref{fig:event-types}D), the appropriate number of sampled lineages are selected at random from among the tip nodes, with uniform probability according to deme.
  One new sample node is introduced for each selected lineage:
  each is inserted between a selected tip node and its ancestor.
\item
  At a neutral-type event (\zcref{fig:event-types}E), no change is made to the genealogy.
\item
  Finally, events of compound type (\eg \zcref{fig:event-types}F--H) are accommodated by combining the foregoing rules.
\end{compactenum}
In each of these events, the new node or nodes that are introduced have node-times equal to the time of the jump.

\subsubsection{Emergent lineages and production}
\label{sec:production}

At each event, the lineages which descend from a node inserted at that event are said to \emph{emerge} from the event.
Thus, after a birth-type event, the emerging lineages include all the new offspring as well as the parent.
Likewise, at pure migration- or sample-type events, each migrating or sampled lineage emerges from the event.
At pure death-type events, no lineages emerge.
In general, at an event of mark $u$, there are $r^u_i$ emergent lineages in deme $i$.
We require that $r^u_i$ be a constant, for each $u$ and $i$.
Thus there is a function $r:\Jumps\times\Demes\to\Zp$, such that $r^u_i$ lineages of deme $i$ emerge from each event of mark $u$.
Since, in applications, one is free to expand the set of jump-marks $\Jumps$ as needed, this is not a restriction on the models that the theory can accommodate.
We say $r^u\coloneq{(r^u_i)_{i\in\Demes}}$ is the \emph{production} of an event of mark $u$.
Note that the lineages that die as a result of an event do not count in the production but that a parent lineage that survives the event does count.

\subsubsection{Conditional independence and exchangeability}

Application of these rules at each jump of $\Xr_t$ constructs a chain of genealogies $\Grh_k$.
In particular, at each jump-time $\Trh_k$, the genealogy $\Grh_{k-1}$ is modified according to the jump-mark $\Urh_k$ to yield $\Grh_k$.
We view $\Grh_k$, $k=0,1,2,\dots$, as the embedded chain of the continuous-time genealogy process $\Gr_t$.
It is very important to note that, conditional on $(\Xrh_k,\Urh_k)$, the number of parents and number of offspring in each deme is determined and the random choice of which lineages die, migrate, are sampled, or sire offspring is independent of these choices at any other times and independent of $(\Xrh_j,\Urh_j)$ for all $j\ne{k}$.
Moreover, by assumption, the lineages within each deme are exchangeable:
any lineage within a deme is as likely as any other lineage in that deme to be selected as a parent or for death, sampling, or migration.
Finally, note that $\Gr_t$ does not have the Markov property, though $(\Xr_t,\Ur_t,\Gr_t)$ and $(\Xr_t,\Gr_t)$ do.
If, instead of dropping tip nodes at death events we were to retain them as we do samples, the resulting genealogy---which we can call the \emph{complete genealogy}---would have the Markov property.

\subsection{Pruned and obscured genealogies}

\begin{figure}
\begin{knitrout}\small
\definecolor{shadecolor}{rgb}{0.969, 0.969, 0.969}\color{fgcolor}

{\centering \includegraphics[width=0.5\linewidth]{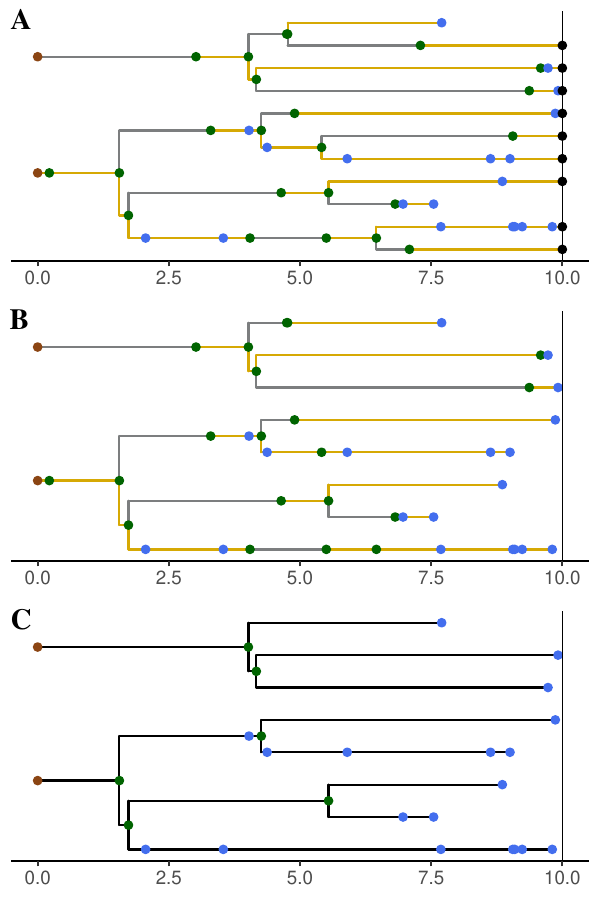} 

}

\end{knitrout}
  \caption{
    \label{fig:upo}
    Unpruned, pruned, and obscured genealogies from a single realization of the genealogy process induced by the SEIRS model depicted in \zcref{fig:example-models,fig:markov-state}.
    \textbf{(A)} A realization of the unpruned genealogy process $\Gr_t$ is shown at $t=10$.
    Tip nodes, corresponding to lineages alive at time $t=10$ are indicated with black points.
    Blue points represent samples;
    green points, internal nodes.
    Branches are colored according to the deme in which the corresponding lineage resided at that point in time:
    grey denotes $\deme{E}$ and yellow, $\deme{I}$.
    \textbf{(B)} The genealogy is \emph{pruned} by deleting all tip nodes and then recursively pruning away childless internal nodes.
    Sample nodes are never removed.
    \textbf{(C)} A pruned genealogy is \emph{obscured} by effacing all deme information from lineage histories:
    the colors are erased, as are all non-sample inline nodes.
    See the text (\zcref{sec:genealogy,sec:pruning,sec:obscuration}) for more detail.
  }
\end{figure}

The process just described yields a genealogy that relates all extant members of the population, and all samples.
Moreover, it details each lineage's complete history of movement through the various demes.
However, the data we ultimately wish to analyze will be based only on samples.
Nor, in general, will the histories of deme occupancy be observable.
A generative model must account for this loss of information.
We therefore now describe how genealogies are \emph{pruned} to yield sample-only genealogies and then \emph{obscured} via the erasure of color from their branches (\zcref{fig:upo}).

\subsubsection{Pruned genealogy}
\label{sec:pruning}

Given a genealogy $G$, one obtains the \emph{pruned genealogy}, $P=\prune(G)$ by first dropping every tip node and then recursively dropping every childless internal node (\zcref{fig:upo}A--B).
In a pruned genealogy only internal and sample nodes remain, and sample nodes are found at all of the leaves and possibly some of the interior nodes of the genealogy.
Observe that a pruned genealogy is a colored genealogy:
it retains information about where among the demes each of its lineages was through time (\zcref{fig:upo}B).
Note also that a pruned genealogy $P$ is a genealogy, characterized by its time, $\time(P)$ and the functions $P^{\lab{Y}}$ and $P^{\lab{Z}}$ just as an unpruned genealogy is.
Crucially, since it contains within itself all of its past history, the pruned genealogy process $\Pr_t=\prune(\Gr_t)$ is Markov.

\subsubsection{Lineage count and saturation}
\label{sec:ells}

In the following, we will find that we need to count the deme-specific numbers of lineages present in a given pruned genealogy at a given time.
Accordingly, suppose $P=(T,Z,Y)$ is a pruned genealogy and suppose $t\in[0,T]$.
Let $\ell_i$ denote the number of lineages in deme $i$ at time $t$ and $\ell\coloneq(\ell_i)_{i\in\Demes}\in\Zp^{\Demes}$.
Clearly, $\ell$ depends only on $\Y_t$.
Therefore, we can define $\ell$ as a function such that, whenever $P=(T,Z,Y)$ is a pruned genealogy, $\ell(\Y_t)$ is the vector of deme-specific lineage counts at time $t$.
We refer to $\ell$ as the \emph{lineage-count} function (cf.~\zcref{fig:ells}).

We will also have occasion to refer to the deme-specific number of lineages emerging from a given event.
In particular, given a node time $t$ in a pruned genealogy $P=(T,Z,Y)$, the number $s_i$ of lineages of deme $i$ emerging from all nodes with time $t$ is well defined and we can write $s\coloneq\left(s_i\right)_{i\in\Demes}$.
Like the lineage-count function, $s$ depends only on $Y_t$.
Thus, we can define the \emph{saturation} function such that, whenever $P=(T,Z,Y)$ is a pruned genealogy, $s(\Y_t)$ is the integer vector of deme-specific numbers of emerging lineages at time $t$.
\zcref[S]{fig:ells} illustrates.

\begin{figure}
  \begin{center}
  \resizebox{0.5\linewidth}{!}{
    \begin{tikzpicture}[scale=2]
      \usetikzlibrary{shapes,arrows,positioning}
      \usetikzlibrary{arrows.meta,positioning,decorations}
      \tikzstyle{label}=[font=\Large]
      \tikzstyle{coordinate}=[inner sep=0pt,outer sep=0pt]
      \tikzstyle{deme1}=[color=grey, line width=3pt]
      \tikzstyle{deme1H}=[color=grey, line width=3pt, densely dotted]
      \tikzstyle{deme2}=[color=maize, line width=3pt]
      \tikzstyle{deme2H}=[color=maize, line width=3pt, densely dotted]
      \tikzstyle{axis}=[color=black, thick, >=stealth]
      \tikzstyle{timepoint}=[color=red!70!black]

      \def\ysp{0.2} 

      \begin{scope}[shift={(0,0)}]
        \coordinate (origin) at (0,0);
        \node[label] (lab) at (origin) {\textbf{A}};
        \coordinate (left) at ($(origin)+(0.1,-0.1)$);
        \coordinate (right) at ($(left)+(1,0)$);
        \coordinate (mid) at ($(left)!0.5!(right)$);
        \coordinate (rlab) at ($(right)+(0.1,0)$);
        \coordinate (bleft) at ($(left)+(0,-8*\ysp)$);
        \coordinate (bright) at (rlab |- bleft);
        \coordinate (temp) at ($(origin)+(0,-3*\ysp)$);
        \coordinate (temp1) at ($(temp)+(0,-\ysp)$);
        \draw[deme2] (mid |- temp) -- (mid |- temp1) -- (right |- temp1);
        \draw[deme1] (left |- temp) -- (right |- temp);
        \node[text=darkgreen,font=\Huge] at (mid |- temp) {$\bullet$};
        \foreach \y in {1,2,7} {
          \coordinate (temp) at ($(origin)+(0,-\y*\ysp)$);
          \draw[deme1] (left |- temp) -- (right |- temp);
        }
        \foreach \y in {5,6,8} {
          \coordinate (temp) at ($(origin)+(0,-\y*\ysp)$);
          \draw[deme2] (left |- temp) -- (right |- temp);
        }
        \draw[timepoint] (mid) -- (mid |- bleft);
        \draw[axis,->] (bleft) -- (bright);
        \node[xshift=1ex] (tlab) at (bright.east) {$t$};
      \end{scope}

      \begin{scope}[shift={(0,-2)}]
        \coordinate (origin) at (0,0);
        \node[label] (lab) at (origin) {\textbf{B}};
        \coordinate (left) at ($(origin)+(0.1,-0.1)$);
        \coordinate (right) at ($(left)+(1,0)$);
        \coordinate (mid) at ($(left)!0.5!(right)$);
        \coordinate (rlab) at ($(right)+(0.1,0)$);
        \coordinate (bleft) at ($(left)+(0,-8*\ysp)$);
        \coordinate (bright) at (rlab |- bleft);
        \coordinate (temp) at ($(origin)+(0,-3*\ysp)$);
        \coordinate (temp1) at ($(temp)+(0,-\ysp)$);
        \draw[deme1] (left |- temp) -- (mid |- temp);
        \draw[deme1H] (mid |- temp) -- (right |- temp);
        \draw[deme2] (mid |- temp) -- (mid |- temp1) -- (right |- temp1);
        \node[text=darkgreen,font=\Huge] at (mid |- temp) {$\bullet$};
        \foreach \y in {7} {
          \coordinate (temp) at ($(origin)+(0,-\y*\ysp)$);
          \draw[deme1H] (left |- temp) -- (right |- temp);
        }
        \foreach \y in {1,2} {
          \coordinate (temp) at ($(origin)+(0,-\y*\ysp)$);
          \draw[deme1] (left |- temp) -- (right |- temp);
        }
        \foreach \y in {5,6} {
          \coordinate (temp) at ($(origin)+(0,-\y*\ysp)$);
          \draw[deme2H] (left |- temp) -- (right |- temp);
        }
        \foreach \y in {8} {
          \coordinate (temp) at ($(origin)+(0,-\y*\ysp)$);
          \draw[deme2] (left |- temp) -- (right |- temp);
        }
        \draw[timepoint] (mid) -- (mid |- bleft);
        \draw[axis,->] (bleft) -- (bright);
        \node[xshift=1ex] (tlab) at (bright.east) {$t$};
      \end{scope}

      \begin{scope}[shift={(2,0)}]
        \coordinate (origin) at (0,0);
        \node[label] (lab) at (origin) {\textbf{C}};
        \coordinate (left) at ($(origin)+(0.1,-0.1)$);
        \coordinate (right) at ($(left)+(1,0)$);
        \coordinate (mid) at ($(left)!0.5!(right)$);
        \coordinate (rlab) at ($(right)+(0.1,0)$);
        \coordinate (bleft) at ($(left)+(0,-8*\ysp)$);
        \coordinate (bright) at (rlab |- bleft);
        \foreach \y in {2,3,8} {
          \coordinate (temp) at ($(origin)+(0,-\y*\ysp)$);
          \draw[deme1] (left |- temp) -- (right |- temp);
        }
        \coordinate (temp) at ($(origin)+(0,-4*\ysp)$);
        \draw[deme1] (left |- temp) -- (mid |- temp);
        \draw[deme2] (mid |- temp) -- (right |- temp);
        \node[text=darkgreen,font=\Huge] at (mid |- temp) {$\bullet$};
        \foreach \y in {1,5} {
          \coordinate (temp) at ($(origin)+(0,-\y*\ysp)$);
          \draw[deme2] (left |- temp) -- (right |- temp);
        }
        \foreach \y in {6,7} {
          \coordinate (temp) at ($(origin)+(0,-\y*\ysp)$);
          \draw[deme1] (left |- temp) -- (right |- temp);
        }
        \draw[timepoint] (mid) -- (mid |- bleft);
        \draw[axis,->] (bleft) -- (bright);
        \node[xshift=1ex] (tlab) at (bright.east) {$t$};
      \end{scope}

      \begin{scope}[shift={(2,-2)}]
        \coordinate (origin) at (0,0);
        \node[label] (lab) at (origin) {\textbf{D}};
        \coordinate (left) at ($(origin)+(0.1,-0.1)$);
        \coordinate (right) at ($(left)+(1,0)$);
        \coordinate (mid) at ($(left)!0.5!(right)$);
        \coordinate (rlab) at ($(right)+(0.1,0)$);
        \coordinate (bleft) at ($(left)+(0,-8*\ysp)$);
        \coordinate (bright) at (rlab |- bleft);
        \coordinate (temp) at ($(origin)+(0,-4*\ysp)$);
        \draw[deme1] (left |- temp) -- (mid |- temp);
        \draw[deme2] (mid |- temp) -- (right |- temp);
        \node[text=darkgreen,font=\Huge] at (mid |- temp) {$\bullet$};
        \foreach \y in {2,3,8} {
          \coordinate (temp) at ($(origin)+(0,-\y*\ysp)$);
          \draw[deme1H] (left |- temp) -- (right |- temp);
        }
        \foreach \y in {6,7} {
          \coordinate (temp) at ($(origin)+(0,-\y*\ysp)$);
          \draw[deme1] (left |- temp) -- (right |- temp);
        }
        \foreach \y in {1} {
          \coordinate (temp) at ($(origin)+(0,-\y*\ysp)$);
          \draw[deme2H] (left |- temp) -- (right |- temp);
        }
        \foreach \y in {5} {
          \coordinate (temp) at ($(origin)+(0,-\y*\ysp)$);
          \draw[deme2] (left |- temp) -- (right |- temp);
        }
        \draw[timepoint] (mid) -- (mid |- bleft);
        \draw[axis,->] (bleft) -- (bright);
        \node[xshift=1ex] (tlab) at (bright.east) {$t$};
      \end{scope}

      \begin{scope}[shift={(4,0)}]
        \coordinate (origin) at (0,0);
        \node[label] (lab) at (origin) {\textbf{E}};
        \coordinate (left) at ($(origin)+(0.1,-0.1)$);
        \coordinate (right) at ($(left)+(1,0)$);
        \coordinate (mid) at ($(left)!0.5!(right)$);
        \coordinate (rlab) at ($(right)+(0.1,0)$);
        \coordinate (bleft) at ($(left)+(0,-8*\ysp)$);
        \coordinate (bright) at (rlab |- bleft);
        \foreach \y in {2,5} {
          \coordinate (temp) at ($(origin)+(0,-\y*\ysp)$);
          \draw[deme1] (left |- temp) -- (right |- temp);
        }
        \foreach \y in {1,4,7} {
          \coordinate (temp) at ($(origin)+(0,-\y*\ysp)$);
          \draw[deme2] (left |- temp) -- (right |- temp);
        }
        \foreach \y in {3,6,8} {
          \coordinate (temp) at ($(origin)+(0,-\y*\ysp)$);
          \draw[deme1] (left |- temp) -- (right |- temp);
        }
        \coordinate (temp) at ($(origin)+(0,-2*\ysp)$);
        \node[text=royalblue,font=\Huge] at (mid |- temp) {$\bullet$};
        \coordinate (temp) at ($(origin)+(0,-6*\ysp)$);
        \node[text=royalblue,font=\Huge] at (mid |- temp) {$\bullet$};
        \draw[timepoint] (mid) -- (mid |- bleft);
        \draw[axis,->] (bleft) -- (bright);
        \node[xshift=1ex] (tlab) at (bright.east) {$t$};
      \end{scope}

      \begin{scope}[shift={(4,-2)}]
        \coordinate (origin) at (0,0);
        \node[label] (lab) at (origin) {\textbf{F}};
        \coordinate (left) at ($(origin)+(0.1,-0.1)$);
        \coordinate (right) at ($(left)+(1,0)$);
        \coordinate (mid) at ($(left)!0.5!(right)$);
        \coordinate (rlab) at ($(right)+(0.1,0)$);
        \coordinate (bleft) at ($(left)+(0,-8*\ysp)$);
        \coordinate (bright) at (rlab |- bleft);
        \coordinate (temp) at ($(origin)+(0,-2*\ysp)$);
        \draw[deme1] (left |- temp) -- (mid |- temp);
        \draw[deme1H] (mid |- temp) -- (right |- temp);
        \foreach \y in {3,8} {
          \coordinate (temp) at ($(origin)+(0,-\y*\ysp)$);
          \draw[deme1H] (left |- temp) -- (right |- temp);
        }
        \foreach \y in {5,6} {
          \coordinate (temp) at ($(origin)+(0,-\y*\ysp)$);
          \draw[deme1] (left |- temp) -- (right |- temp);
        }
        \foreach \y in {1,4} {
          \coordinate (temp) at ($(origin)+(0,-\y*\ysp)$);
          \draw[deme2] (left |- temp) -- (right |- temp);
        }
        \foreach \y in {7} {
          \coordinate (temp) at ($(origin)+(0,-\y*\ysp)$);
          \draw[deme2H] (left |- temp) -- (right |- temp);
        }
        \coordinate (temp) at ($(origin)+(0,-2*\ysp)$);
        \node[text=royalblue,font=\Huge] at (mid |- temp) {$\bullet$};
        \coordinate (temp) at ($(origin)+(0,-6*\ysp)$);
        \node[text=royalblue,font=\Huge] at (mid |- temp) {$\bullet$};
        \draw[timepoint] (mid) -- (mid |- bleft);
        \draw[axis,->] (bleft) -- (bright);
        \node[xshift=1ex] (tlab) at (bright.east) {$t$};
      \end{scope}
    \end{tikzpicture}
  }
\end{center}
  \caption{
    \textbf{Lineage count and saturation.}
    Each panel shows the neighborhood of a single event in the unpruned genealogy (top row) and the corresponding pruned genealogy (bottom row).
    Pruning consists of the removal of all branches that are not ancestral to some sample.
    In the bottom row of panels, pruned branches are indicated using broken lines.
    \textbf{(A)} A birth-type event with production $r=(r_{\deme{grey}},r_{\deme{yellow}})=(1,1)$ occurs.
    \textbf{(B)} Suppose that pruning results in the removal of the dashed lineages.
    Then the lineage count at this event-time is $\ell=(\ell_{\deme{grey}},\ell_{\deme{yellow}})=(2,2)$.
    The saturation is $s=(0,1)$ since only a single, yellow lineage emerges from the event.
    \textbf{(C)} A migration-type event with production $r=(0,1)$ occurs.
    \textbf{(D)} After pruning, $\ell=(2,2)$ and $s=(0,1)$.
    \textbf{(E)} A sample-type event occurs in which two grey lineages are sampled (production $r=(2,0)$).
    \textbf{(F)} After pruning, $\ell=(2,2)$ and $s=(1,0)$.
    Observe that in panels B and D, the local structures of the pruned genealogies are identical, though they arise from events of different type.
    \label{fig:ells}
  }
\end{figure}

\subsubsection{Compatibility}
\label{sec:compatibility}

Suppose $P$ is a pruned genealogy, with $\time(P)=T$ and $t\in{[0,T]}$.
The local structure of $P$ at $t$ is, in general, compatible with only a subset of the possible jumps $\Jumps$.
For example, if there is a branch node or a sample node at $t$, then it is compatible only with birth-type or sample-type jumps, respectively.
Similarly, if a lineage moves from deme $i$ to deme $i'$ at $t$, then $u$ must be either of $i\to{i'}$ migration type or of a birth type with parent in $i$ and $r^u_{i'}>0$.
To succinctly accommodate all possibilities, let us introduce the indicator function $Q$ such that $Q=1$ if the local genealogy structure---which is captured by the values of $P^{\lab{Y}}$ at and just before $t$---is compatible with an event of type $u$ and $Q=0$ otherwise.
That is, $Q_u(y,y')=1$ if and only if
there is a feasible genealogy, $\G=(\T,\Z,\Y)$, and history, $\H$,
and a $t\in{[0,\T]}$ such that,
given $\Gr_\T=\G$ and $\Hr_\T=\H$,
we have $\U_t=u$, $\Yt_t=y$, and $\Y_t=y'$.
We refer to $Q$ as the \emph{compatibility indicator}.

\subsubsection{Obscured genealogy}
\label{sec:obscuration}

The \emph{obscured genealogy} is obtained by discarding all information about demes and events not visible from the topology of the tree alone (\zcref{fig:upo}B--C).
In particular, if $P=(T,Z,Y)$ is a pruned genealogy, we write $\obs(P)=(T,Z)$ to denote the obscured genealogy.

\subsection{Binomial ratio}
\label{sec:binomial-ratio}

The statement of the theorems to come is made easier with the following definition.
For $n,r,\ell,s\in{\Zp^\Demes}$,
we define the \emph{binomial ratio}
\begin{equation}
  \label{eq:binratio}
  \binratio{n}{\ell}{r}{s}\coloneq
  \begin{cases}
    \frac{\displaystyle\prod_{i\in\Demes}{\tbinom{n_i-\ell_i}{r_i-s_i}}}%
         {\displaystyle\prod_{i\in\Demes}{\tbinom{n_i}{r_i}}},
         & \text{if}\ \forall i\ n_i\ge{\Set{\ell_i,r_i}}\ge{s_i}\ge{0},\\
         0, & \text{otherwise}.
  \end{cases}
\end{equation}
where $\tbinom{a}{b}=\tfrac{a!}{(a-b)!\,b!}$ is the binomial coefficient.
Observe that $\tbinratio{n}{\ell}{r}{s}\in{[0,1]}$.
Moreover, in consequence of the Chu-Vandermonde identity, we have
\begin{equation}
  \label{eq:chu-vandermonde}
  \sum_{s\in\Zp^{\Demes}}\binratio{n}{\ell}{r}{s}\binom{\ell}{s}=1,
\end{equation}
whenever $n_i\ge{\Set{\ell_i,r_i}}\ge{0}$ for all $i$.

\section{Results}
\label{sec:results}

In the following, \zcref{thm:pruned-lik} establishes the likelihood of a given pruned genealogy conditional on the history of the population process.
\zcref[S]{thm:pruned-filter} then shows how the history can be integrated out.
\zcref[S]{thm:obsc-lik} gives the likelihood of an obscured genealogy conditional on the population history via an auxiliary coloring process.
The main result, \zcref{thm:obsc-filter} integrates out the history and coloring processes jointly to obtain the likelihood.
\zcref[S]{corol:pruned-filter,corol:obsc-filter} provide the adjoint (backward) formulations of the filter equations derived in \zcref{thm:pruned-filter,thm:obsc-filter}, respectively.
In \zcref{sec:special-cases}, we demonstrate that the closed-form likelihood formulas for the \citet{Kingman1982a} coalescent and linear birth-death-sampling processes fall out as special cases of these results.

\subsection{Likelihood for pruned genealogies}

\begin{thm}
  \label{thm:pruned-lik}
  Suppose $\P=(\T,\Z,\Y)$ is a given pruned genealogy.
  Define
  \begin{equation}\label{eq:phidef}
    \phi^{}_u(x,y',y)\coloneq\binratio{n(x)}{\ell(y)}{r^{u}}{s(y)}\,Q_{u}(y',y),
  \end{equation}
  where $n$ is the deme occupancy (\zcref{sec:demes}), $r^u$ is the production (\zcref{sec:production}), $\ell$ and $s$ are the lineage-count and saturation functions, respectively (\zcref{sec:ells}), $Q$ is the compatibility indicator (\zcref{sec:compatibility}), and the binomial ratio is as defined in \zcref{sec:binomial-ratio}.
  Then
  \begin{equation*}
    \CondProb{\Pr_\T=\P}{\Hr_\T=\H}=\Indicator{\event{\H}\supseteq{\event{\P}}}\,\prod_{t\in\event{\H}}{\phi^{}_{\U_t}\!(\X_t,\Yt_t,\Y_t)}.
  \end{equation*}
\end{thm}

The proof is given in \zcref{sec:proof}.

Next, we show how the likelihood of a pruned genealogy, unconditional on the history, can be computed.
For this, we use the filter equation technology developed in \zcref{sec:filter-eqns}.
In particular, the following theorem follows immediately from \zcref{thm:pruned-lik,lemma:sing-filt}.

\begin{thm}
  \label{thm:pruned-filter}
  Suppose that $\P=(\T,\Z,\Y)$ is a given pruned genealogy.
  Suppose that $w=w(t,x)$ is \cadlag\ in $t$ and satisfies the initial condition $w(0,x)=p_0(x)$ and the filter equation
  \begin{equation}
    \label{eq:pruned-filter-eqn}
    \begin{aligned}
      \frac{\partial{w}}{\partial{t}}(t,x)=
      &\sum_u{\int{w(t,x')\,\alpha_u(t,x',x)\,\phi^{}_u(x,\Yt_t,\Y_t)\,\dd{x'}}}
      -\sum_u{\int{w(t,x)\,\alpha_u(t,x,x')\,\dd{x'}}},
      &t\notin{\event{\P}},\\
      w(t,x)=&\sum_u{\int{\wt(t,x')\,\alpha_u(t,x',x)\,\phi^{}_u(x,\Yt_t,\Y_t)\,\dd{x'}}},
      &t\in{\event{\P}},
    \end{aligned}
  \end{equation}
  where $\phi$ is defined in \zcref{eq:phidef}.
  Then the likelihood of $\P$ is
  \begin{equation*}
    \lik=\int{w(T,x)\,\dd{x}}.
  \end{equation*}
\end{thm}

\begin{remark}
  \label{rem:w-interpretation}
  The quantity $w(t,x)$ appearing in \zcref{eq:pruned-filter-eqn} is not itself obviously a probability, except at $t=T$.
  Rather, as $t$ increases, $w(t,x)$ accumulates the contributions to the likelihood associated with progressively greater portions of the genealogy.
  Specifically, $w(t,x)$ represents an expectation of partial versions of the products appearing in \zcref{thm:pruned-lik} (containing only the factors for times $<t$) over population trajectories having terminal value $x$ at time $t$.
\end{remark}

One can alternatively express the likelihood of a pruned genealogy using the adjoint form of the filter equation (see \zcref{sec:adjoint-filter}).

\begin{corol}
  \label{corol:pruned-filter}
  If $\P=(\T,\Z,\Y)$ is a given pruned genealogy and $F=F(s,x)$ is \caglad\ in $s$ and satisfies the final condition $\rightlim{F}(\T,x)=1$ and the adjoint filter equation
  \begin{equation}
    \label{eq:pruned-adjoint-eqn}
    \begin{aligned}
      -\frac{\partial{F}}{\partial{s}}(s,x)=
      &\sum_u{\int{\leftlim{\alpha}_u(s,x,x')\,\left[\phi^{}_u(x',\Yt_s,\Y_s)\,F(s,x')-F(s,x)\right]\,\dd{x'}}},
      &s\notin{\event{\P}},\\
      F(s,x)=&\sum_u{\int{\alpha_u(s,x,x')\,\phi^{}_u(x',\Yt_s,\Y_s)\,\rightlim{F}(s,x')\dd{x'}}},
      &s\in{\event{\P}},
    \end{aligned}
  \end{equation}
  where $0\le{s}\le{T}$ and $\phi$ is defined in \zcref{eq:phidef},
  then the likelihood of $\P$ is
  \begin{equation*}
    \lik=\int{F(0,x)\,p_0(x)\,\dd{x}}.
  \end{equation*}
\end{corol}

\begin{remark}
  \label{rem:ordering}
  Note that the likelihood given by the foregoing expressions is the likelihood of a given genealogy \emph{in which the samples are ordered} and that, while this order should be compatible with time-ordering, it may not be completely determined by it.
  In particular, if $n$ samples are taken at one time, there are $n!$ possible orderings of the samples, each of which has an identical likelihood.
  In consequence, the likelihood given above will differ by a factor of $n!$ from the likelihood computed ignoring ordering of samples.
  From the standpoint of parameter estimation, this is immaterial, since factors that depend only on the data drop out of the score function.
  This consideration is relevant, however, in the context of model comparison.
\end{remark}

\subsection{Likelihood for obscured genealogies}

Our next result concerns the likelihood of a given obscured genealogy conditional on the history.

\begin{thm}
  \label{thm:obsc-lik}
  Suppose that $(\T,\Z)$ is a given obscured genealogy.
  Let $q$ and $\pi$ be probability kernels, such that
  for all $x\in\Xspace$ and $y\in\fY_0(\Z)$,
  \begin{equation*}
    \begin{gathered}
      q(x,y)\ge{0},\qquad
      \sum_{y\in\fY_0(\Z)}{q(x,y)}=1,
    \end{gathered}
  \end{equation*}
  and, for all $u\in\Jumps$, $t\in\Rp$, $x,x'\in\Xspace$, $y,y'\in\fY_t(\Z)$,
  \begin{equation*}
    \begin{gathered}
      \pi_u(t,x,x',y,y')\ge{0},\qquad
      \sum_{y'\in\fY_t(\Z)}{\pi_u(t,x,x',y,y')}=1.
    \end{gathered}
  \end{equation*}
  Suppose moreover that $\pi_u(t,x,x',y,y')>0$ whenever $\alpha_u(t,x,x')\,Q_u(y,y')>0$
  and that $q(x,y)>0$ whenever $\CondProb{\Pr_0^{\lab{Y}}=y}{\Xr_0=x}>0$.
  Then there is a stochastic jump process $\yr_t$ with sample paths in $\fY(\Z)$ such that $(\Xr_t,\Ur_t,\yr_t)$ is Markov and
  \begin{equation*}
    \CondProb{\Pr^{\lab{Z}}_\T=\Z}{\Hr_\T=\H}=
    \Indicator{\event{\H}\supseteq\event{\Z}}\,\Expect{\frac{1}{q(X_0,\yr_0)}\,\prod_{t\in\event{\H}}{\frac{\phi^{}_{\U_t}(\X_t,\yrt_t,\yr_t)}{\pi_{\U_t}(t,\Xt_t,\X_t,\yrt_t,\yr_t)}}},
  \end{equation*}
  where $\phi$ is defined in \zcref{eq:phidef} and
  the expectation is taken over the sample paths of $\yr_t$.
\end{thm}
\begin{proof}
  First, observe that, since $\obs$ is a deterministic operator,
  \begin{equation}\label{eq:IS1}
    \CondProb{\Pr^{\lab{Z}}_\T=\Z}{\Hr_\T=\H}=\CondExpect{\Indicator{\Pr^{\lab{Z}}_\T=\Z}}{\Hr_\T=\H}.
  \end{equation}
  Our strategy will be to evaluate \zcref{eq:IS1} using a change of measure:
  we will propose pruned genealogies compatible with $\Z$ as sample paths from a stochastic process driven by $\Xr_t$ and
  evaluate the the expectation in \zcref{eq:IS1} by a suitably weighted expectation over these paths.
  Conditional on $\Hr_\T=\H$, the initial distribution $q$ and probability kernel $\pi$ generate a Markov chain, $\yrh_k$ such that
  \begin{equation*}
    \begin{gathered}
      \CondProb{\yrh_0}{\Hr_\T=\H}=q(\X_0,\yrh_0),
      \qquad
      \CondProb{\yrh_k}{\yrh_{k-1},\Hr_\T=\H}=\pi_{\Uh_k}(\Th_k,\Xh_{k-1},\Xh_{k},\yrh_{k-1},\yrh_k).
    \end{gathered}
  \end{equation*}
  The required process $\yr_t$ is the unique \caglad\ process with event times $\Th_k$ and $\yrh_k$ as its embedded chain.
  This construction of $\yr_t$ obviously guarantees that $\event{\H}\supseteq\event{\yr}\supseteq\event{\Z}$ and that $(\Xr_t,\Ur_t,\yr_t)$ is Markov.

  Now, for $\y\in\fY(\Z)$, let us define $C(\y)=(\T,\Z,\y)$.
  Then, by construction, $\obs(C(\y))=(\T,\Z)$ and,
  conversely, for every pruned genealogy $\P$ satisfying $\time(\P)=\T$ and $\P^{\lab{Z}}=\Z$, $C(\P^{\lab{Y}})=\P$.
  Moreover, the conditions on the kernels $q$ and $\pi$ guarantee that, if $\CondProb{\Pr_\T=\P}{\Hr_\T=\H}>0$ and $\P^{\lab{Z}}=\Z$, then $\CondProb{\yr=\P^{\lab{Y}}}{\Hr_\T=\H}>0$.
  We therefore have that
  \begin{equation*}
    \CondProb{\Pr^{\lab{Z}}_\T=\Z}{\Hr_\T=\H}=
    \Expect{\frac{\CondProb{\Pr_\T=C(\yr)}{\Hr_\T=\H}}{\pi(\yr\vert\H)}},
  \end{equation*}
  the expectation being taken with respect to the random process $\yr$.
  Here, by definition,
  \begin{equation*}
    \pi(\yr\vert\H)=q(\X_0,\yr_0)\,\prod_{t\in\event{\H}}{\pi_{\U_{t}}(t,\Xt_{t},\X_{t},\yrt_{t},\yr_{t})}.
  \end{equation*}
  The result then follows from \zcref{thm:pruned-lik}.
\end{proof}


The next result is the central result of the paper.
It shows how to compute the likelihood of an obscured genealogy.
It is an immediate consequence of \zcref{thm:obsc-lik,lemma:sing-filt}.

\begin{thm}
  \label{thm:obsc-filter}
  Let $(\T,\Z)$ be a given obscured genealogy.
  Then there are probability kernels $q$ and $\pi$ as in \zcref{thm:obsc-lik} such that if
  \begin{equation*}
    \begin{gathered}
      \beta_u(t,x,x',y,y')=\alpha_u(t,x,x')\,\pi_u(t,x,x',y,y'),\qquad
      \varPsi_u(t,x,x',y,y')=\frac{\phi^{}_u(x',y,y')}{\pi_u(t,x,x',y,y')},
    \end{gathered}
  \end{equation*}
  and if $w=w(t,x,y)$ satisfies
  the initial condition $w(0,x,y)=p_0(x)\,\Indicator{q(x,y)>0}$
  and the filter equation
  \begin{mathsize}{9pt}{10pt}
    \begin{align}
      \label{eq:obsc-filter-reg}
      &\frac{\partial{w}}{\partial{t}}=
      \sum_{uy'}{\int{w(t,x',y')\,\beta_u(t,x',x,y',y)\,\varPsi_u(t,x',x,y',y)\,\dd{x'}}}
      -\sum_{uy'}{\int{w(t,x,y)\,\beta_u(t,x,x',y,y')\,\dd{x'}}},
      &t\notin{\event{\Z}},\\
      \label{eq:obsc-filter-sing}
      &w(t,x,y)=\sum_{uy'}{\int{\wt(t,x',y')\,\beta_u(t,x',x,y',y)\,\varPsi_u(t,x',x,y',y)\,\dd{x'}}},
      &t\in{\event{\Z}},
    \end{align}
  \end{mathsize}%
  then the likelihood of $(\T,\Z)$ is
  \begin{equation*}
    \lik=\sum_{y\in\fY_{\T}(\Z)}{\int{w(\T,x,y)\,\dd{x}}}.
  \end{equation*}
\end{thm}

\begin{remark}
  \label{rem:part-obsc}
  Although, as defined, an obscured genealogy does not include the coloring of lineages, it is straightforward to accommodate sample-associated metadata.
  In particular, if one models the metadata dependence on the population state and lineage color, the likelihood of the metadata enters $\phi$ multiplicatively.
\end{remark}

\zcref[S]{lemma:monte-carlo} shows how the likelihood can be computed via sequential Monte Carlo.
As before, there is also an adjoint form of \zcref{thm:obsc-filter}, to wit:

\begin{corol}
  \label{corol:obsc-filter}
  Let $(\T,\Z)$ be a given obscured genealogy and let $\beta$ and $\varPsi$ be defined as in \zcref{thm:obsc-filter}.
  Suppose $F$ is \caglad\ and satisfies the final condition
  $\rightlim{F}(\T,x,y)=1$ for all $x\in\Xspace$, $y\in\fY_{\T}(\Z)$ and also the adjoint filter equation
  \begin{align}
    \label{eq:obsc-adj-reg}
    -&\frac{\partial{F}}{\partial{s}}=
    \sum_{uy'}{\int{\leftlim{\beta}_u(s,x,x',y,y')\,\left[\leftlim{\varPsi}_u(s,x,x',y,y')\,F(s,x',y')-F(s,x,y)\right]\,\dd{x'}}},
    &s\notin{\event{\Z}},\\
    \label{eq:obsc-adj-sing}
    &F(s,x,y)=\sum_{uy'}{\int{\beta_u(s,x,x',y,y')\,\varPsi_u(s,x,x',y,y')\,\rightlim{F}(s,x',y')\,\dd{x'}}},
    &s\in{\event{\Z}},
  \end{align}
  for $0\le{s}\le{\T}$.
  Then the likelihood of $(\T,\Z)$ is
  \begin{equation*}
    \lik=\sum_{y\in\fY_0(\Z)}{\int{F(0,x,y)\,p_0(x)\,\dd{x}}}.
  \end{equation*}
\end{corol}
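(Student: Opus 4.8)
The plan is to establish \cref{corol:obsc_filter} as the adjoint counterpart of \cref{thm:obsc_filter}, precisely as \cref{corol:pruned_filter} stands to \cref{thm:pruned_filter}. Regarding the pair $(x,y)$ as an extended state carrying transition rate $\beta_u$ and multiplicative weight $\varPsi_u$, the filter equation of \cref{thm:obsc_filter} and the adjoint equation stated here form a forward/backward dual pair in the sense of \cref{sec:adjoint_filter}, so the statement should follow from the general duality between a filter equation and its adjoint developed there. I would nonetheless verify that duality directly, since the only delicacy lies in the bookkeeping of one-sided limits and of the boundary data.

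Concretely, let $w$ solve the forward filter equation of \cref{thm:obsc_filter} and let $F$ solve the adjoint equation stated here, and set
\[
  \mathcal{I}(t)\coloneq\sum_{y\in\func{Y}_t(\Z)}{\int{w(t,x,y)\,F(t,x,y)\,\dd{x}}}.
\]
First I would show that $\mathcal{I}$ is constant on each component of $[0,\T]\setminus\event{\Z}$. Differentiating in $t$ and substituting $\partial{w}/\partial{t}$ and $\partial{F}/\partial{t}$ from the forward and adjoint equations respectively, the forward loss term paired against $F$ cancels the term proportional to $F(t,x,y)$ of the adjoint paired against $w$, while the two gain terms are carried into one another by relabelling the dummy variables $(x,y)\leftrightarrow(x',y')$ and summing over $u$; away from the measure-zero discontinuity set of $\alpha$ one has $\leftlim{\beta}_u=\beta_u$ and $\leftlim{\varPsi}_u=\varPsi_u$, so the derivative vanishes.

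Next I would check that $\mathcal{I}$ is continuous across each $t\in\event{\Z}$. Because $w$ is \cadlag\ while $F$ is \caglad, the identity to prove is that the post-event pairing $\sum_{y}\int w(t,x,y)\,\rightlim{F}(t,x,y)\,\dd{x}$ equals the pre-event pairing $\sum_{y}\int\leftlim{w}(t,x,y)\,F(t,x,y)\,\dd{x}$. Substituting the singular part of the forward equation for $w(t,x,y)$ and again relabelling $(x,y)\leftrightarrow(x',y')$, the factor multiplying $\leftlim{w}(t,x,y)$ becomes exactly the right-hand side of the singular part of the adjoint equation, which equals $F(t,x,y)$; the two pairings therefore coincide, and $\mathcal{I}$ takes a single constant value on $[0,\T]$.

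It remains to evaluate this constant at the endpoints. At $t=\T$ the final condition $\rightlim{F}(\T,x,y)=1$ gives $\mathcal{I}=\sum_{y}\int w(\T,x,y)\,\dd{x}=\lik$ by \cref{thm:obsc_filter}. At $t=0$, which lies outside $\event{\Z}$ since the root time is excluded, the initial condition $w(0,x,y)=p_0(x)\,\Indicator{q(x,y)>0}$ gives $\mathcal{I}=\sum_{y\in\func{Y}_0(\Z)}\int p_0(x)\,\Indicator{q(x,y)>0}\,F(0,x,y)\,\dd{x}$. I expect the main obstacle to be reconciling this with the claimed formula $\sum_{y\in\func{Y}_0(\Z)}\int F(0,x,y)\,p_0(x)\,\dd{x}$: one must show the indicator may be dropped, arguing from the support hypothesis on $q$ in \cref{thm:obsc_lik} (that $q(x,y)>0$ wherever $\CondProb{\Pr_0^{\lab{Y}}=y}{\Xr_0=x}>0$) that the integrand $p_0(x)\,F(0,x,y)$ contributes nothing where $q(x,y)=0$. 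Equating the two endpoint values then yields the stated expression for $\lik$.
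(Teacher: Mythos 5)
Your argument is correct in substance, and it takes a route that is genuinely different from the paper's. The paper disposes of this corollary by appeal to the general adjoint theory of \cref{sec:adjoint_filter}: there the fundamental solution $\omega(s,x,t,\cdot)$ of a filter equation is introduced, the backward equation for $F(s,x)=\omega(s,x,t,f)$ is obtained from the Chapman--Kolmogorov relation \cref{eq:chapkolm}, and \cref{corol:obsc_filter} is then the instance of that theory on the extended state space $\Xspace\times\func{Y}(\Z)$ with $f\equiv 1$, integrated against the initial density. You instead fix a forward solution $w$ of \cref{thm:obsc_filter} and a backward solution $F$ and show that the pairing $\mathcal{I}(t)=\sum_{y}\int{w\,F\,\dd{x}}$ is invariant: constant between genealogical events (total-rate terms cancel, gain terms map into each other under relabelling of dummy variables), and continuous across each $t\in\event{\Z}$ because the singular forward update paired with $\rightlim{F}$ reproduces, after relabelling, the singular adjoint update. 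I verified these cancellations; they are right, including the careful \cadlag/\caglad\ bookkeeping at event times. This ``Lagrange identity'' argument is self-contained and forces explicit attention to the endpoint data, which the paper's propagator-based treatment leaves implicit; the paper's route, in exchange, proves one general statement that covers \cref{corol:pruned_filter,corol:obsc_filter} simultaneously.

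The step you leave open---dropping $\Indicator{q(x,y)>0}$ at $t=0$---is the only real loose end, and it has a one-line closure that you missed. \Cref{thm:obsc_filter} asserts only the \emph{existence} of admissible kernels, and the paper notes explicitly (after \cref{thm:obsc_lik}) that, $\func{Y}_0(\Z)$ being finite, $q$ may be taken uniform, hence strictly positive; with that choice the indicator is identically $1$ and your endpoint evaluation already yields the stated formula. Since neither the adjoint equation nor the asserted likelihood expression involves $q$, proving the corollary for this single admissible choice suffices. The alternative route you gesture at is not ``immediate from the support hypothesis'': it requires showing $F(0,x,y)=0$ wherever $q(x,y)=0$, which by admissibility means showing $F(0,x,y)=0$ whenever the coloring $y$ is impossible given $\Xr_0=x$ (\ie $\ell_i(y)>n_i(x)$ for some deme $i$). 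That is true, but only because the binomial ratio inside $\phi$ enforces $n_i\ge\ell_i$ immediately \emph{after} every event, and a case analysis over event types shows that pre-event infeasibility annihilates the first-event weight of every forward path---bookkeeping of the same kind as in the proof of \cref{thm:pruned_lik}, not a formal consequence of the hypothesis on $q$. Either fix completes your proof.
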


\subsection{Special cases}
\label{sec:special-cases}

\subsubsection{Moran model and the Kingman coalescent}
\label{sec:kingman}

In the Moran model \citep{Moran1958,King2020}, events occur according to a rate-$\mu$ Poisson process.
At each event, a compound birth-death jump occurs so that the population size, $n$, remains constant (cf.~\zcref{fig:event-types}F).
If we let $\Xr_t$ be the number of events that have occurred by time $t$, then $\Xr_t$ is a simple counting process, which we can use to define the state of the population process.
Its KFE is then
\begin{mathsize}{9pt}{10pt}
  \begin{equation*}
    \begin{gathered}
      \frac{\partial{v}}{\partial{t}}=
      \mu\,v(t,x-1)
      -\mu\,v(t,x),
      \qquad
      v(0,x) =
      \begin{cases}
        1, & x=0,\\
        0, & x>0.
      \end{cases}
    \end{gathered}
  \end{equation*}
\end{mathsize}%

In the classical case \citep{Kingman1982a}, $m$ samples are taken simultaneously at a single time, $T$.
Since nothing depends on the state, in writing the corresponding filter equation (\zcref{eq:obsc-filter-reg}), we can sum over both $x$ and $y$.
Then, if $B$ is the set of branch-times and $\ell(t)$ is the number of lineages in the genealogy at time $t$, upon this summation, \zcref{eq:obsc-filter-reg,eq:obsc-filter-sing} become
\begin{mathsize}{9pt}{10pt}
  \begin{equation}
    \zcsetup{reftype=pluralequation}
    \label{eq:moran-filter}
    \begin{gathered}
      w(0) = 1,
      \qquad
      \frac{\partial{w}}{\partial{t}}=
      \mu\,w(t)\,\left(1-\frac{\tbinom{\ell(t)}{2}}{\tbinom{n}{2}}\right)
      -\mu\,w(t),\quad t\notin{B},
      \qquad
      w(t) = \frac{\mu}{\tbinom{n}{2}}\,\wt(t),\quad t\in{B}.
    \end{gathered}
  \end{equation}
\end{mathsize}%
Integrating \zcref{eq:moran-filter} and taking logarithms yields
\begin{equation}
  \label{eq:kingman}
  \log{\lik}=|B|\,\log{\frac{\mu}{\tbinom{n}{2}}}-\frac{\mu}{\tbinom{n}{2}}\,\sum_{i=1}^{m}\!{\tbinom{i}{2}\,s^{}_i},
\end{equation}
where the $s^{}_{i}\coloneq\int{\Indicator{\ell(t)=i}\,\dd{t}}$ are the durations of the \emph{coalescent intervals}, \ie intervals between successive branch-points.
We recognize \zcref{eq:kingman} as the expression for the \citet{Kingman1982a} coalescent likelihood \citep[\eg][]{Wakeley2009}.

More generally, if in addition samples are taken according to a rate-$\psi$ Poisson process such that the set of sample-times in the genealogy is $S=S_0\cup{S_1}$, where $S_0$, $S_1$ are the sets of times of terminal and inline samples, respectively, then \zcref{eq:obsc-filter-reg,eq:obsc-filter-sing} reduce to
\begin{mathsize}{9pt}{10pt}
  \begin{equation}
    \zcsetup{reftype=pluralequation}
    \label{eq:moran-filter2}
    \begin{gathered}
      w(0) = 1,
      \qquad
      \frac{\partial{w}}{\partial{t}}=
      -\mu\,\frac{\tbinom{\ell(t)}{2}}{\tbinom{n}{2}}\,w(t),
      \quad t\notin{S\cup{B}},
      \qquad
      \frac{w(t)}{\wt(t)} =
      \begin{cases}
        \frac{\mu}{\tbinom{n}{2}}, & t\in{B},\\[8pt]
        \psi\,\left(1-\frac{\ell(t)}{n}\right), & t\in{S_0},\\
        \frac{\psi}{n}, & t\in{S_1}.
      \end{cases}
    \end{gathered}
  \end{equation}
\end{mathsize}%
Integrating \zcref{eq:moran-filter2} yields
\begin{equation}
  \label{eq:mgp}
  \log{\lik}-|S|\,\log{\psi}
  =\sum_{t\in{S_0}}{\log{\left(1-\frac{\ell(t)}{n}\right)}}
  -|S_1|\,\log{n}
  +|B|\,\log{\frac{\mu}{\tbinom{n}{2}}}
  -\frac{\mu}{\tbinom{n}{2}}\,\sum_{i=1}^{\infty}\!{\tbinom{i}{2}\,s^{}_i},
\end{equation}
in agreement with the result of the very different derivation of \citet{King2020}.

\subsubsection{Linear birth-death model} 
\label{sec:lbdp}

In this model, the state variable, $\Xr_t$, is the size of a population at time $t$.
All individuals face the same per-capita birth and death rates, which are $\lambda$ and $\mu$, respectively.
Accordingly, the Kolmogorov backward equation for this model is
\begin{mathsize}{9pt}{10pt}
  \begin{equation*}
    -\frac{\partial{F}}{\partial{s}}=
    \lambda\,x\,\left[F(s,x+1)-F(s,x)\right]
    +\mu\,x\,\left[F(s,x-1)-F(s,x)\right].
  \end{equation*}
\end{mathsize}%
\citet{Stadler2010} considered the case where samples are taken through time at a uniform per-capita rate $\psi$ and lineages extant at time $T$ are sampled with probability $\rho$.
Let $B$ denote the set of branch-times in an observed genealogy, let $S_0$ and $S_1$ be, respectively, the sets of terminal and inline sample times prior to time $T$, and suppose $n$ lineages are sampled at time $T$.
\zcref[S]{eq:obsc-adj-reg} for this process reads
\begin{mathsize}{9pt}{10pt}
  \begin{equation*}
    \label{eq:filter-unsummed-lbdp-reg}
    \begin{aligned}
      -\frac{\partial{F}}{\partial{s}}(s,x,y)=
      &\lambda\,x\,\frac{\tbinom{x+1-\ell(s)}{2}}{\tbinom{x+1}{2}}\,F(s,x+1,y)
      +\sum_{y'\in{\fY_s(\Z)\setminus{y}}}{\lambda\,x\,\frac{x+1-\ell(s)}{x+1}\,F(s,x+1,y')}\\
      &+\mu\,x\,F(s,x-1,y)
      -\left(\lambda+\mu+\psi\right)\,x\,F(s,x,y),
    \end{aligned}
  \end{equation*}
\end{mathsize}%
As before, we can take $F$ to be independent of $y$.
We then sum over $y$ and apply the Chu-Vandermonde identity (\zcref{eq:chu-vandermonde}) to obtain
\begin{mathsize}{9pt}{10pt}
  \begin{equation}
    \label{eq:filter-lbdp-reg}
    -\frac{\partial{F}}{\partial{s}}(s,x)=
    \lambda\,x\,\left(1-\frac{\tbinom{\ell(s)}{2}}{\tbinom{x+1}{2}}\right)\,F(s,x+1)
    +\mu\,x\,F(s,x-1)\\
    -\left(\lambda+\mu+\psi\right)\,x\,F(s,x),
  \end{equation}
\end{mathsize}%
which holds for $x\ge\ell(s)$ and $s\notin{B\cup{S_0}\cup{S_1}\cup\Set{T}}$.
The singular part (\zcref{eq:obsc-adj-sing}) reads
\begin{mathsize}{9pt}{10pt}
  \begin{equation}
    \label{eq:filter-lbdp-sing}
    F(s,x)=\begin{cases}
    \frac{2\lambda}{x+1}\,\rightlim{F}(s,x+1),
    &s\in{B},
    \\[2ex]
    \psi\left(x-\ell(s)\right)\,\rightlim{F}(s,x),
    &s\in{S_0},
    \\[2ex]
    \psi\,\rightlim{F}(s,x),
    &s\in{S_1},
    \\[2ex]
    \tbinom{x}{n}\,(1-\rho)^{x-n}\,\rho^n,
    &s=T.
    \end{cases}
  \end{equation}
\end{mathsize}%
We note that $F(s,x)=0$ when $x<\ell(s)$.

The linearity assumption implies self-similarity of the genealogies.
In particular, at each time $s$, all of the subtrees descending from the lineages present at $s$ are i.i.d.
This suggests the ansatz
\begin{equation}
  F(s,x)=C(s)\,G(s)^{x-\leftlim{\ell}(s)}\,H(s)^{\leftlim{\ell}(s)}\,K(x,\leftlim{\ell}(s)),
\end{equation}
where $G$, $H$ are smooth functions and $C$ is \caglad\ and piecewise constant, with discontinuities only at genealogical events, so that
\begin{equation*}
  \rightlim{F}(s,x)=\rightlim{C}(s)\,G(s)^{x-{\ell}(s)}\,H(s)^{{\ell}(s)}\,K(x,{\ell}(s)).
\end{equation*}
Putting these into the first three cases of \zcref{eq:filter-lbdp-sing} yields conditions on $C$ and $K$ that are satisfied when
\begin{equation*}
  \begin{gathered}
    K(x,\ell)=\frac{x!}{(x-\ell)!}
    \qquad\text{and}\qquad
    \frac{C(s)}{\rightlim{C}(s)}=\begin{cases}
         {2\lambda}\,H(s),
         &s\in{B},
         \\[2ex]
         \psi\,G(s)\,H(s)^{-1},
         &s\in{S_0},
         \\[2ex]
         \psi,
         &s\in{S_1}.
    \end{cases}
  \end{gathered}
\end{equation*}
To satisfy the last case of \zcref{eq:filter-lbdp-sing}, we note that $\leftlim{\ell}(T)=n$ and therefore impose the conditions
\begin{equation}
  \zcsetup{reftype=pluralequation}
  \label{eq:FGfc}
  G(T)=1-\rho, \qquad H(T)=1, \qquad C(T)=\frac{\rho^n}{n!}.
\end{equation}
Finally, upon substituting the ansatz into \zcref{eq:filter-lbdp-reg}, we obtain for $s\notin{B\cup{S_0}\cup{S_1}\cup{\Set{T}}}$,
\begin{equation*}
  \begin{split}
    -\frac{1}{F}\frac{\partial{F}}{\partial{s}}
    &=
    -(x-\ell)\,\frac{G'}{G}
    -\ell\,\frac{H'}{H}\\
    &=\lambda\,x\,\left(1-\frac{\tbinom{\ell}{2}}{\tbinom{x+1}{2}}\right)\,\frac{x+1}{x+1-\ell}\,G
    +\mu\,x\,\frac{x-\ell}{x}\,G^{-1}
    -\left(\lambda+\mu+\psi\right)\,x\\
    &=\lambda\,(x+\ell)\,G
    +\mu\,(x-\ell)\,G^{-1}
    -\left(\lambda+\mu+\psi\right)\,x.\\
  \end{split}
\end{equation*}
If this is to hold for all $x$ and $\ell$, we must have
\begin{equation}
  \zcsetup{reftype=pluralequation}
  \label{eq:FGeq}
  \begin{gathered}
    -\frac{G'}{G}
    =\lambda\,G+\mu\,G^{-1}
    -\left(\lambda+\mu+\psi\right),
    \qquad
    -\frac{H'}{H}
    =2\,\lambda\,G-(\lambda+\mu+\psi).
  \end{gathered}
\end{equation}
In passing, we note that \zcref{eq:FGeq} appear prominently in the very different derivation of \citet{Stadler2010}.
\zcref[S]{eq:FGfc,eq:FGeq} form a final-value problem that can be solved by elementary means to obtain closed-form expressions for $G(s)$ and $H(s)$.
\zcref[S]{thm:obsc-filter} then states that $\lik=\int{F(0,x)\,p_0(x)\,\dd{x}}$.
In particular, conditional on $\Xr_0=x$, we have
\begin{equation}
  \lik=\frac{x!}{(x-\ell(0))!}\,\frac{\rho^n}{n!}\,G(0)^{x-{\ell}(0)}\,H(0)^{{\ell}(0)}\,\psi^{|S_1|}\,\prod_{\mathclap{t\in{B}}}{\left(2\,\lambda\,H(t)\right)}\,\prod_{\mathclap{t\in{S_0}}}{\left(\psi\,{G(t)}{H(t)^{-1}}\right)}.
\end{equation}
\citet[Thm.~3.5]{Stadler2010} derives an expression for this quantity in the special case $x=\ell(0)=1$, which is equivalent up to the factor of $n!$ explained in \zcref{rem:ordering}.

The foregoing represents an independent verification of the correctness of the theory in the case of the simple linear birth-death process with constant sampling through time and one bulk-sampling event.
It should be clear, however, that even within the context of the unstructured, linear, birth-death process, the theory presented here readily accommodates complexities such as time-varying sampling rates, multiple bulk-sampling events at specified or random times, and time-varying birth and death rates.

\section{Worked examples}
\label{sec:worked-examples}

Let us now illustrate the theory by working out the filter equations in the case of two familiar nonlinear models.
In particular, we will show how the general filter equation introduced in \zcref{thm:obsc-filter} takes on a specific form that depends only on the model structure.
In each case, we will also provide results of some elementary numerical computations based on these filters, using methods described in \zcref{sec:filter-eqns}.

In this section, the mathematical expressions are complex and, ultimately, the algorithms involved in actually computing the likelihood can be expressed more economically in computer code.
However, the interested reader may find it instructive to see how the general expressions of \zcref{thm:obsc-filter} are translated into concrete equations relative to a particular model.
In particular, the following examples are a reliable guide to this process, since the sequence of evolutions performed in each of the two examples below is entirely mechanical and the fairly complex notation needed is a faithful reflection of the algorithmic structures involved in carrying out the numerical computations.
Before turning to the examples, therefore, we first develop the necessary notation.

\subsection{Chop, swap, and fork operators}
\label{sec:swapnchop}

\begin{figure}
  \begin{center}
  \resizebox{0.8\linewidth}{!}{
    \begin{tikzpicture}[scale=1.8]
      \usetikzlibrary{shapes,arrows,positioning}
      \usetikzlibrary{arrows.meta,positioning,decorations}
      \tikzstyle{label}=[font=\Large]
      \tikzstyle{coordinate}=[inner sep=0pt,outer sep=0pt]
      \tikzstyle{deme1}=[color=grey, line width=3pt]
      \tikzstyle{deme1H}=[color=grey, line width=3pt, densely dotted]
      \tikzstyle{deme2}=[color=maize, line width=3pt]
      \tikzstyle{deme2H}=[color=maize, line width=3pt, densely dotted]
      \tikzstyle{axis}=[color=black, thick, >=stealth]
      \tikzstyle{timepoint}=[color=red!70!black]

      \def\ysp{0.2} 

      \begin{scope}[shift={(0,0)}]
        \coordinate (origin) at (0,0);
        \node[label] (lab) at (origin) {\textbf{A}};
        \coordinate (left) at ($(origin)+(0.1,-0.1)$);
        \coordinate (right) at ($(left)+(1,0)$);
        \coordinate (before) at ($(left)!0.3!(right)$);
        \coordinate (mid) at ($(left)!0.5!(right)$);
        \coordinate (at) at ($(left)!0.6!(right)$);
        \coordinate (uright) at ($(right)+(0.1,0)$);
        \coordinate (bleft) at ($(left)+(0,-7*\ysp)$);
        \coordinate (bright) at (uright |- bleft);
        \coordinate (text) at ($(bleft)+(0,-\ysp)$);
        \coordinate (temp) at ($(origin)+(0,-3*\ysp)$);
        \draw[timepoint] (at) -- (at |- bleft);
        \draw[timepoint] (before) -- (before |- bleft);
        \draw[deme2] (left |- temp) -- (at |- temp);
        \node[font=\footnotesize,anchor=east] at (left |- temp) {$\Set{a}$};
        \node[text=royalblue,font=\Huge] at (at |- temp) {$\bullet$};
        \node[text=white,font=\footnotesize] at (at |- temp) {$a$};
        \foreach \y in {1,2,6} {
          \coordinate (temp) at ($(origin)+(0,-\y*\ysp)$);
          \draw[deme1] (left |- temp) -- (right |- temp);
        }
        \foreach \y in {4,5,7} {
          \coordinate (temp) at ($(origin)+(0,-\y*\ysp)$);
          \draw[deme2] (left |- temp) -- (right |- temp);
        }
        \draw[axis,->] (bleft) -- (bright);
        \node[xshift=1ex] (tlab) at (bright.east) {$t$};
        \node (y) at (before |- lab) {$y^{\phantom{\prime}}$};
        \node (yprime) at (at |- lab) {$y^{\prime}$};
        \node at (mid |- text) {$y'=\chop{\deme{I}}{a\emptyset}{y}$};
      \end{scope}

      \begin{scope}[shift={(1.8,0)}]
        \coordinate (origin) at (0,0);
        \node[label] (lab) at (origin) {\textbf{B}};
        \coordinate (left) at ($(origin)+(0.1,-0.1)$);
        \coordinate (right) at ($(left)+(1,0)$);
        \coordinate (before) at ($(left)!0.3!(right)$);
        \coordinate (mid) at ($(left)!0.5!(right)$);
        \coordinate (at) at ($(left)!0.6!(right)$);
        \coordinate (uright) at ($(right)+(0.1,0)$);
        \coordinate (bleft) at ($(left)+(0,-7*\ysp)$);
        \coordinate (bright) at (uright |- bleft);
        \coordinate (text) at ($(bleft)+(0,-\ysp)$);
        \coordinate (temp) at ($(origin)+(0,-5*\ysp)$);
        \draw[timepoint] (at) -- (at |- bleft);
        \draw[timepoint] (before) -- (before |- bleft);
        \draw[deme2] (left |- temp) -- (right |- temp);
        \node[font=\footnotesize,anchor=east] at (left |- temp) {$\Set{a}\cup{b}$};
        \node[font=\footnotesize,anchor=west] at (right |- temp) {$b$};
        \node[text=royalblue,font=\Huge] at (at |- temp) {$\bullet$};
        \node[text=white,font=\footnotesize] at (at |- temp) {$a$};
        \foreach \y in {2,4,7} {
          \coordinate (temp) at ($(origin)+(0,-\y*\ysp)$);
          \draw[deme1] (left |- temp) -- (right |- temp);
        }
        \foreach \y in {1,3,6} {
          \coordinate (temp) at ($(origin)+(0,-\y*\ysp)$);
          \draw[deme2] (left |- temp) -- (right |- temp);
        }
        \draw[axis,->] (bleft) -- (bright);
        \node[xshift=1ex] (tlab) at (bright.east) {$t$};
        \node (y) at (before |- lab) {$y^{\phantom{\prime}}$};
        \node (yprime) at (at |- lab) {$y^{\prime}$};
        \node at (mid |- text) {$y'=\chop{\deme{II}}{ab}{y}$};
      \end{scope}

      \begin{scope}[shift={(3.6,0)}]
        \coordinate (origin) at (0,0);
        \node[label] (lab) at (origin) {\textbf{C}};
        \coordinate (left) at ($(origin)+(0.1,-0.1)$);
        \coordinate (right) at ($(left)+(1,0)$);
        \coordinate (before) at ($(left)!0.3!(right)$);
        \coordinate (mid) at ($(left)!0.5!(right)$);
        \coordinate (at) at ($(left)!0.6!(right)$);
        \coordinate (uright) at ($(right)+(0.1,0)$);
        \coordinate (bleft) at ($(left)+(0,-7*\ysp)$);
        \coordinate (bright) at (uright |- bleft);
        \coordinate (text) at ($(bleft)+(0,-\ysp)$);
        \coordinate (temp) at ($(origin)+(0,-2*\ysp)$);
        \draw[timepoint] (at) -- (at |- bleft);
        \draw[timepoint] (before) -- (before |- bleft);
        \node[font=\footnotesize,anchor=east] at (left |- temp) {$b$};
        \node[font=\footnotesize,anchor=west] at (right |- temp) {$b$};
        \draw[deme1] (left |- temp) -- (at |- temp);
        \draw[deme2] (at |- temp) -- (right |- temp);
        \node[text=darkgreen,font=\Huge] at (at |- temp) {$\bullet$};
        \foreach \y in {1,4,6,7} {
          \coordinate (temp) at ($(origin)+(0,-\y*\ysp)$);
          \draw[deme1] (left |- temp) -- (right |- temp);
        }
        \foreach \y in {3,5} {
          \coordinate (temp) at ($(origin)+(0,-\y*\ysp)$);
          \draw[deme2] (left |- temp) -- (right |- temp);
        }
        \draw[axis,->] (bleft) -- (bright);
        \node[xshift=1ex] (tlab) at (bright.east) {$t$};
        \node (y) at (before |- lab) {$y^{\phantom{\prime}}$};
        \node (yprime) at (at |- lab) {$y^{\prime}$};
        \node at (mid |- text) {$y'=\swap{\deme{EI}}{b}{y}$};
      \end{scope}

      \begin{scope}[shift={(5.4,0)}]
        \coordinate (origin) at (0,0);
        \node[label] (lab) at (origin) {\textbf{D}};
        \coordinate (left) at ($(origin)+(0.1,-0.1)$);
        \coordinate (right) at ($(left)+(1,0)$);
        \coordinate (before) at ($(left)!0.3!(right)$);
        \coordinate (mid) at ($(left)!0.5!(right)$);
        \coordinate (at) at ($(left)!0.6!(right)$);
        \coordinate (uright) at ($(right)+(0.1,0)$);
        \coordinate (bleft) at ($(left)+(0,-7*\ysp)$);
        \coordinate (bright) at (uright |- bleft);
        \coordinate (text) at ($(bleft)+(0,-\ysp)$);
        \coordinate (temp) at ($(origin)+(0,-4*\ysp)$);
        \draw[timepoint] (at) -- (at |- bleft);
        \draw[timepoint] (before) -- (before |- bleft);
        \node[font=\footnotesize,anchor=east] at (left |- temp) {$b$};
        \node[font=\footnotesize,anchor=west] at (right |- temp) {$b$};
        \draw[deme2] (left |- temp) -- (at |- temp);
        \draw[deme2] (at |- temp) -- (right |- temp);
        \node[text=darkgreen,font=\Huge] at (at |- temp) {$\bullet$};
        \foreach \y in {3,5,6} {
          \coordinate (temp) at ($(origin)+(0,-\y*\ysp)$);
          \draw[deme1] (left |- temp) -- (right |- temp);
        }
        \foreach \y in {1,2,7} {
          \coordinate (temp) at ($(origin)+(0,-\y*\ysp)$);
          \draw[deme2] (left |- temp) -- (right |- temp);
        }
        \draw[axis,->] (bleft) -- (bright);
        \node[xshift=1ex] (tlab) at (bright.east) {$t$};
        \node (y) at (before |- lab) {$y^{\phantom{\prime}}$};
        \node (yprime) at (at |- lab) {$y^{\prime}$};
        \node at (mid |- text) {$y'=\swap{\deme{II}}{b}{y}$};
      \end{scope}

      \begin{scope}[shift={(7.2,0)}]
        \coordinate (origin) at (0,0);
        \node[label] (lab) at (origin) {\textbf{E}};
        \coordinate (left) at ($(origin)+(0.1,-0.1)$);
        \coordinate (right) at ($(left)+(1,0)$);
        \coordinate (before) at ($(left)!0.3!(right)$);
        \coordinate (mid) at ($(left)!0.5!(right)$);
        \coordinate (at) at ($(left)!0.6!(right)$);
        \coordinate (uright) at ($(right)+(0.1,0)$);
        \coordinate (bleft) at ($(left)+(0,-7*\ysp)$);
        \coordinate (bright) at (uright |- bleft);
        \coordinate (text) at ($(bleft)+(0,-\ysp)$);
        \coordinate (temp) at ($(origin)+(0,-5*\ysp)$);
        \coordinate (temp1) at ($(temp)+(0,-\ysp)$);
        \draw[timepoint] (at) -- (at |- bleft);
        \draw[timepoint] (before) -- (before |- bleft);
        \node[font=\footnotesize,anchor=east] at (left |- temp) {$b\cup{b'}$};
        \node[font=\footnotesize,anchor=west] at (right |- temp) {$b$};
        \node[font=\footnotesize,anchor=west] at (right |- temp1) {$b'$};
        \draw[deme2] (left |- temp) -- (at |- temp);
        \draw[deme2] (at |- temp) -- (right |- temp);
        \draw[deme1] (at |- temp) -- (at |- temp1) -- (right |- temp1);
        \node[text=darkgreen,font=\Huge] at (at |- temp) {$\bullet$};
        \foreach \y in {2,3} {
          \coordinate (temp) at ($(origin)+(0,-\y*\ysp)$);
          \draw[deme1] (left |- temp) -- (right |- temp);
        }
        \foreach \y in {1,4,7} {
          \coordinate (temp) at ($(origin)+(0,-\y*\ysp)$);
          \draw[deme2] (left |- temp) -- (right |- temp);
        }
        \draw[axis,->] (bleft) -- (bright);
        \node[xshift=1ex] (tlab) at (bright.east) {$t$};
        \node (y) at (before |- lab) {$y^{\phantom{\prime}}$};
        \node (yprime) at (at |- lab) {$y^{\prime}$};
        \node at (mid |- text) {$y'=\fork{\deme{IIE}}{bb'}{y}$};
      \end{scope}

    \end{tikzpicture}
  }
\end{center}
  \caption{
    \label{fig:swapnchop}
    \textbf{Chop, swap, and fork operators.}
    Each panel shows the local structure in the vicinity of an event.
    There are two demes, $\Set{\deme{E,I}}$, denoted by grey and yellow, respectively.
    In each case, $y$ denotes the coloring (mapping of branches to demes) just before the event;
    $y'$ denotes the coloring at the event.
    In \textbf{(A)}, the sample labeled $a$ (blue dot) terminates the branch, $\Set{a}$, that bears only that lineage.
    Symbolically, we write $y'=\chop{\deme{I}}{a\emptyset}{y}$.
    In \textbf{(B)}, the sample labeled $a$ is inline:
    the branch labeled $b$ descends from it.
    The branch immediately ancestral to $a$ is therefore $\Set{a}\cup{b}$.
    Here, the color does not change at the sample event.
    Therefore $y'=\chop{\deme{II}}{ab}{y}$.
    In \textbf{(C)}, an event (green dot) occurs on branch $b$ and the color changes from $\deme{E}$ to $\deme{I}$.
    Symbolically, then, $y'=\swap{\deme{EI}}{b}{y}$.
    Panel \textbf{(D)} shows an event without an associated change in color.
    We denote this change with the notation $y'=\swap{\deme{II}}{b}{y}$.
    In this case, observe that while $\col{y'}=\col{y}$, $\ctr{y'}=1\ne{0}=\ctr{y}$, whence $y'\ne{y}$.
    In \textbf{(E)}, a branch event occurs at the green dot, splitting the branches $b$, $b'$ from their parent branch $b\cup{b'}$.
    In this case, $b\cup{b'}$ and $b$ are both in $\deme{I}$ and $b'$ is in $\deme{E}$, whence we write $y'=\fork{\deme{IIE}}{bb'}{y}$.
  }
\end{figure}

To specialize the general results to particular models, one needs to be able to speak about events along specific branches of the genealogy, for which the following notation is useful.
Recall from \zcref{sec:genealogy} the formal definition of a pruned genealogy as a triple $(\T,\Z,\Y)$ where $\T$ is the genealogy time; $\Z$, the tree structure; and $\Y$ the coloring.
In particular, recall that $\Z$ is defined as a monotone map $\Z:[0,\T]\to\partit(\leaves)$, where $\leaves$ is the set of labels attached to samples.
Recall also that $\Y$ is a map from $\Z$ into the set $\ColorSet$.
In the following, suppose that $z\in\partit(\leaves)$ and $y:z\to\ColorSet$.
Thus $z$ is the set of branches present at a particular instant, and $y$ is a possible coloring of $z$.
Recall also that $y$ has two components:
the \emph{color} $\col{y}$ is the mapping $z\to\Demes$ and the \emph{indicator} $\ctr{y}:z\to\Set{0,1}$ indicates where, if anywhere, genealogical events are present in $y$.
Finally, for $i\in\Demes$, let $\cols{i}{y}\coloneq\CondSet{b}{\col{y}(b)=i}$ be the set of all branches in $y$ having color $i$.

Suppose $a\in\leaves$ is a label, $b\in{z}$ is a branch, $a\in{b}$, and $i,j\in\Demes$.
Let $b'=b\setminus\Set{a}$.
If $b'\ne\emptyset$, we define $\chop{}{ab'}{z}={z\cup\Set{b'}\setminus\Set{b}}$ and we let $\chop{}{a\emptyset}{z}={z\setminus\Set{\Set{a}}}$.
This means that the sets of branches $z$ and $\chop{}{ab'}{z}$ are alike except that, in the latter, the label $a$ has been deleted.
Moreover, if $b\in\cols{i}{y}$ and $z'=\chop{}{ab'}{z}$,
define $\chop{ij}{ab'}{y}:z'\to\ColorSet$ by
$\chop{ij}{ab'}{y}(b')=(j,1)$ and $\chop{ij}{ab'}{y}(c)=y(c)$ for $c\ne{b'}$.
Note that if $b=\Set{a}$, then $b'=\emptyset$ and $\chop{ij}{ab'}{y}$ is independent of $j$.
In this case, we can use $\chop{i}{a\emptyset}{y}$ to denote the common value.
In plain language, $\chop{ij}{ab'}{y}$ is the coloring on the branches of $z'$ under which $b'$ is in deme $j$ and all other branches are colored as they are in $y$.
Thus, for example, if $(\T,\Z,\Y)$ is a genealogy and there is a sample event at $t$, the label of the sample being $a\in\leaves$, then $\Z_t=\chop{}{ab'}{\Zt_t}$.
If $\Yt_t(b)=i$ and it is a terminal sample event, then $\Y_t=\chop{i}{a\emptyset}{\Yt_t}$;
if it is an inline sample, the descending lineage being in deme $j$, then $\Y_t=\chop{ij}{ab'}{\Yt_t}$.
It is natural also to write $\chop{ij}{ab}{\col{y}}=\col{\chop{ij}{ab}{y}}$ and $\chop{ij}{ab}{\ctr{y}}=\ctr{\chop{ij}{ab}{y}}$.
We call $\chop{}{}{}$ the \emph{chop operator}.

Again suppose $b\in{z}$ and $i,j\in\Demes$.
If $b\in\cols{i}{y}$, then
define $\swap{ij}{b}{y}:z\to\ColorSet$ to be another coloring of $z$ that agrees with $y$ everywhere but on branch $b$.
On branch $b$, the color is swapped from $i$ to $j$ and the indicator is set to 1.
That is, $\swap{ij}{b}{y}(b)=(j,1)$ and $\swap{ij}{b}{y}(c)=y(c)$ for all $c\ne{b}$.
Again, we make the natural definitions $\swap{ij}{b}{\col{y}}=\col{\swap{ij}{b}{y}}$ and $\swap{}{b}{\ctr{y}}=\ctr{\swap{ij}{b}{y}}$,
the latter holding for any $j$ as long as $y(b)=i$.
Thus, the result of applying $\swap{ij}{b}{}$ is to change the coloring of branch $b$ from $i$ to $j$.
We refer to $\swap{}{}{}$ as the \emph{swap operator}.

Now let us suppose that $c,c'$ are disjoint, nonempty subsets of $\leaves$, such that $c\cup{c'}=b\in{z}$.
Let $\fork{}{cc'}{z}=z\cup\Set{c,c'}\setminus\Set{b}$.
Thus $\fork{}{cc'}{z}$ is identical to $z$ except in that $b$ has been forked into two branches, $c$ and $c'$.
Similarly, if $i,j,k\in\Demes$ and $b\in\cols{i}{y}$, we define
$\fork{ijk}{cc'}{y}$ to be the coloring of $\fork{}{cc'}{z}$ such that when $y'=\fork{ijk}{cc'}{y}$, then $\col{y'(c)}=j$, $\col{y'(c')}=k$, and $\ctr{y'(c)}=\ctr{y'(c')}=1$.
Therefore, the result of applying $\fork{ijk}{cc'}{}$ to a coloring $y$ is that the $i$-colored branch $c\cup{c'}$ is forked into a $j$-colored branch $c$ and a $k$-colored branch $c'$.
Again, it is natural to define $\fork{ijk}{cc'}{\col{y}}=\col{\fork{ijk}{cc'}{y}}$ and $\fork{ijk}{cc'}{\ctr{y}}=\ctr{\fork{ijk}{cc'}{y}}$.
We call $\fork{}{}{}$ the \emph{fork operator}.
Note that the notation for the fork operator accommodates only binary forks, which will be sufficient for our purposes below.
However, it would be fairly straightforward to extend the notation to deal with the general case of $n$-ary forks.

Finally, let us define reversals of these operators.
It is easy to verify that there are well defined operators $\backchop{}{}{}$, $\backswap{}{}{}$, and $\backfork{}{}{}$ such that
\begin{equation}
  \label{eq:swapnchop}
  \begin{aligned}
    z'=\chop{}{a\emptyset}{z},\ y'=\chop{i}{a\emptyset}{y},\ \Set{a}\in\cols{i}{y}
    &\Longleftrightarrow
    z=\backchop{}{a\emptyset}{z'},\ y=\backchop{i}{a\emptyset}{y'},\ a\notin\bigcup{z'},\\
    z'=\chop{}{ab}{z},\ y'=\chop{ij}{ab}{y},\ b\cup\Set{a}\in\cols{i}{y}
    &\Longleftrightarrow
    z=\backchop{}{ab}{z'},\ y=\backchop{ij}{ab}{y'},\ b\in\cols{j}{y'},\ a\notin\bigcup{z'},\\
    y'=\swap{ij}{b}{y},\ b\in\cols{i}{y}
    &\Longleftrightarrow
    y=\backswap{ij}{b}{y'},\ b\in\cols{j}{y'},\\
    y'=\fork{ijk}{cc'}{y},\ c\cup{c'}\in\cols{i}{y}
    &\Longleftrightarrow
    y=\backfork{ijk}{cc'}{y'},\ c\in\cols{j}{y'},\ c'\in\cols{k}{y'}.\\
  \end{aligned}
\end{equation}

\subsection{SIRS model} 
\label{sec:sirs-example}

Despite its relative simplicity, it will be instructive to consider, as a first illustration of the application of the theory, the SIRS model discussed in \zcref{sec:examples}.
Let us deduce the filter equation (\zcref{eq:obsc-filter-reg,eq:obsc-filter-sing}) corresponding to this one-deme model.
Here, the state vector is $x=(S,I,R)$ and the KFE is
\begin{mathsize}{9pt}{10pt}
  \begin{equation*}
    \begin{split}
      \frac{\partial{v}}{\partial{t}}(S,I,R)=
      &\frac{\beta\,(S+1)\,(I-1)}{N}\,v(t,S+1,I-1,R)
      -\frac{\beta\,S\,I}{N}\,v(t,S,I,R)\\
      &+\gamma\,(I+1)\,v(t,S,I+1,R-1)
      -\gamma\,I\,v(t,S,I,R)\\
      &+\omega\,(R+1)\,v(t,S-1,I,R+1)
      -\omega\,R\,v(t,S,I,R),
    \end{split}
  \end{equation*}
\end{mathsize}%
where $N=S+I+R$ is the host population size.
Note that, though the theory allows for time-dependent event rates, this example is time-homogeneous.
This model has one deme and occupancy function $n(x)=I$.
There are four jump marks:
$\Jumps=\Set{\jump{Trans},\jump{Recov},\jump{Wane},\jump{Sample}}$,
corresponding to transmission, recovery, waning of immunity, and sampling, respectively.
\zcref[S]{tab:sirs-model-elements} gives $\alpha_u$, $r^u$, and the event type for each of these marks.
The relevant binomial ratios are
\begin{mathsize}{9pt}{10pt}
  \begin{equation}
    \label{eq:sirs-binratio}
    \binratio{n}{\ell}{r}{s}=
    \begin{cases}
      \tbinratio{I}{\ell}{2}{0}=\frac{(I-\ell)\,(I-\ell-1)}{I\,(I-1)}\,\indicator{I\ge\ell}, & u=\jump{Trans}, s=0, \\[1ex]
      \tbinratio{I}{\ell}{2}{1}=\frac{2\,(I-\ell)}{I\,(I-1)}\,\indicator{I\ge\ell}, & u=\jump{Trans}, s=1, \\[1ex]
      \tbinratio{I}{\ell}{2}{2}=\frac{2}{I\,(I-1)}\,\indicator{I\ge\ell}, & u=\jump{Trans}, s=2, \\[1ex]
      \tbinratio{I}{\ell}{0}{0}=\indicator{I\ge\ell}, & u=\jump{Recov}, s=0, \\[1ex]
      \tbinratio{I}{\ell}{0}{0}=\indicator{I\ge\ell}, & u=\jump{Wane}, s=0, \\[1ex]
      \tbinratio{I}{\ell}{1}{0}=\frac{I-\ell}{I}\,\indicator{I\ge\ell}, & u=\jump{Sample}, s=0, \\[1ex]
      \tbinratio{I}{\ell}{1}{1}=\frac{1}{I}\,\indicator{I\ge\ell}, & u=\jump{Sample}, s=1. \\[1ex]
    \end{cases}
  \end{equation}
\end{mathsize}
The compatibility indicator is given by
\begin{mathsize}{9pt}{10pt}
  \begin{equation}
    \label{eq:sirs-Qs}
    Q_u(y,y')=\begin{cases}
    1, & u=\jump{Sample}, \exists{a}\ \st\ y'=\chop{}{a\emptyset}{y},\\
    1, & u=\jump{Sample}, \exists{a,b}\ \st\ y'=\chop{}{ab}{y},\\
    1, & u=\jump{Trans}, \exists{b,b'}\ \st\ y'=\fork{}{bb'}{y},\\
    1, & u=\jump{Trans}, y'=y,\\
    1, & u=\jump{Trans}, \exists{b}\ \st\ y'=\swap{}{b}{y},\\
    1, & u=\jump{Recov}, y'=y,\\
    1, & u=\jump{Wane}, y'=y,\\
    0, & \text{otherwise}.
    \end{cases}
  \end{equation}
\end{mathsize}%

\begin{table}
  \caption{
    \label{tab:sirs-model-elements}
    Elements of the SIRS model pertinent to the genealogy process.
    The population state is $x=(S,I,R)$ and there are four types of jumps, $x\mapsto{x'}=(S',I',R')$.
    For each of these, the table shows the rate ($\alpha_u$), production ($r^u$), and event type (\cf~\zcref{sec:event-types}).
  }
  \begin{tabular}{cccl}
    \hline\hline
    $u$ & $\alpha_u(t,x,x')$ \bigstrut & $r^u$ \bigstrut & Event type \\
    \hline
    $\jump{Trans}$  & $\frac{\beta S I}{N}\,\indicator{S'=S-1,I'=I+1}$ \bigstrut & 2 & pure birth \\
    $\jump{Recov}$  & $\gamma\,I\,\indicator{I'=I-1,R'=R+1}$           \bigstrut & 0 & pure death \\
    $\jump{Wane}$   & $\omega\,R\,\indicator{S'=S+1,R'=R-1}$           \bigstrut & 0 & neutral \\
    $\jump{Sample}$ & $\psi\,I\,\indicator{x'=x}$                      \bigstrut & 1 & pure sample \\
    \hline\hline
  \end{tabular}
\end{table}

Now, suppose we are given an obscured genealogy $(T,Z)$ and wish to compute its likelihood under the SIRS model.
The general form of the singular part of the filter equation (\zcref{eq:obsc-filter-sing}) is
\begin{equation*}
  w(t,x,y)=\sum_{uy'}{\int{\wt(t,x',y')\,\alpha_u(t,x',x)\,\phi_u(t,x',x,y',y)\,\dd{x'}}},\qquad t\in\event{Z}.
\end{equation*}
Recalling that a coloring has two components, \ie $y=(d,m)$, where $d=\col{y}$ is the deme, or color, and $m=\ctr{y}$ is the event indicator (\cf~\zcref{sec:genealogy}), we can make these explicit in writing $w=w(t,x,d,m)$.
In the case of the SIRS model, however, there being only one deme, the dependence of $w$ on $d$ is trivial.
Let us therefore rewrite the general equation as
\begin{equation}
  \label{eq:sirs5}
  w(t,x,m)=\sum_{um'}{\int{\wt(t,x',m')\,\alpha_u(t,x',x)\,\phi_u(t,x',x,m',m)\,\dd{x'}}},\qquad t\in\event{Z}.
\end{equation}

Only two kinds of genealogical events are possible in $Z$ under the SIRS model: samples and branch-points.
Samples, in turn, may be inline or terminal.
Let $B$ denote the set of branch times;
$S_0$, the set of times of terminal samples;
and $S_1$, the times of inline samples.
Then $\event{Z}=B\cup{S_0}\cup{S_1}$.
Applying the relevant cases of \zcref{eq:sirs-Qs} (\ie the first three), and using \zcref{eq:phidef}, we can rewrite \zcref{eq:sirs5} as
\begin{equation*}
  w(t,x,m)=\begin{cases}
  \displaystyle\int{\wt(t,x',m')\,\alpha_{\jump{Sample}}(t,x',x)\,\tbinratio{I}{\ell}{1}{0}\,\dd{x'}} & t\in{S_0}.\\[2ex]
  \displaystyle\int{\wt(t,x',m')\,\alpha_{\jump{Sample}}(t,x',x)\,\tbinratio{I}{\ell}{1}{1}\,\dd{x'}} & t\in{S_1},\\[2ex]
  \displaystyle\int{\wt(t,x',m')\,\alpha_{\jump{Trans}}(t,x',x)\,\tbinratio{I}{\ell}{2}{2}\,\dd{x'}} & t\in{B}.\\[2ex]
  \end{cases}\qquad
\end{equation*}
Defining $w(t,x)=\sum_m{w(t,x,m)}$, we have
\begin{equation}
  \label{eq:sirs6}
  w(t,x)=\begin{cases}
  \displaystyle\int{\wt(t,x')\,\alpha_{\jump{Sample}}(t,x',x)\,\tbinratio{I}{\ell}{1}{0}\,\dd{x'}} & t\in{S_0}.\\[2ex]
  \displaystyle\int{\wt(t,x')\,\alpha_{\jump{Sample}}(t,x',x)\,\tbinratio{I}{\ell}{1}{1}\,\dd{x'}} & t\in{S_1},\\[2ex]
  \displaystyle\int{\wt(t,x')\,\alpha_{\jump{Trans}}(t,x',x)\,\tbinratio{I}{\ell}{2}{2}\,\dd{x'}} & t\in{B}.\\[2ex]
  \end{cases}\qquad
\end{equation}
Finally, putting in the expressions from \zcref{eq:sirs-binratio,tab:sirs-model-elements} gives
\begin{mathsize}{9pt}{10pt}
  \begin{equation}
    \label{eq:sirs7}
    w(t,S,I,R)=
    \begin{cases}
      \wt(t,S,I,R)\,\psi\,(I-\ell),                   & t\in{S_0},\ I\ge\ell,\\[1ex]
      \wt(t,S,I,R)\,\psi,                             & t\in{S_1},\ I\ge\ell,\\[1ex]
      \wt(t,S+1,I-1,R)\,\frac{2\,\beta\,(S+1)}{I\,N}, & t\in{B},\ I\ge{\ell},\\[1ex]
      0                                               & I<\ell.
    \end{cases}
  \end{equation}
\end{mathsize}%

Let us now turn to the regular part of the filter equation (\zcref{eq:obsc-filter-reg}), which holds for $t\notin{B\cup{S_0}\cup{S_1}}$,
\ie in the intervals between events of $Z$.
Again, since there is only one deme for this model, we have
\begin{mathsize}{9pt}{10pt}
  \begin{equation}
    \label{eq:sirs1}
    \begin{split}
      \frac{\partial{w}}{\partial{t}}(t,x,m) =&
      \sum_{um'}{\int{w(t,x',m')\,\alpha_u(t,x',x)\,\phi_u(t,x',x,m',m)\,\dd{x'}}}\\
      &-\sum_{um'}{\int{w(t,x,m)\,\alpha_u(t,x,x')\,\pi_u(t,x,x',m,m')\,\dd{x'}}}.
    \end{split}
  \end{equation}
\end{mathsize}%
Using \zcref{eq:phidef,eq:sirs-binratio,eq:sirs-Qs}, we can evaluate the sums in \zcref{eq:sirs1} to obtain
\begin{mathsize}{9pt}{10pt}
  \begin{equation}
    \label{eq:sirs2}
    \begin{split}
      \frac{\partial{w}}{\partial{t}}(t,x,m)=&
      \int{w(t,x',m)\,\alpha_{\jump{Trans}}(t,x',x)\,\tbinratio{I}{\ell}{2}{0}\,\dd{x'}}
      +\sum_{b\in{Z_t}}{\int{w(t,x',\backswap{}{b}{m})\,\alpha_{\jump{Trans}}(t,x',x)\,\tbinratio{I}{\ell}{2}{1}\,\dd{x'}}}\\
      &+\int{w(t,x',m)\,\alpha_{\jump{Recov}}(t,x',x)\,\tbinratio{I}{\ell}{0}{0}\,\dd{x'}}
      +\int{w(t,x',m)\,\alpha_{\jump{Wane}}(t,x',x)\,\tbinratio{I}{\ell}{0}{0}\,\dd{x'}}\\
      &-\sum_{u}{\int{w(t,x,m)\,\alpha_u(t,x,x')\,\dd{x'}}}.
    \end{split}
  \end{equation}
\end{mathsize}%
Again, we sum over $m$, obtaining
\begin{mathsize}{9pt}{10pt}
  \begin{equation}
    \label{eq:sirs3}
    \begin{split}
      \frac{\partial{w}}{\partial{t}}(t,x)=&
      \int{w(t,x')\,\alpha_{\jump{Trans}}(t,x',x)\,\left[\tbinratio{I}{\ell}{2}{0}+\ell\,\tbinratio{I}{\ell}{2}{1}\right]\,\dd{x'}}\\
      &+\int{w(t,x')\,\alpha_{\jump{Recov}}(t,x',x)\,\tbinratio{I}{\ell}{0}{0}\,\dd{x'}}
      +\int{w(t,x')\,\alpha_{\jump{Wane}}(t,x',x)\,\tbinratio{I}{\ell}{0}{0}\,\dd{x'}}\\
      &-\sum_{u}{\int{w(t,x)\,\alpha_u(t,x,x')\,\dd{x'}}},
    \end{split}
  \end{equation}
\end{mathsize}%
Finally, inserting the expressions from \zcref{tab:sirs-model-elements} and \zcref{eq:sirs-binratio} leaves us with the following regular filter equation
\begin{mathsize}{9pt}{10pt}
  \begin{equation}
    \label{eq:sirs4}
    \begin{split}
      \frac{\partial{w}}{\partial{t}}(t,S,I,R) =
      &\frac{\beta\,(S+1)\,(I-1)}{N}\,\left(1-\frac{\tbinom{\ell(t)}{2}}{\tbinom{I}{2}}\right)\,w(t,S+1,I-1,R)
      -\frac{\beta\,S\,I}{N}\,w(t,S,I,R)\\
      &+\gamma\,(I+1)\,w(t,S,I+1,R-1)
      -\gamma\,I\,w(t,S,I,R)\\
      &+\omega\,(R+1)\,w(t,S-1,I,R+1)
      -\omega\,R\,w(t,S,I,R)\\
      &-\psi\,I\,w(t,S,I,R),
    \end{split}
  \end{equation}
\end{mathsize}%
which holds only for $I\ge\ell$.
For $I<\ell$, it follows from \zcref{eq:sirs7} that $w=0$.
Here we have used the Chu-Vandermonde identity (\zcref{eq:chu-vandermonde}) to rewrite the first term of \zcref{eq:sirs3}.
Note the presence in \zcref{eq:sirs4} of the decay term proportional to $\psi$.

\zcref[S]{eq:sirs4,eq:sirs7} are identical to those obtained using the more specialized approach of \citet{King2022}.
\zcref[S]{fig:sirs-example} shows some numerical results obtained by integrating this filter equation using the scheme of \zcref{sec:smc}.

\begin{figure}
  \begin{center}
\begin{knitrout}\small
\definecolor{shadecolor}{rgb}{0.969, 0.969, 0.969}\color{fgcolor}

{\centering \includegraphics[width=1\linewidth]{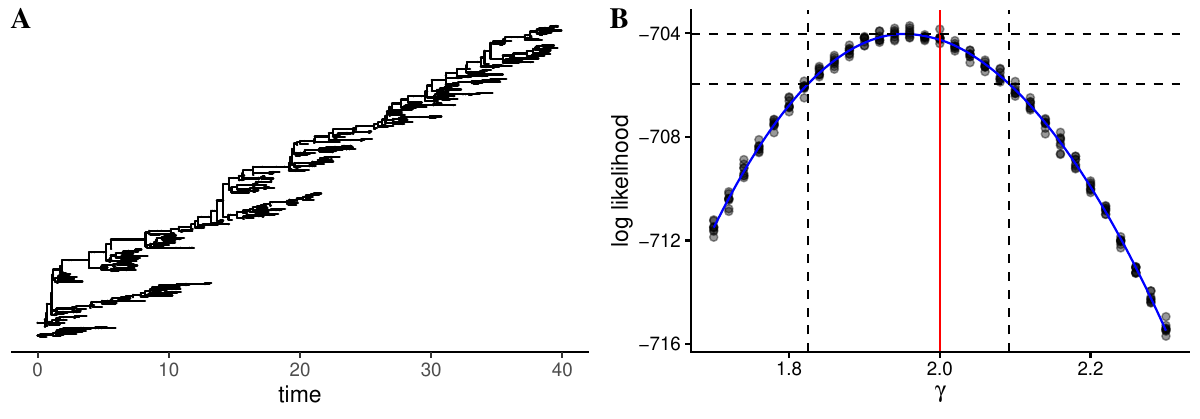} 

}

\end{knitrout}
  \end{center}
  \caption{
    \label{fig:sirs-example}
    Likelihood computation for the SIRS model by sequential Monte Carlo.
    \textbf{(A)}~A~simulated genealogy for $\beta=4$, $\gamma=2$, $\omega=1$, $\psi=1$, $(S_0,I_0,R_0)=(97, 3, 0)$.
    \textbf{(B)}~A~slice through the likelihood surface at the true parameters in the $\gamma$-direction.
    Each point is a distinct Monte Carlo estimate.
    The blue curve is a LOESS smooth;
    the dashed lines bound the Monte Carlo-adjusted 95\% confidence interval \citep{Ionides2017}.
  }
\end{figure}

\subsection{SEIRS model} 
\label{sec:seirs-example}

\def\ellE{\ell_{\deme{E}}}
\def\ellI{\ell_{\deme{I}}}

A simple model with more than one deme is the SEIRS model (\zcref{fig:example-models}A, \zcref{fig:markov-state}).
The state space is $\Zp^4$, with the state $x=(S,E,I,R)$ defined by the numbers of hosts in each of the four compartments.
The KFE for the population process is
\begin{mathsize}{9pt}{10pt}
  \begin{equation*}
    \begin{split}
      \frac{\partial{v}}{\partial{t}}(t,S,E,I,R)=
      &\frac{\beta\,(S+1)\,I}{N}\,v(t,S+1,E-1,I,R)
      -\frac{\beta\,S\,I}{N}\,v(t,S,E,I,R)\\
      &+\sigma\,(E+1)\,v(t,S,E+1,I-1,R)
      -\sigma\,E\,v(t,S,E,I,R)\\
      &+\gamma\,(I+1)\,v(t,S,E,I+1,R-1)
      -\gamma\,I\,v(t,S,E,I,R)\\
      &+\omega\,(R+1)\,v(t,S-1,E,I,R+1)
      -\omega\,R\,v(t,S,E,I,R),
    \end{split}
  \end{equation*}
\end{mathsize}%
where $N=S+E+I+R$ is the total population size.
Note that the terms associated with sampling cancel each other in the KFE, since, with the state space defined as above, sampling has no effect on the population state.

This model has two demes: $\Demes=\Set{\deme{E},\deme{I}}$.
Its deme occupancy function is $n(x)=(E,I)$.
There are five kinds of jumps:
transmission, progression, recovery, waning of immunity, and sampling:
the corresponding marks are $\Jumps=\Set{\jump{Trans},\jump{Prog},\jump{Recov},\jump{Wane},\jump{Sample}}$.
\zcref[S]{tab:seirs-model-elements} gives $\alpha_u$, $r^u$, and the event type for each of these marks.
The relevant binomial ratios are
\begin{mathsize}{9pt}{10pt}
  \begin{equation}
    \label{eq:seirs1}
    \binratio{n}{\ell}{r}{s}=
    \begin{cases}
      \tbinratio{E}{\ellE}{0}{0}\,\tbinratio{I}{\ellI}{1}{0}=
      \left(\frac{I-\ellI}{I}\right)\,\indicator{E\ge\ellE}\,\indicator{I\ge\ellI}, & u=\jump{Sample}, s=(0,0), \\[1ex]
      \tbinratio{E}{\ellE}{0}{0}\,\tbinratio{I}{\ellI}{1}{1}=
      \left(\frac{1}{I}\right)\,\indicator{E\ge\ellE}\,\indicator{I\ge\ellI}, & u=\jump{Sample}, s=(0,1), \\[1ex]
      \tbinratio{E}{\ellE}{1}{0}\,\tbinratio{I}{\ellI}{1}{0}=
      \left(\frac{E-\ellE}{E}\right)\left(\frac{I-\ellI}{I}\right)\,\indicator{E\ge\ellE}\,\indicator{I\ge\ellI}, & u=\jump{Trans}, s=(0,0), \\[1ex]
      \tbinratio{E}{\ellE}{1}{0}\,\tbinratio{I}{\ellI}{1}{1}=
      \left(\frac{E-\ellE}{E}\right)\left(\frac{1}{I}\right)\,\indicator{E\ge\ellE}\,\indicator{I\ge\ellI}, & u=\jump{Trans}, s=(0,1), \\[1ex]
      \tbinratio{E}{\ellE}{1}{1}\,\tbinratio{I}{\ellI}{1}{0}=
      \left(\frac{1}{E}\right)\left(\frac{I-\ellI}{I}\right)\,\indicator{E\ge\ellE}\,\indicator{I\ge\ellI}, & u=\jump{Trans}, s=(1,0), \\[1ex]
      \tbinratio{E}{\ellE}{1}{1}\,\tbinratio{I}{\ellI}{1}{1}=
      \left(\frac{1}{E}\right)\left(\frac{1}{I}\right)\,\indicator{E\ge\ellE}\,\indicator{I\ge\ellI}, & u=\jump{Trans}, s=(1,1), \\[1ex]
      \tbinratio{E}{\ellE}{0}{0}\,\tbinratio{I}{\ellI}{1}{0}=
      \left(\frac{I-\ellI}{I}\right)\,\indicator{E\ge\ellE}\,\indicator{I\ge\ellI}, & u=\jump{Prog}, s=(0,0), \\[1ex]
      \tbinratio{E}{\ellE}{0}{0}\,\tbinratio{I}{\ellI}{1}{1}=
      \left(\frac{1}{I}\right)\,\indicator{E\ge\ellE}\,\indicator{I\ge\ellI}, & u=\jump{Prog}, s=(0,1), \\[1ex]
      \tbinratio{E}{\ellE}{0}{0}\,\tbinratio{I}{\ellI}{0}{0}=
      \indicator{E\ge\ellE}\,\indicator{I\ge\ellI}, & u=\jump{Recov}, s=(0,0), \\[1ex]
      \tbinratio{E}{\ellE}{0}{0}\,\tbinratio{I}{\ellI}{0}{0}=
      \indicator{E\ge\ellE}\,\indicator{I\ge\ellI}, & u=\jump{Wane}, s=(0,0). \\[1ex]
    \end{cases}
  \end{equation}
\end{mathsize}
As for the compatibility indicator, it is given by
\begin{mathsize}{9pt}{10pt}
  \begin{equation}
    \label{eq:seirs2}
    Q_u(y,y') = \begin{cases}
      1, & u=\jump{Sample}, \exists{a}\ \st\ y'=\chop{\deme{I}}{a\emptyset}{y},\\
      1, & u=\jump{Sample}, \exists{a,b}\ \st\ y'=\chop{\deme{II}}{ab}{y},\\
      1, & u=\jump{Trans}, \exists{b,b'}\ \st\ y'=\fork{\deme{IIE}}{bb'}{y},\\
      1, & u=\jump{Trans}, y'=y,\\
      1, & u=\jump{Trans}, \exists{b}\ \st\ y'=\swap{\deme{II}}{b}{y},\\
      1, & u=\jump{Trans}, \exists{b}\ \st\ y'=\swap{\deme{IE}}{b}{y},\\
      1, & u=\jump{Prog}, y'=y,\\
      1, & u=\jump{Prog}, \exists{b}\ \st\ y'=\swap{\deme{EI}}{b}{y},\\
      1, & u=\jump{Recov}, y'=y,\\
      1, & u=\jump{Wane}, y'=y,\\
      0, & \text{otherwise}.\\
    \end{cases}
  \end{equation}
\end{mathsize}%

Now, writing $d=\col{y}$ and $m=\ctr{y}$ so that $y=(d,m)$ and $w=w(t,x,d,m)$, the singular portion of the filter equation (\zcref{eq:obsc-filter-reg}) has general form
\begin{mathsize}{9pt}{10pt}
  \begin{equation}
    \label{eq:seirs21}
    w(t,x,d,m)=\sum_{ud'm'}{\int{\wt(t,x',d',m')\,\alpha_u(t,x',x)\,\phi_u(t,x',x,d',d,m',m)\,\dd{x'}}},\qquad t\in\event{Z}.
  \end{equation}
\end{mathsize}%
Applying \zcref{eq:phidef} and the relevant cases of \zcref{eq:seirs1,eq:seirs2}, we expand the sums to obtain
\begin{mathsize}{9pt}{10pt}
  \begin{equation*}
    \begin{aligned}
      w(t,x,d,m)=\begin{cases}
      \displaystyle\int{\wt(t,x',\backchop{\deme{I}}{a\emptyset}{d},\backchop{}{a\emptyset}{m})\,\alpha_{\jump{Sample}}(t,x',x)\,\tbinratio{E}{\ellE}{0}{0}\,\tbinratio{I}{\ellI}{1}{0}\,\dd{x'}}, & t\in{S_0},\\
      \displaystyle\int{\wt(t,x',\backchop{\deme{II}}{ab}{d},\backchop{}{ab}{m})\,\alpha_{\jump{Sample}}(t,x',x)\,\tbinratio{E}{\ellE}{0}{0}\,\tbinratio{I}{\ellI}{1}{1}\,\dd{x'}}, & t\in{S_1},\\
      \displaystyle\int{\wt(t,x',\backfork{\deme{IIE}}{bb'}{d},\backfork{}{bb'}{m})\,\alpha_{\jump{Trans}}(t,x',x)\,\tbinratio{E}{\ellE}{1}{1}\,\tbinratio{I}{\ellI}{1}{1}\,\dd{x'}}\\
      \quad+\displaystyle\int{\wt(t,x',\backfork{\deme{IIE}}{b'b}{d},\backfork{}{b'b}{m})\,\alpha_{\jump{Trans}}(t,x',x)\,\tbinratio{E}{\ellE}{1}{1}\,\tbinratio{I}{\ellI}{1}{1}\,\dd{x'}}, & t\in{B}.\\
      \end{cases}
    \end{aligned}
  \end{equation*}
\end{mathsize}%
Note that, in \zcref{eq:seirs21}, the sample label $a$, and the branches $b$ and $b'$ are determined unambiguously by $Z$.
As in \zcref{sec:sirs-example}, we sum over $m$, writing $w(t,x,d)=\sum_m{w(t,x,d,m)}$ to obtain
\begin{mathsize}{9pt}{10pt}
  \begin{equation}
    \label{eq:seirs22}
    \begin{aligned}
      w(t,x,d)=\begin{cases}
      \displaystyle\int{\wt(t,x',\backchop{\deme{I}}{a\emptyset}{d})\,\alpha_{\jump{Sample}}(t,x',x)\,\tbinratio{E}{\ellE}{0}{0}\,\tbinratio{I}{\ellI}{1}{0}\,\dd{x'}}, & t\in{S_0},\\
      \displaystyle\int{\wt(t,x',\backchop{\deme{II}}{ab}{d})\,\alpha_{\jump{Sample}}(t,x',x)\,\tbinratio{E}{\ellE}{0}{0}\,\tbinratio{I}{\ellI}{1}{1}\,\dd{x'}}, & t\in{S_1},\\
      \displaystyle\int{\wt(t,x',\backfork{\deme{IIE}}{bb'}{d})\,\alpha_{\jump{Trans}}(t,x',x)\,\tbinratio{E}{\ellE}{1}{1}\,\tbinratio{I}{\ellI}{1}{1}\,\dd{x'}}\\[1.5ex]
      \quad+\displaystyle\int{\wt(t,x',\backfork{\deme{IIE}}{b'b}{d})\,\alpha_{\jump{Trans}}(t,x',x)\,\tbinratio{E}{\ellE}{1}{1}\,\tbinratio{I}{\ellI}{1}{1}\,\dd{x'}}, & t\in{B}.\\
      \end{cases}
    \end{aligned}
  \end{equation}
\end{mathsize}%
Finally, substituting the expressions from \zcref{eq:seirs1}, we have the following singular equation:
\begin{mathsize}{9pt}{10pt}
  \begin{equation}
    \label{eq:seirs23}
    \begin{aligned}
      w(t,x,d)=\begin{cases}
      \displaystyle\int{\wt(t,x',\backchop{\deme{I}}{a\emptyset}{d})\,\alpha_{\jump{Sample}}(t,x',x)\,\indicator{I\ge\ellI}\,\indicator{E\ge\ellE}\,\left(\tfrac{I-\ellI}{I}\right)\,\dd{x'}}, & t\in{S_0},\\
      \displaystyle\int{\wt(t,x',\backchop{\deme{II}}{ab}{d})\,\alpha_{\jump{Sample}}(t,x',x)\,\indicator{I\ge\ellI}\,\indicator{E\ge\ellE}\,\left(\tfrac{1}{I}\right)\,\dd{x'}}, & t\in{S_1},\\
      \displaystyle\int{\wt(t,x',\backfork{\deme{IIE}}{bb'}{d})\,\frac{\pi_{\jump{Trans}}^{bb'}(t,x')}{\pi_{\jump{Trans}}^{bb'}(t,x')}\,\alpha_{\jump{Trans}}(t,x',x)\,\indicator{I\ge\ellI}\,\indicator{E\ge\ellE}\,\left(\tfrac{1}{EI}\right)\,\dd{x'}}\\[1.5ex]
      \quad+\displaystyle\int{\wt(t,x',\backfork{\deme{IEI}}{bb'}{d})\,\frac{\pi_{\jump{Trans}}^{b'b}(t,x')}{\pi_{\jump{Trans}}^{b'b}(t,x')}\,\alpha_{\jump{Trans}}(t,x',x)\,\indicator{I\ge\ellI}\,\indicator{E\ge\ellE}\,\left(\tfrac{1}{EI}\right)\,\dd{x'}}, & t\in{B}.\\
      \end{cases}
    \end{aligned}
  \end{equation}
\end{mathsize}%
Here, we have introduced the importance sampling weights $\pi_{\jump{Trans}}^{bb'}$, normalized so that $\pi_{\jump{Trans}}^{bb'}(t,x)+\pi_{\jump{Trans}}^{b'b}(t,x)=1$ for all $t$ and $x$.

Turning now to the filter equation's regular part, \zcref{eq:obsc-filter-reg} is equivalent to
\begin{mathsize}{9pt}{10pt}
  \begin{equation}
    \label{eq:seirs3}
    \begin{split}
      \frac{\partial{w}}{\partial{t}}=
      &\sum_{ud'm'}{\int{w(t,x',d',m')\,\alpha_u(t,x',x)\,\phi_u(t,x',x,d',m',d,m)\,\dd{x'}}}\\
      &-\sum_{ud'm'}{\int{w(t,x,d,m)\,\alpha_u(t,x,x')\,\pi_u(t,x,x',d,m,d',m')\,\dd{x'}}}.
    \end{split}
  \end{equation}
\end{mathsize}%
Substituting the expressions from \zcref{eq:phidef,eq:seirs1,eq:seirs2} into \zcref{eq:seirs3} and evaluating the sums as in \zcref{sec:sirs-example} yields
\begin{mathsize}{9pt}{10pt}
  \begin{equation*}
    \label{eq:seirs4}
    \begin{split}
      \frac{\partial{w}}{\partial{t}}(t,x,d,m)=
      &\int{w(t,x',d,m)\,\alpha_{\jump{Trans}}(t,x',x)\,\tbinratio{E}{\ellE}{1}{0}\,\tbinratio{I}{\ellI}{1}{0}\,\dd{x'}}\\
      &+\sum_{b\in\cols{\deme{I}}{d}}{\int{w(t,x',d,\backswap{}{b}{m})\,\alpha_{\jump{Trans}}(t,x',x)\,\tbinratio{E}{\ellE}{1}{0}\,\tbinratio{I}{\ellI}{1}{1}\,\dd{x'}}}\\
      &+\sum_{b\in\cols{\deme{E}}{d}}{\int{w(t,x',\backswap{\deme{IE}}{b}{d},\backswap{}{b}{m})\,\alpha_{\jump{Trans}}(t,x',x)\,\tbinratio{E}{\ellE}{1}{1}\,\tbinratio{I}{\ellI}{1}{0}\,\dd{x'}}}\\
      &+\int{w(t,x',d,m)\,\alpha_{\jump{Prog}}(t,x',x)\,\,\tbinratio{E}{\ellE}{0}{0}\,\tbinratio{I}{\ellI}{1}{0}\,\dd{x'}}\\
      &+\sum_{b\in\cols{\deme{I}}{d}}{\int{w(t,x',\backswap{\deme{EI}}{b}{d},\backswap{}{b}{m})\,\alpha_{\jump{Prog}}(t,x',x)\,\tbinratio{E}{\ellE}{0}{0}\,\tbinratio{I}{\ellI}{1}{1}\,\dd{x'}}}\\
      &+\int{w(t,x',d,m)\,\alpha_{\jump{Recov}}(t,x',x)\,\tbinratio{E}{\ellE}{0}{0}\,\tbinratio{I}{\ellI}{0}{0}\,\dd{x'}}\\
      &+\int{w(t,x',d,m)\,\alpha_{\jump{Wane}}(t,x',x)\,\tbinratio{E}{\ellE}{0}{0}\,\tbinratio{I}{\ellI}{0}{0}\,\dd{x'}}\\
      &-\sum_{u}{\int{w(t,x,d,m)\,\alpha_u(t,x,x')\,\dd{x'}}}.
    \end{split}
  \end{equation*}
\end{mathsize}%
As in \zcref{sec:sirs-example}, we then sum over $m$, obtaining
\begin{mathsize}{9pt}{10pt}
  \begin{equation*}
    \begin{split}
      \frac{\partial{w}}{\partial{t}}(t,x,d)=
      &\int{w(t,x',d)\,\alpha_{\jump{Trans}}(t,x',x)\,\left[\tbinratio{E}{\ellE}{1}{0}\,\tbinratio{I}{\ellI}{1}{0}+\ellI\,\tbinratio{E}{\ellE}{1}{0}\,\tbinratio{I}{\ellI}{1}{1}\right]\,\dd{x'}}\\
      &+\sum_{b\in\cols{\deme{E}}{d}}{\int{w(t,x',\backswap{\deme{IE}}{b}{d})\,\alpha_{\jump{Trans}}(t,x',x)\,\tbinratio{E}{\ellE}{1}{1}\,\tbinratio{I}{\ellI}{1}{0}\,\dd{x'}}}\\
      &+\int{w(t,x',d)\,\alpha_{\jump{Prog}}(t,x',x)\,\,\tbinratio{E}{\ellE}{0}{0}\,\tbinratio{I}{\ellI}{1}{0}\,\dd{x'}}\\
      &+\sum_{b\in\cols{\deme{I}}{d}}{\int{w(t,x',\backswap{\deme{EI}}{b}{d})\,\alpha_{\jump{Prog}}(t,x',x)\,\tbinratio{E}{\ellE}{0}{0}\,\tbinratio{I}{\ellI}{1}{1}\,\dd{x'}}}\\
      &+\int{w(t,x',d)\,\alpha_{\jump{Recov}}(t,x',x)\,\tbinratio{E}{\ellE}{0}{0}\,\tbinratio{I}{\ellI}{0}{0}\,\dd{x'}}\\
      &+\int{w(t,x',d)\,\alpha_{\jump{Wane}}(t,x',x)\,\tbinratio{E}{\ellE}{0}{0}\,\tbinratio{I}{\ellI}{0}{0}\,\dd{x'}}\\
      &-\sum_{u}{\int{w(t,x,d)\,\alpha_u(t,x,x')\,\dd{x'}}}.
    \end{split}
  \end{equation*}
\end{mathsize}%
Simplifying using \zcref{eq:seirs1,eq:chu-vandermonde,eq:swapnchop} and splitting the final sum over $u$, we obtain, for $E\ge{\ellE}$, $I\ge{\ellE}$,
\begin{mathsize}{9pt}{10pt}
  \begin{equation}
    \label{eq:seirs5}
    \begin{split}
      \frac{\partial{w}}{\partial{t}}(t,x,d)=
      &\int{w(t,x',d)\,\alpha_{\jump{Trans}}(t,x',x)\,\frac{\pi_{\jump{Trans}}^{\emptyset}(t,x')}{\pi_{\jump{Trans}}^{\emptyset}(t,x')}\,\left(\tfrac{E-\ellE}{E}\right)\,\dd{x'}}
      -\int{w(t,x,d)\,\alpha_{\jump{Trans}}(t,x,x')\,\pi_{\jump{Trans}}^{\emptyset}(t,x)\,\dd{x'}}\\
      &+\sum_{bd'}\left\{\int{w(t,x',d')\,\alpha_{\jump{Trans}}(t,x',x)\,\frac{\pi_{\jump{Trans}}^b(t,x')}{\pi_{\jump{Trans}}^b(t,x')}\,\indicator{d=\swap{\deme{IE}}{b}{d'}}\,\left(\tfrac{I-\ellI}{EI}\right)\,\dd{x'}}\right.\\
      &\qquad\qquad\left.-\int{w(t,x,d)\,\alpha_{\jump{Trans}}(t,x,x')\,\pi_{\jump{Trans}}^b(t,x)\,\indicator{d'=\swap{\deme{IE}}{b}{d}}\,\dd{x'}}\right\}\\
      &+\int{w(t,x',d)\,\alpha_{\jump{Prog}}(t,x',x)\,\frac{\pi_{\jump{Prog}}^{\emptyset}(t,x')}{\pi_{\jump{Prog}}^{\emptyset}(t,x')}\,\left(\tfrac{I-\ellI}{I}\right)\,\dd{x'}}
      -\int{w(t,x,d)\,\alpha_{\jump{Prog}}(t,x,x')\,\pi_{\jump{Prog}}^{\emptyset}(t,x)\,\dd{x'}}\\
      &+\sum_{bd'}\left\{\int{w(t,x',d')\,\alpha_{\jump{Prog}}(t,x',x)\,\frac{\pi_{\jump{Prog}}^b(t,x')}{\pi_{\jump{Prog}}^b(t,x')}\,\indicator{d=\swap{\deme{EI}}{b}{d'}}\,\left(\tfrac{1}{I}\right)\,\dd{x'}}\right.\\
      &\qquad\qquad\left.-\int{w(t,x,d)\,\alpha_{\jump{Prog}}(t,x,x')\,\pi_{\jump{Prog}}^b(t,x)\,\indicator{d'=\swap{\deme{EI}}{b}{d}}\,\dd{x'}}\right\}\\
      &+\int{w(t,x',d)\,\alpha_{\jump{Recov}}(t,x',x)\,\indicator{I\ge\ellI}\,\dd{x'}}
      -\int{w(t,x,d)\,\alpha_{\jump{Recov}}(t,x,x')\,\indicator{I\ge\ellI}\,\dd{x'}}\\
      &+\int{w(t,x',d)\,\alpha_{\jump{Wane}}(t,x',x)\,\dd{x'}}
      -\int{w(t,x,d)\,\alpha_{\jump{Wane}}(t,x,x')\,\dd{x'}}\\
      &-\int{w(t,x,d)\,\alpha_{\jump{Sample}}(t,x,x')\,\dd{x'}}
      -\int{w(t,x,d)\,\alpha_{\jump{Recov}}(t,x,x')\,\indicator{I<\ellI}\,\dd{x'}}.
    \end{split}
  \end{equation}
\end{mathsize}%
Here, we have expanded the sums to range over all branches present at time $t$, using the indicator functions to nullify the extra terms.
The kernels $\pi$ that appear in \zcref{eq:seirs5} are normalized so that, for all $t,x$,
\begin{equation}
  \begin{gathered}
    \pi_{\jump{Trans}}^{\emptyset}(t,x)+\sum_b{\pi_{\jump{Trans}}^{b}(t,x)}=1,
    \qquad
    \pi_{\jump{Prog}}^{\emptyset}(t,x)+\sum_b{\pi_{\jump{Prog}}^{b}(t,x)}=1.
  \end{gathered}
\end{equation}

\begin{table}
  \caption{
    \label{tab:seirs-model-elements}
    Elements of the SEIRS model pertinent to the genealogy process.
    The population state is $x=(S,E,I,R)$ and there are four types of jumps, $x\mapsto{x'}=(S',E',I',R')$.
    For each of these, the table shows the rate ($\alpha_u$), production ($r^u=(r^u_{\deme{E}},r^u_{\deme{I}})$), and event type.
  }
  \begin{tabular}{cccl}
    \hline\hline
    $u$ & $\alpha_u(t,x,x')$ \bigstrut & $r^u$ \bigstrut & Event type \\
    \hline
    $\jump{Trans}$  & $\frac{\beta S I}{N}\,\indicator{S'=S-1,E'=E+1}$ \bigstrut & $(1,1)$ & pure birth \\
    $\jump{Prog}$   & $\sigma\,E\,\indicator{E'=E-1,I'=I+1}$           \bigstrut & $(0,1)$ & pure migration \\
    $\jump{Recov}$  & $\gamma\,I\,\indicator{I'=I-1,R'=R+1}$           \bigstrut & $(0,0)$ & pure death \\
    $\jump{Wane}$   & $\omega\,R\,\indicator{S'=S+1,R'=R-1}$           \bigstrut & $(0,0)$ & neutral \\
    $\jump{Sample}$ & $\psi\,I\,\indicator{x'=x}$                      \bigstrut & $(0,1)$ & pure sample \\
    \hline\hline
  \end{tabular}
\end{table}

Comparing \zcref{eq:seirs5,eq:seirs21} with the general form of a filter equation (\zcref{sec:filter-eqns}, \zcref{eq:filter-eq-defn-reg,eq:filter-eq-defn-sing}), one recognizes them as a filter equation on the $(x,u,d)$-space, with driver $\beta$, boost $B$, and decay $\lambda$, as follows.
\begin{align}
  \begin{split}
    \label{eq:seirs-driver}
    \beta(t,x,&x',u,u',d,d')=
    \alpha_{\jump{Trans}}(t,x,x')\,\left(\pi_{\jump{Trans}}^{\emptyset}(t,x)\,\indicator{d'=d}+\sum_{b}{\pi_{\jump{Trans}}^b(t,x)\,\indicator{d'=\swap{\deme{IE}}{b}{d}}}\right)\,\indicator{u'=\jump{Trans}}\\
    &+\alpha_{\jump{Prog}}(t,x,x')\,\left(\pi_{\jump{Prog}}^{\emptyset}(t,x)\,\indicator{d'=d}+\sum_{b}{\pi_{\jump{Prog}}^b(t,x)\,\indicator{d'=\swap{\deme{EI}}{b}{d}}}\right)\,\indicator{u'=\jump{Prog}}\\
    &+\alpha_{\jump{Recov}}(t,x,x')\,\indicator{I>\ellI}\,\indicator{d'=d}\,\indicator{u'=\jump{Recov}}
    +\alpha_{\jump{Wane}}(t,x,x')\,\indicator{d'=d}\,\indicator{u'=\jump{Wane}}\\
    &+\indicator{u'=\jump{Sample}}\,\sum_{e\in{S_0}}{\indicator{b_e\in\cols{\deme{I}}{d}}\,\indicator{d'=\chop{\deme{I}}{b_e}{d}}\,\delta(e,t)}
    +\indicator{u'=\jump{Sample}}\,\sum_{e\in{S_1}}{\indicator{b_e\in\cols{\deme{I}}{d}}\,\indicator{d'=\chop{\deme{II}}{a_eb_e}{d}}\,\delta(e,t)}\\
    &+\indicator{u'=\jump{Trans}}\,\sum_{e\in{B}}{\indicator{b_e\cup{b'_e}\in\cols{\deme{I}}{d}}\,\left[\pi_{\jump{Trans}}^{b_eb_e'}(t,x)\,\indicator{d'=\fork{\deme{IIE}}{b_eb_e'}{d}}+\pi_{\jump{Trans}}^{b_e'b_e}(t,x)\,\indicator{d'=\fork{\deme{IIE}}{b_e'b_e}{d}}\right]\,\delta(e,t)}.\\
  \end{split}\\
  \begin{split}
    \label{eq:seirs-boost}
    B(t,x,&x',u,u',d,d')=
    \indicator{u'=\jump{Trans}}\,\left(
    \tfrac{E'-\ellE'}{E'}\,\tfrac{\Indicator{d'=d}}{\pi_{\jump{Trans}}^{\emptyset}(t,x)}
    +\tfrac{I'-\ellI'}{E'I'}\,\sum_b{\tfrac{\Indicator{d'=\swap{\deme{IE}}{b}{d}}}{\pi_{\jump{Trans}}^b(t,x)}}
    \right)\\
    &+\indicator{u'=\jump{Prog}}\,\left(
    \tfrac{I'-\ellI'}{I'}\,\tfrac{\Indicator{d'=d}}{\pi_{\jump{Prog}}^{\emptyset}(t,x)}
    +\tfrac{1}{I'}\,\sum_b{\tfrac{\Indicator{d'=\swap{\deme{EI}}{b}{d}}}{\pi_{\jump{Prog}}^b(t,x)}}
    \right)
    +\indicator{u'=\jump{Recov}}
    +\indicator{u'=\jump{Wane}}\\
    &+\alpha_{\jump{Sample}}(t,x,x')\,\indicator{u'=\jump{Sample}}\,\left(\sum_{e\in{S_0}}{\indicator{b_e\in\cols{\deme{I}}{d}}\,\left(\tfrac{I'-\ellI'}{I'}\right)\,\indicator{t=e}}
    +\sum_{e\in{S_1}}{\indicator{b_e\in\cols{\deme{I}}{d}}\,\tfrac{1}{I'}\,\indicator{t=e}}\right)\\
    &+\alpha_{\jump{Trans}}(t,x,x')\,\indicator{u'=\jump{Trans}}\,\sum_{e\in{B}}{\indicator{b_e\cup{b'_e}\in\cols{\deme{I}}{d}}\,\tfrac{1}{E'I'}\,\left[\tfrac{\Indicator{d'=\fork{\deme{IIE}}{b_eb_e'}{d}}}{\pi_{\jump{Trans}}^{b_eb_e'}(t,x)}+\tfrac{\Indicator{d'=\fork{\deme{IIE}}{b_e'b_e}{d}}}{\pi_{\jump{Trans}}^{b_e'b_e}(t,x)}\right]\,\indicator{t=e}}.\\
  \end{split}\\
  \begin{split}
    \label{eq:seirs-decay}
    \lambda(t,x,&d)=
    \int{\alpha_{\jump{Sample}}(t,x,x')\,\dd{x'}}
    +\int{\alpha_{\jump{Recov}}(t,x,x')\,\indicator{I\le\ellI}\,\dd{x'}}.\\
  \end{split}
\end{align}
One sees, in particular, that the importance-sampling kernels $\pi$ represent the probabilities of changing color on a particular branch (or not doing so), contingent on the population event.
For example, $\pi_{\jump{Prog}}^{b}(t,x)$ is the probability of proposing an $\deme{E}\to\deme{I}$ change on branch $b$ at time $t$ given state $x$, given that a $\jump{Prog}$ event occurs at that time.
The complementary probability $\pi_{\jump{Prog}}^{\emptyset}(t,x)$ is that of proposing no change to any branch.
As another example $\pi_{\jump{Trans}}^{bb'}(t,x)$ is the probability of proposing, at a genealogical branch-point, that branch $b$ is put into deme $\deme{I}$ and $b'$ is put into $\deme{E}$;
$\pi_{\jump{Trans}}^{b'b}$ is the probability of the alternative.
One has wide latitude in the choice of these kernels and by tuning them, one can attempt to minimize Monte Carlo variance.
In particular, one can exploit the permitted time-dependence of the kernels $\pi$ so as to ``borrow'' information from future events.
Thus, for example, since in the SEIR model, branch points can only occur when the parent is in the $\deme{I}$ deme, one might posit $\pi_{\jump{Prog}}$ and $\pi_{\jump{Trans}}$ in such a way as to drive the color along a branch into $\deme{I}$ as it approaches a branch point.

For illustrative purposes, we here use the simpler, \emph{na\"ive} choice, which is non-anticipatory and always available.
The na\"ive choice consists of simply proposing changes in proportion to their relative abundance.
Thus, in this case, $\pi_{\jump{Prog}}^b=\tfrac{1}{E}$ for all $b\in\cols{\deme{E}}{y}$ and $\pi_{\jump{Prog}}^{\emptyset}=\tfrac{E-\ellE}{E}$.
Similarly, $\pi_{\jump{Trans}}^b=\tfrac{1}{I}$ and $\pi_{\jump{Trans}}^{\emptyset}=\tfrac{I-\ellI}{I}$.
At branch points, the two possibilities are given equal probabilities, \ie $\pi_{\jump{Trans}}^{bb'}=\pi_{\jump{Trans}}^{b'b}=\tfrac{1}{2}$.
\zcref[S]{fig:seirs-example} shows the results of a numerical calculation based on these choices.

With the na\"ive choice of importance-sampling kernels, it can happen that the coloring process puts a lineage into a deme that is inconsistent with the data.
For example, at a branch point or at a sample, the parent lineage might be in deme $\deme{E}$.
Observe that, in this case, according to \zcref{eq:seirs-boost}, the boost factor will be $0$, which is the correct penalty to apply.
Alternatively, a different choice of $\pi$ that anticipates the necessary deme-location, might avoid such extreme penalties and thereby prove more efficient.

\begin{figure}
  \begin{center}
\begin{knitrout}\small
\definecolor{shadecolor}{rgb}{0.969, 0.969, 0.969}\color{fgcolor}

{\centering \includegraphics[width=1\linewidth]{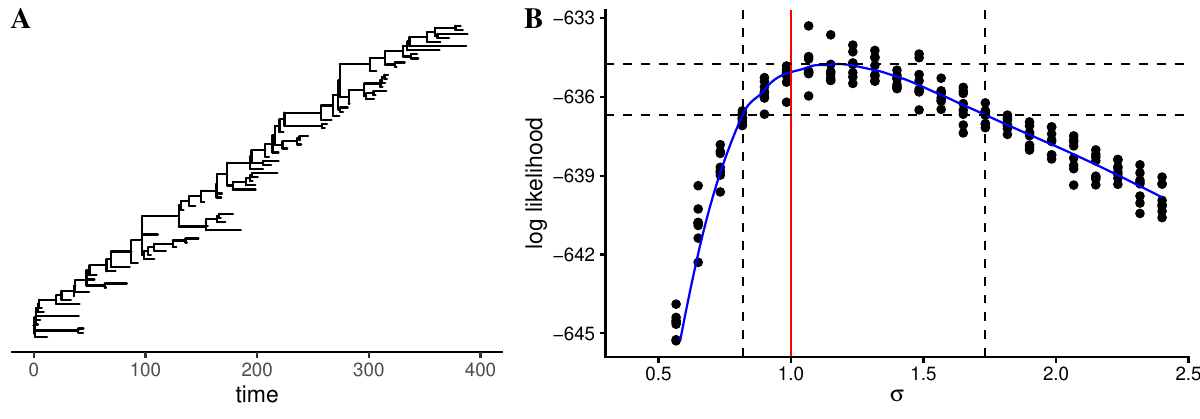} 

}

\end{knitrout}
  \end{center}
  \caption{
    \label{fig:seirs-example}
    Likelihood computation for the SEIRS model by sequential Monte Carlo, using the scheme of \zcref{sec:smc}.
    \textbf{(A)}~Simulated genealogy for $\beta=3$,
    $\sigma=1$,
    $\gamma=0.5$,
    $\psi=0.02$,
    $\omega=0.08$,
    $(S_0,E_0,I_0,R_0)=(70, 1, 0, 50)$.
    \textbf{(B)}~Likelihood slice in the $\sigma$-direction.
    Each point represents the estimate of an independent sequential Monte Carlo computation.
    The blue curve shows a LOESS smooth;
    the dashed vertical lines enclose the Monte Carlo-adjusted 95\% confidence interval \citep{Ionides2017}.
  }
\end{figure}

\section{Discussion}
\label{sec:discussion}

The theory presented here represents a generalization of the existing coalescent and birth-death-process approaches to phylodynamic inference.
Importantly, it allows computation of the likelihood via strictly forward-in-time computations and permits consideration of models with discrete population structure.
Moreover, inasmuch as the formulae of \zcref{thm:obsc-filter} can be efficiently computed via sequential Monte Carlo, explicit expressions for transition probabilities are not needed:
it is sufficient to be able to simulate from the population process.
This feature of the algorithms---known as the \emph{plug-and-play property} \citep{He2010}---further expands the class of population models that can be confronted with data.

Specifically, the theory gives us the freedom to choose models with many demes.
\citet{Volz2012} and \citet{Rasmussen2014a} have shown a path to phylodynamic inference in the presence of discrete population structure of the kind we consider here.
Their procedures involve a structured-coalescent approximation whereby deme-specific coalescent rates obey a set of differential equations solved backward in time.
The theory presented here avoids the need for such approximations, though in practice whether the exact results are worth their computational cost will depend on both the questions asked and the data.
In particular, even if cheaper approximations are used in data analysis, an exact calculation is invaluable in that it allows one to evaluate the accuracy of proposed approximations, which might otherwise be difficult to establish.

Some existing methods put rather severe limits on the form of the sampling model and, as \citet{Volz2014a} pointed out, misspecification of the sampling model can lead to large inferential biases.
With the theory presented here, essentially arbitrary specification of the sampling model is possible.
In particular, one can posit sampling at a rate which is an arbitrary function of time and state and include discrete sampling events as well.
It is also possible to condition on the existence of samples.

If sequential Monte Carlo algorithms are used to compute the likelihoods of \zcref{thm:obsc-filter}, then it is straightforward to simultaneously assimilate information from both time-series and genealogical data.
One can therefore supplement traditional incidence, disease, or mortality time series with genealogical data to improve inference.

A limitation of the theory is that the population models are assumed to be pure jump processes, which allows consideration of demographic stochasticity and environmental stochasticity modeled by jumps involving multiple individuals \citep{Breto2011}, but disallows stochastic processes with a diffusive component.
However, we anticipate that it will be possible to incorporate the full range of Markovian environmental stochasticity via extension of this theory to population models containing both diffusion and jump components.

Similarly, the theory presented here assumes that at most one birth event can occur at a time and that migration and sampling events involve at most one source deme at a time (though multiple migrators or samples are permitted).
These restrictions are not essential, and the proof of \zcref{thm:pruned-lik} can be adapted to accommodate relaxations of these assumptions.
This will be developed in a sequel.

The price of the theory's flexibility is primarily computational.
When sequential Monte Carlo is used to evaluate the likelihood in \zcref{thm:obsc-filter}, the computational complexity scales linearly with the number of genomic samples.
In its most straightforward implementation---using an event-driven algorithm \citep[\eg][]{Gillespie1977a}---it scales super-linearly with population size in general.
However, stochastic simulation schemes are available that scale independently of population size \citep{Higham2008}.
On the other hand, the importance sampling underlying \zcref{thm:obsc-filter} will in general require effort that is exponential in the number of demes.
For models with many demes, therefore, approaches for ameliorating or circumventing this curse of dimensionality may be necessary.
Critically, the substantial freedom one has in the choice of the importance-sampling kernel $\pi$ can be exploited for this purpose.
In particular, since it is permissible to employ an importance sampling distribution adapted to future observations, there is hope for highly efficient algorithmic computation.

\phantomsection
\addcontentsline{toc}{section}{Acknowledgments}
\section*{Acknowledgments}

We thank Erik Volz, Caroline Colijn, Cris Moore, Ethan Romero-Seversen, Castedo Ellerman, and three anonymous reviewers for their valuable suggestions.
This work was supported by grants from
the U.S. National Science Foundation (Grants \#2526827, \#1761603),
the U.S. National Institutes of Health, (Grants \#1R01AI143852, \#1U54GM111274).
QL acknowledges the support of the Michigan Institute for Data Science.

\clearpage

\phantomsection
\addcontentsline{toc}{section}{Index of symbols and acronyms}

\begin{table}[H]
  \caption{
    Index of symbols and acronyms used in the text.
    \textbf{Boldface} is used to distinguish random elements.
    \label{tab:symbols}
  }
  \vspace{1ex}
  \begin{tabular}{lp{0.6\linewidth}l}
    \hline\hline
    Symbol & Meaning & Definition\\
    \hline
    $t\in\Rp$ & time & \zcref{sec:notation} \\
    \cadlag & right-continuous with left limits & \zcref{sec:notation} \\
    \caglad & left-continuous with right limits & \zcref{sec:notation} \\
    $\indicator{P},\Indicator{P}$ & indicator ($1$ if $P$ is true, $0$ if $P$ is false) & \zcref{sec:notation} \\
    $\Xrt_t,\Yrt_t,\vt(t,x)$, etc. & left-limits of $\Xr_t,\Yr_t,v(t,x)$, etc. & \zcref{sec:notation} \\
    $\rightlim{F},\rightlim{C}$, etc. & right-limits of $F,C$, etc. & \zcref{sec:notation} \\
    $\Xspace$ & state-space of the population process & \zcref{sec:population-process} \\
    $\Xr_t,\X_t$ & population process & \zcref{sec:population-process} \\
    $\Xrh_k,\Xh_k$, $k\in\Zp$ & embedded chain of the population process & \zcref{sec:notation} \\
    $\alpha(t,x,x')$ & hazard of jump from $\Xr=x$ to $\Xr=x'$ & \zcref{sec:population-process} \\
    $p_0(x)$ & probability density of $\Xr_0$ & \zcref{sec:population-process} \\
    KFE & Kolmogorov forward equation & \zcref{sec:kolmogorov-eqns} \\
    KBE & Kolmogorov backward equation & \zcref{sec:kolmogorov-eqns} \\
    $\delta(t,t')$ & one-sided Dirac delta function & \zcref{sec:deltas} \\
    $\Jumps$ & collection of jump marks & \zcref{sec:jump-marks} \\
    $\Ur_t,\U_t$ & jump-mark process & \zcref{sec:jump-marks} \\
    $\Demes$ & collection of demes & \zcref{sec:demes} \\
    $\Hr_t,\H_t$ & history process & \zcref{sec:history-process} \\
    $\leaves$ & set of labels of samples or extant lineages & \zcref{sec:genealogy} \\
    $\partit(\leaves)$ & collection of partitions of subsets of $\leaves$ & \zcref{sec:genealogy} \\
    $\Gr_t,\G_t$ & genealogy process & \zcref{sec:genealogy,sec:genealogy-process} \\
    $T, t(G)$ & genealogy time & \zcref{sec:genealogy} \\
    $\Zr_t,\Z_t$ & tree-structure process & \zcref{sec:genealogy} \\
    $\Yr_t,\Y_t$ & coloring process & \zcref{sec:genealogy} \\
    $\fY(Z)$ & set of colorings compatible with tree-structure $Z$ & \zcref{sec:genealogy} \\
    $\fY_t(Z)$ & set of colorings compatible with tree-structure $Z$ at time $t$ & \zcref{sec:genealogy} \\
    $\col{Y}$ & color (location within the set of demes) & \zcref{sec:genealogy} \\
    $\ctr{Y}$  & event indicator & \zcref{sec:genealogy} \\
    $\G^{\lab{Z}}$ & tree-structure of a genealogy & \zcref{sec:genealogy} \\
    $\G^{\lab{Y}}$ & coloring of a genealogy & \zcref{sec:genealogy} \\
    $n$ & deme-occupancy function & \zcref{sec:demes} \\
    $r^u$ & production of a jump of mark $u\in\Jumps$ & \zcref{sec:genealogy} \\
    $\ell$ & lineage count function & \zcref{sec:ells} \\
    $s$ & saturation function & \zcref{sec:ells} \\
    $Q_u(y,y')$ & compatibility indicator
    ($1$ iff a jump of mark $u$ is compatible with the local structure as described by $y$,$y'$; $0$ else)
    & \zcref{sec:compatibility} \\
    $\tbinratio{n}{\ell}{r}{s}$ & binomial ratio & \zcref{sec:binomial-ratio} \\[1ex]
    $\cols{i}{y}$ & set of branches of a given color & \zcref{sec:swapnchop} \\
    $\swap{ij}{b}{},\backswap{ij}{b}{}$ & swap operators (associated with inline nodes) & \zcref{sec:swapnchop} \\
    $\chop{ij}{ab}{},\backchop{ij}{ab}{}$ & chop operators (associated with samples) & \zcref{sec:swapnchop} \\
    $\fork{ijk}{bb'}{},\backfork{ijk}{bb'}{}$ & fork operators (associated with branch points) & \zcref{sec:swapnchop} \\
    \hline\hline
  \end{tabular}
\end{table}

\phantomsection
\addcontentsline{toc}{section}{References}

\clearpage
\appendix
\numberwithin{equation}{section}
\numberwithin{table}{section}
\numberwithin{figure}{section}

\titleformat{\section}[hang]{\large\bfseries}{Appendix \periodafter\thesection}{2ex}{\periodafter}{}
\renewcommand{\thesubsection}{\thesection\arabic{subsection}}
\renewcommand{\theequation}{\thesection\arabic{equation}}
\renewcommand{\thefigure}{\thesection\arabic{figure}}
\renewcommand{\thetable}{\thesection\arabic{table}}
\renewcommand{\thealgorithm}{\thesection\arabic{algorithm}}

\section{Proof of Theorem 1}
\label{sec:proof}

We are given a pruned genealogy $\P=(\T,\Z,\Y)$ and a history $\H$ and wish to compute $\CondProb{\Pr_\T=\P}{\Hr_\T=\H}$.
First, note that if $\event{\H}\nsupseteq\event{\P}$, then $\H$ and $\P$ are incompatible and $\CondProb{\Pr_\T=\P}{\Hr_\T=\H}=0$.
Similarly, if any event of $\H$ is incompatible with the local structure of $\P$ in the sense of \zcref{sec:compatibility}, then necessarily $\CondProb{\Pr_\T=\P}{\Hr_\T=\H}=0$.
In either case, the conclusion of the theorem follows.
Let us therefore suppose that neither of these conditions hold.

We can always construct a time-respecting, but otherwise arbitrary ordering of the sample nodes.
Let $\Ph_j$ represent the embedded chain of the pruned genealogy process, \ie the genealogy subtended by the first $j$ samples in this ordering.
Since each $\Ph_j$ contains $\Ph_{j'}$ for all $j'<j$, $\Ph_j$ is a Markov chain.
Therefore, we have the factorization
\begin{equation*}
  \CondProb{\P}{\H}=\prod_{j}{\CondProb{\Ph_j}{\Ph_{j-1},\H}}.
\end{equation*}
Each factor, $\CondProb{\Ph_j}{\Ph_{j-1},\H}$, can itself be factorized.
In particular, suppose $\H$ and $\Ph_{j-1}$ are given and suppose that it is the $K$-th event in $\H$ that introduces the $j$-th sample.
The $j$-th lineage extends backward in time until it either coalesces with $\Ph_{j-1}$ or reaches $t=0$ (\zcref{fig:embedded-chain}).
Moreover, since $\P$ is compatible with $\H$, nothing can happen to this lineage between events of $\H$.
Therefore, we proceed backward, event by event, as follows.
Since the type of event $K$ is given, the color of the $j$-th lineage just prior to that event is known.
At each subsequent event, the lineage either emerges from the event or does not.
Moreover, if the lineage does emerge from the event, the color prior to the event is determined and the color remains unchanged if it does not.
Suppose that lineage $j$ is in deme $i$ just to the right of event $k<K$ and that the deme-$i$ occupancy and production are $n_i$, $r_i$, respectively.
Let $\ell_{ij}$, $s_{ij}$ be the lineage count and saturation at this event in $\Ph_{j-1}$.
Then the new lineage can be any one of the $n_i-\ell_i$ lineages as yet unaccounted for.
Of these, $r_i-s_{ij}$ emerge from the event.
Therefore, the probability that lineage $j$ emerges from the event is $(r_i-s_{ij})/(n_i-\ell_{ij})$ and the probability that it does not is $(n_i-r_i-\ell_{ij}+s_{ij})/(n_i-\ell_{ij})$.
Let $q_{jk}$ be the former if lineage $j$ does in fact emerge from the event and the latter if it does not.
Then $\CondProb{\Ph_j}{\Ph_{j-1},\H}=\prod_{k}{q_{jk}}$ and $\CondProb{\P}{\H}=\prod_{jk}{q_{jk}}$.

\begin{figure}[b]
\begin{knitrout}\small
\definecolor{shadecolor}{rgb}{0.969, 0.969, 0.969}\color{fgcolor}

{\centering \includegraphics[width=0.8\linewidth]{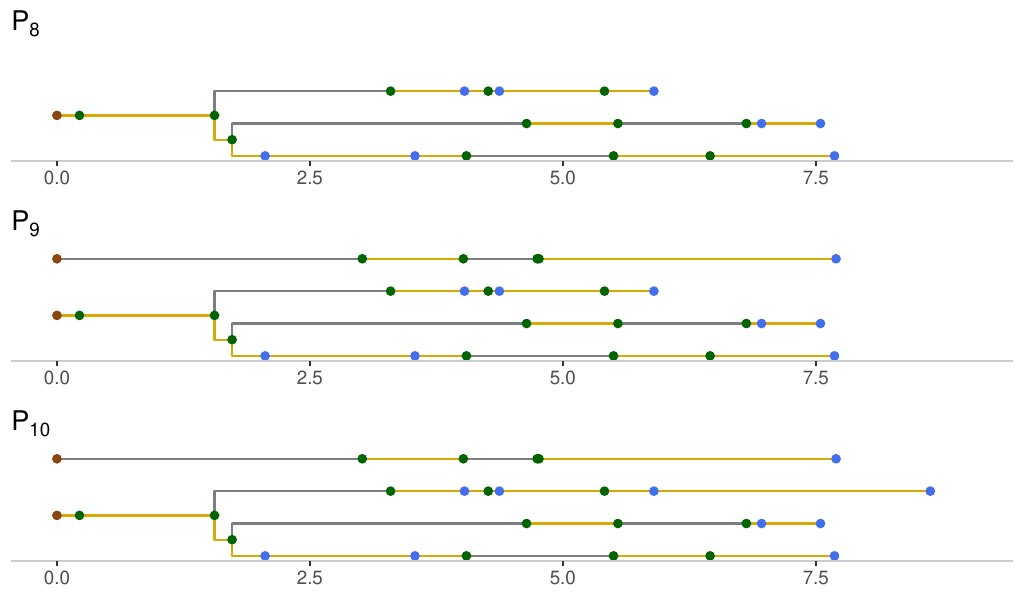} 

}

\end{knitrout}
  \caption{
    \textbf{Embedded chain of the pruned genealogy process.}
    Three successive states of the embedded chain of the pruned genealogy process are shown.
    At each step, a new lineage is added in a sampling event.
    This lineage extends backward until it either coalesces with the pre-existing genealogy or reaches $t=0$.
    \zcref[S]{thm:pruned-lik} is derived by computing the probability that it coalesces at each possible point, conditional on the history of the population process.
    \label{fig:embedded-chain}
  }
\end{figure}

Now fix $k$ and consider the cumulative product $\prod_{m=1}^{j}{q_{mk}}$.
Clearly $\ell_{1,k}=s_{1,k}=0$ and, if lineage $j$ has not already coalesced to the right of $k$, then $\ell_{j+1,k}=\ell_{j,k}+1$.
Moreover, if $j$ emerges from $k$, then $s_{j+1,k}=s_{j,k}+1$ and $s_{j+1,k}=s_{j,k}$ if it does not.
Therefore, as we accumulate each successive lineage $j$, we move one step to the right in the following diagram.

\resizebox{\linewidth}{!}{
  \begin{tikzpicture}[scale=3.2]
    \pgfmathsetmacro{\vh}{-0.75}
    \pgfmathsetmacro{\os}{0.5}
    \tikzstyle{coordinate}=[inner sep=0pt,outer sep=0pt]
    \tikzstyle{solid}=[color=black, thin, >=stealth]
    \tikzstyle{dots}=[color=black, thin, dotted, >=stealth]
    \node (P00) at (0,0*\vh) {$(\ell=0,s=0)$};
    \node (P10) at (1,0*\vh) {$(1,0)$};
    \node (P11) at (1,1*\vh) {$(1,1)$};
    \node (P20) at (2,0*\vh) {$(2,0)$};
    \node (P21) at (2,1*\vh) {$(2,1)$};
    \node (P22) at (2,2*\vh) {$(2,2)$};
    \node (P30) at (3,0*\vh) {$(3,0)$};
    \node (P31) at (3,1*\vh) {$(3,1)$};
    \node (P32) at (3,2*\vh) {$(3,2)$};
    \node (P33) at (3,3*\vh) {$(3,3)$};
    \node (P40) at (4,0*\vh) {};
    \node (P41) at (4,1*\vh) {};
    \node (P42) at (4,2*\vh) {};
    \node (P43) at (4,3*\vh) {};
    \node (P44) at (4,4*\vh) {};
    \node (Plm2sm2) at (4+\os,2*\vh) {};
    \node (Plm2sm1) at (4+\os,3*\vh) {};
    \node (Plm2sm0) at (4+\os,4*\vh) {};
    \node (Plm1sm1) at (5+\os,3*\vh) {$(\ell-1,s-1)$};
    \node (Plm1sm0) at (5+\os,4*\vh) {$(\ell-1,s)$};
    \node (Pls)   at (6+\os,4*\vh) {$(\ell,s)$};
    \draw [solid,->] (P00) -- (P10) node[midway,above,sloped] {$\frac{n-r}{n}$};
    \draw [solid,->] (P00) -- (P11) node[midway,above,sloped] {$\frac{r}{n}$};
    \draw [solid,->] (P10) -- (P20) node[midway,above,sloped] {$\frac{n-r-1}{n-1}$};
    \draw [solid,->] (P10) -- (P21) node[midway,above,sloped] {$\frac{r}{n-1}$};
    \draw [solid,->] (P11) -- (P21) node[midway,above,sloped] {$\frac{n-r}{n-1}$};
    \draw [solid,->] (P11) -- (P22) node[midway,above,sloped] {$\frac{r-1}{n-1}$};
    \draw [solid,->] (P20) -- (P30) node[midway,above,sloped] {$\frac{n-r-2}{n-2}$};
    \draw [solid,->] (P20) -- (P31) node[midway,above,sloped] {$\frac{r}{n-2}$};
    \draw [solid,->] (P21) -- (P31) node[midway,above,sloped] {$\frac{n-r-1}{n-2}$};
    \draw [solid,->] (P21) -- (P32) node[midway,above,sloped] {$\frac{r-1}{n-2}$};
    \draw [solid,->] (P22) -- (P32) node[midway,above,sloped] {$\frac{n-r}{n-2}$};
    \draw [dots,->]  (P22) -- (P33) node[midway,above,sloped] {$\frac{r-2}{n-2}$};
    \draw [dots,->]  (P30) -- (P40) {};
    \draw [dots,->]  (P30) -- (P41) {};
    \draw [dots,->]  (P31) -- (P41) {};
    \draw [dots,->]  (P31) -- (P42) {};
    \draw [dots,->]  (P32) -- (P42) {};
    \draw [dots,->]  (P32) -- (P43) {};
    \draw [dots,->]  (P33) -- (P43) {};
    \draw [dots,->]  (P33) -- (P44) {};
    \draw [dots,->]  (Plm2sm2) -- (Plm1sm1) {};
    \draw [dots,->]  (Plm2sm1) -- (Plm1sm1) {};
    \draw [dots,->]  (Plm2sm1) -- (Plm1sm1) {};
    \draw [dots,->]  (Plm2sm1) -- (Plm1sm0) {};
    \draw [dots,->]  (Plm2sm1) -- (Plm1sm0) {};
    \draw [dots,->]  (Plm2sm0) -- (Plm1sm0) {};
    \draw [solid,->] (Plm1sm1) -- (Pls) node[midway,above,sloped] {$\frac{r-s+1}{n-\ell+1}$};
    \draw [solid,->] (Plm1sm0) -- (Pls) node[near start,above,sloped] {$\frac{n-r-\ell+s}{n-\ell+1}$};
  \end{tikzpicture}
}

\noindent
At each step, the lineage count increases by 1 and the saturation either increases by a unit or stays the same.
The probability of each option is indicated on the corresponding arrow.
Although there are multiple paths by which the process may arrive at the final lineage count and saturation, the product of the probabilities along each path is the same:
\begin{equation*}
  \frac{\tbinom{n_i-\ell_i}{r_i-s_i}}{\tbinom{n_i}{r_i}}.
\end{equation*}
Moreover, since each of the options is independent of the others,
\begin{equation*}
  \prod_{j}{q_{jk}}=\binratio{n}{\ell}{r}{s},
\end{equation*}
the latter being the binomial ratio defined in \zcref{sec:binomial-ratio}.

Alternatively, one can reason as follows.
Conditional on $\Hr_\T=\H$, at each time $t\in\event{\H}$, a jump of mark $\U_t$ occurred, with a production of $r^{\U_t}=(r_i)_{i\in\Demes}$, resulting in a deme-occupancy of $n(\X_t)=(n_i)_{i\in\Demes}$.
In $\P$, at time $t$, there are $\ell_i=\ell_i(\Y_t)$ lineages in deme $i$, of which $s_i=s_i(\Y_t)$ are emergent.
By assumption, at each genealogical event, lineages within a deme are exchangeable:
each has an identical probability of being involved.
This exchangeability implies that each lineage present in a deme at time $t$ was equally likely to have been one of the emergent lineages.
In particular, at time $t$, the probability that $s_i$ of the $\ell_i$ deme-$i$ lineages were among the $r_i$ of $n_i$ lineages emergent in the unpruned genealogy process is the same as the probability that, upon drawing $\ell_i$ balls without replacement from an urn containing $r_i$ red balls and $n_i-r_i$ black balls, exactly $s_i$ of the drawn balls are red, namely
\begin{equation*}
  \frac{\binom{n_i-\ell_i}{r_i-s_i}\,\binom{\ell_i}{s_i}}{\binom{n_i}{r_i}}.
\end{equation*}
Because the lineages are labeled, each of the $\tbinom{\ell_i}{s_i}$ equally probable sets of $s_i$ lineages is distinct;
just one of these is the one present in $\P$.
Moreover, since, again conditional on $\Hr_\T=\H$, the identities of the lineages involved in a genealogical event are random and independent of the identities selected at all other events, we have established that
\begin{equation*}
  \CondProb{\Pr_\T=\P}{\Hr_\T=\H}=\prod_{t\in\event{\H}}{\binratio{n(\X_t)}{\ell(\Y_t)}{r^{\U_t}}{s(\Y_t)}}.
\end{equation*}

Returning to the possibility that $\H$ is incompatible with $\P$, since $\Prob{\Pr_\T=\P}=0$ if either any $Q^{}_{\U_t}=0$ or $\event{\P}\nsubseteq\event{\H}$, we obtain the result.
\hfill\qedsymbol

\section{Filter equations}
\label{sec:filter-eqns}

\subsection{Definition}

The likelihoods that appear in \zcref{thm:pruned-filter,thm:obsc-filter} are integrals over large sets of histories.
As such, explicit expressions for them are not available, and we require mathematical tools to allow us to manipulate these quantities and devise algorithms for their numerical solution.
The \emph{filter equations} we introduce here are suitable for these purposes, and we devote this appendix to exposing their essential properties.
This extremely convenient formalism has, to our knowledge, not been thoroughly exploited in the context we introduce here, though we note their resemblance to the constructions of \citet{Ogata1978}, \citet{Puri1986}, \citet{Kliemann1990}, and \citet{Giesecke2018}.

\begin{defn}
  Let $\Xr_t$ be a continuous-time Markov process with KFE
  \begin{mathsize}{9pt}{10pt}
    \begin{equation}
      \label{eq:kfe2}
      \frac{\partial{u}}{\partial{t}}(t,x)
      =\int{u(t,x')\,\beta(t,x',x)\,\dd{x'}}
      -\int{u(t,x)\,\beta(t,x,x')\,\dd{x'}}.
    \end{equation}
  \end{mathsize}%
  Suppose that $B:\Rp\times\Xspace^2\to\Rp$ and $\lambda:\Rp\times\Xspace\to\mathbb{R}$ are are given measurable functions.
  Let $S\subset\Rp$ be locally finite (\ie $S\cap{[0,t]}$ is finite for all $t>0$).
  Then the system of equations
  \begin{mathsize}{9pt}{11pt}
    \begin{align}
      \frac{\partial{w}}{\partial{t}}(t,x)
      &=\int{w(t,x')\,\beta(t,x',x)\,B(t,x',x)\,\dd{x'}}
      -\int{w(t,x)\,\beta(t,x,x')\,\dd{x'}}
      -\lambda(t,x)\,w(t,x),
      \qquad
      &t\notin{S},
      \label{eq:filter-eq-defn-reg}\\
      w(t,x)&=\int{\wt(t,x')\,\beta(t,x',x)\,B(t,x',x)\,\dd{x'}},\qquad
      &t\in{S},
      \label{eq:filter-eq-defn-sing}
    \end{align}
  \end{mathsize}%
  is called the \emph{filter equation} \emph{driven by} $\beta$, with \emph{boost} $B$, \emph{decay} $\lambda$, and \emph{observed event times} $S$.
  Without important confusion, the process $\Xr_t$ can also be said to drive the filter equation.
  \zcref[S]{eq:filter-eq-defn-reg} is the \emph{regular part} of the filter equation;
  \zcref[S]{eq:filter-eq-defn-sing} is known as the \emph{singular part}.
\end{defn}

\begin{remark}
  Trivially, a Kolmogorov forward equation is itself a filter equation with boost $1$, decay $0$, and $S=\emptyset$.
\end{remark}

\subsection{Properties}

The following results show how filter equations allow one to integrate over random histories.
First, \zcref{lemma:reg-filt} shows how one integrates over the full space of histories using a regular filter equation.
\zcref[S]{lemma:sing-filt} builds on this when the set of histories is restricted.

\begin{lemma}
  \label{lemma:reg-filt}
  Suppose that $B:\Rp\times\Xspace^2\to\Rp$ is measurable.
  Let $\Vr_t$ be an $\Rp$-valued random process satisfying $\Vr_0=1$ and
  \begin{equation*}
    \CondExpect{\Vr_t}{\Hr_t=\H_t}=\prod_{\mathclap{e\;\in\;\event{\H_t}}}{B(e,\Xt_e,\X_e)}.
  \end{equation*}
  Let the family of measures $\lambda_t$ on $\Xspace$ be defined by
  \begin{equation*}
    \lambda_t(\mathcal{E})=\Expect{\Vr_t\cdot\Indicator{\Xr_t\in{\mathcal{E}}}},
  \end{equation*}
  for measurable $\mathcal{E}$, and let $w(t,x)$ be the density of $\lambda_t$,
  \ie $\lambda_t(\dd{x})=w(t,x)\,\dd{x}$.
  In particular, $\Expect{\Vr_t}=\lambda_t(\Xspace)=\int{w(t,x)\,\dd{x}}$.
  Then $w$ satisfies the initial condition $w(0,x)=p_0(x)$ and the regular filter equation,
  \begin{equation}
    \label{eq:reg-filter-eq2}
    \frac{\partial{w}}{\partial{t}}=\int{w(t,x')\,\alpha(t,x',x)\,B(t,x',x)\,\dd{x'}}-\int{w(t,x)\,\alpha(t,x,x')\,\dd{x'}}.
  \end{equation}
\end{lemma}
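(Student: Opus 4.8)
The plan is to exploit the multiplicative structure of the boost and derive the filter equation by an infinitesimal (master-equation) argument. First I would replace $\Vr_t$ by an $\Hr_t$-measurable surrogate. Writing $M_t\coloneq\prod_{e\in\event{\H_t}}{B(e,\Xt_e,\X_e)}$, the defining property of $\Vr_t$ together with the tower property (conditioning on $\Hr_t$, which determines $\Xr_t$) gives
\begin{equation*}
  \lambda_t(\mathcal{E})=\Expect{\CondExpect{\Vr_t}{\Hr_t}\,\Indicator{\Xr_t\in\mathcal{E}}}=\Expect{M_t\,\Indicator{\Xr_t\in\mathcal{E}}},
\end{equation*}
so only the conditional mean $M_t$ enters and we may work with it in place of $\Vr_t$. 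The initial condition is immediate: at $t=0$ no jumps have occurred, so $\event{\H_0}=\emptyset$, the empty product equals $1$, and $\lambda_0(\mathcal{E})=\Prob{\Xr_0\in\mathcal{E}}=\int_\mathcal{E}{p_0(x)\,\dd{x}}$, whence $w(0,x)=p_0(x)$.

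For the evolution equation I would compute $\lambda_{t+\Delta}(\mathcal{E})-\lambda_t(\mathcal{E})$ to first order in $\Delta$. Conditioning on $\Hr_t$ fixes both $\Xr_t=x$ and the value of $M_t$. Over $[t,t+\Delta]$, the simple-counting-process assumptions on $\Xr_t$ give three contributions: with probability $1-\Delta\int{\alpha(t,x,x')\,\dd{x'}}+o(\Delta)$ no jump occurs, so $M$ and the state are unchanged; with probability $\Delta\,\alpha(t,x,x')\,\dd{x'}+o(\Delta)$ a single jump to $x'$ occurs, multiplying $M$ by $B(t,x,x')$ and moving the state to $x'$; and the probability of two or more jumps is $o(\Delta)$. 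Collecting these,
\begin{equation*}
  \CondExpect{M_{t+\Delta}\,\Indicator{\Xr_{t+\Delta}\in\mathcal{E}}}{\Hr_t}
  =M_t\,\Indicator{x\in\mathcal{E}}\Bigl(1-\Delta\!\int{\alpha(t,x,x')\,\dd{x'}}\Bigr)
  +\Delta\,M_t\!\int_{\mathcal{E}}{B(t,x,x')\,\alpha(t,x,x')\,\dd{x'}}+o(\Delta).
\end{equation*}
Taking the outer expectation, using $\Expect{M_t\,f(\Xr_t)}=\int{w(t,x)\,f(x)\,\dd{x}}$, and relabelling $x\leftrightarrow x'$ in the single-jump term so that the post-jump variable ranges over $\mathcal{E}$, I obtain
\begin{equation*}
  \frac{\lambda_{t+\Delta}(\mathcal{E})-\lambda_t(\mathcal{E})}{\Delta}
  =\int_{\mathcal{E}}\Bigl[\int{w(t,x')\,\alpha(t,x',x)\,B(t,x',x)\,\dd{x'}}-w(t,x)\!\int{\alpha(t,x,x')\,\dd{x'}}\Bigr]\dd{x}+o(1).
\end{equation*}
Letting $\Delta\downarrow0$, and since $\mathcal{E}$ is an arbitrary measurable set, the bracketed integrand must equal $\partial_t w(t,x)$ for a.e.\ $x$, which is precisely \cref{eq:reg-filter-eq2}.

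The main obstacle is making the passage to the limit rigorous rather than the algebra, which is routine. Two points need care. First, because the boost $B$ is merely measurable and possibly unbounded, the measures $\lambda_t$ must be shown to be well defined and finite (equivalently, $M_t$ locally integrable), and the $o(\Delta)$ remainders---which include the two-or-more-jump contribution, where $M$ would accumulate a second boost factor---must be controlled uniformly enough to justify dividing by $\Delta$ and passing to the limit. Non-explosivity of $\Xr_t$ ($\Prob{\Kr_t<\infty}=1$) is exactly what guarantees that the multiple-jump contribution is genuinely $o(\Delta)$. Second, the interchange of the outer expectation with the $x'$-integration requires a dominated-convergence argument underwritten by the integrability of $\alpha(t,x,\cdot)$ and of the boosted rates against $w$. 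An alternative, fully rigorous route that sidesteps pointwise differentiation is to establish the weak (test-function) form $\frac{\dd}{\dd{t}}\Expect{M_t\,g(\Xr_t)}=\int\!\int{w(t,x)\,\alpha(t,x,x')\,[B(t,x,x')\,g(x')-g(x)]\,\dd{x'}\,\dd{x}}$ for a suitable class of $g$---for instance directly from the history density \cref{eq:Hdens}---and then read off \cref{eq:reg-filter-eq2} as its adjoint; I would fall back on this formulation wherever the pointwise argument proves delicate.
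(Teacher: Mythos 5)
Your proof is correct and takes essentially the same approach as the paper's: the empty-product observation for the initial condition, followed by a master-equation decomposition of the expectation over $[t,t+\Delta]$ according to whether zero, one, or more than one jump occurs, then a difference quotient and passage to the limit. Your tower-property reduction to $M_t$ and the test-set formulation are cosmetic refinements of the same argument; the one ingredient the paper makes explicit that you absorb into your technical caveats is the \cadlag-in-$t$ assumption on $\alpha$ and $B$, which is what licenses evaluating the boost and rate at time $t$ in the single-jump term.
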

\begin{proof}
  Since $\Vr_0=1$, $\lambda_0(\mathcal{E})=\Prob{\Xr_0\in\mathcal{E}}$, which implies that $w(0,x)=p_0(x)$.
  For $t>0$ and $\Delta>0$, the expectation can be broken into three terms, according to whether $\H_t$ has zero, one, or more than one event in $\halfclosed{t,t+\Delta}$.
  Accordingly, as $\Delta\downarrow{0}$,
  \begin{equation*}
    w(t+\Delta,x)=\left(1-\Delta\,\int{\alpha(t,x,x')\,\dd{x'}}\right)\,w(t,x)+\Delta\,\int{\alpha(t,x',x)\,B(t,x',x)\,w(t,x')\,\dd{x'}}+o(\Delta).
  \end{equation*}
  Forming a difference quotient and taking the limit, we obtain \zcref{eq:reg-filter-eq2}.
  Note that here we use the assumption that $\alpha$ and $B$ are \cadlag\ as functions of $t$.
\end{proof}

When events are known to have occurred at particular times, it is of interest to integrate over those histories that include an event at each of these times.
This leads to singular filter equations, as the next lemma shows.
Before we state the lemma, some terminology is needed.
Let $\mathbb{S}$ be the space of increasing, locally finite sequences in $\Rp$, with the topology induced by the Skorokhod metric and Lebesgue measure.
For $t\in\Rp$ and $S\in\mathbb{S}$, let $S^{}_t\coloneq{S\cap{[0,t]}}$.
Thus if $S\in\mathbb{S}$ and $S^{}_t=(\hat{s}_1,\dots,\hat{s}_K)$, then the infinitesimal element of Lebesgue measure at $S^{}_t$ is $\dd{S^{}_t}=\prod_{n=1}^{K}\dd{\hat{s}_n}$.

\begin{lemma}
  \label{lemma:sing-filt}
  Suppose that $B:\Rp\times\Xspace^2\to\Rp$ is measurable and
  $\Vr_t$ is an $\Rp$-valued random process satisfying
  \begin{equation*}
    \CondExpect{\Vr_t}{\Hr_t=\H_t}=\prod_{\mathclap{e\;\in\;\event{\H_{t}}}}{B(e,\Xt_e,\X_e)}.
  \end{equation*}
  Let $\lambda_t$ be a family of measures on $\Xspace\times\mathbb{S}$ defined by
  \begin{equation*}
    \lambda_t(\mathcal{E},\mathcal{S})=\Expect{\Vr_t\cdot\Indicator{\Xr_t\in{\mathcal{E}}}\cdot\Indicator{\exists{S\in\mathcal{S}}\:\text{s.t.}\:\event{\Hr_t}\supseteq{S^{}_t}}},
  \end{equation*}
  whenever $\mathcal{E}\subseteq\Xspace$ and $\mathcal{S}\subseteq\mathbb{S}$ are measurable.
  Let $w(t,x,S)$ be the density of this measure, \ie
  \begin{equation*}
    \lambda_t(\dd{x}\,\dd{S})=w(t,x,S)\,\dd{x}\,\dd{S^{}_t}.
  \end{equation*}
  Then $w$ satisfies
  \begin{align}
    \label{eq:two-part-filter-reg}
    \frac{\partial{w}}{\partial{t}}(t,x,S)
    &=\int{w(t,x',S)\,\alpha(t,x',x)\,B(t,x',x)\,\dd{x'}}
    -\int{w(t,x,S)\,\alpha(t,x,x')\,\dd{x'}},\qquad
    &t\notin{S},\\
    \label{eq:two-part-filter-sing}
    w(t,x,S)&=\int{\wt(t,x',S)\,\alpha(t,x',x)\,B(t,x',x)\,\dd{x'}},\qquad
    &t\in{S}.
  \end{align}
\end{lemma}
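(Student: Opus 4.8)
The plan is to repeat the infinitesimal analysis that proves \cref{lemma:reg-filt}, but now carrying along the bookkeeping needed to track the prescribed event times $S$ and, crucially, accounting for the change in the reference measure $\dd{S_t}$ as $t$ crosses a point of $S$. Throughout I read $w(t,x,S)$ as the $\Vr$-weighted joint density, in the state $x$ and in the prescribed times $S_t$, of the event that $\Xr_t$ lies near $x$ while the realized history $\Hr_t$ contains an event at each time in $S_t$ (and possibly others). Because $S$ is locally finite, its points are isolated, so the two cases $t\notin S$ and $t\in S$ can be handled separately and then glued.

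For the regular part, \cref{eq:two-part-filter-reg}, I fix $S$ and a time $t\notin S$ and choose $\Delta>0$ small enough that $(t,t+\Delta]\cap S=\emptyset$. Then $S_{t+\Delta}=S_t$, so the reference measure is unchanged and the constraint $\event{\Hr_{t+\Delta}}\supseteq S_{t+\Delta}$ coincides with $\event{\Hr_t}\supseteq S_t$; any extra events of the history falling in $(t,t+\Delta]$ are permitted by the $\supseteq$ constraint and do not alter it. I would therefore split the expectation defining $\lambda_{t+\Delta}$ by whether zero, one, or more than one event of $\Hr$ occurs in $(t,t+\Delta]$, exactly as in \cref{lemma:reg-filt}: the zero-event term gives $(1-\Delta\int\alpha(t,x,x')\,\dd{x'})\,w(t,x,S)$; a single jump $x'\to x$ contributes $\Delta\int w(t,x',S)\,\alpha(t,x',x)\,B(t,x',x)\,\dd{x'}$, since such a jump multiplies $\Vr$ by the boost $B(t,x',x)$; and the remainder is $o(\Delta)$. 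Forming the difference quotient and letting $\Delta\downarrow 0$, using the \cadlag\ assumption on $\alpha$ and $B$, yields \cref{eq:two-part-filter-reg}.

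The singular part, \cref{eq:two-part-filter-sing}, is where the novelty lies and is the step I expect to be the main obstacle. Here $t_*\in S$, and as $t$ crosses $t_*$ the constraint tightens from $\event{\Hr}\supseteq S_t\setminus\{t_*\}$ to $\event{\Hr}\supseteq S_t$, now demanding an event at exactly $t_*$; simultaneously the reference measure gains the Lebesgue factor for the newly prescribed time, $\dd{S_t}=\dd{t_*}\,\dd{(S_t\setminus\{t_*\})}$. The idea is to disintegrate $\lambda_{t_*}$ over this extra coordinate: conditional on the pre-jump state $x'$, whose $\Vr$-weighted density is the left limit $\wt(t_*,x',S)$, the density per unit time of an event at $t_*$ carrying $x'\to x$ is $\alpha(t_*,x',x)\,\dd{x}$, supplying exactly the factor $\dd{t_*}$ that, together with $\dd{(S_t\setminus\{t_*\})}$, assembles the full element $\dd{S_t}$; the forced jump also multiplies $\Vr$ by $B(t_*,x',x)$. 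Integrating over $x'$ and cancelling $\dd{x}\,\dd{S_t}$ from both sides leaves precisely \cref{eq:two-part-filter-sing}. This is the same event-forcing mechanism that converts the regular KFE into its singular companion in \cref{eq:kfe-sing}, with the boost $B$ inserted at the forced jump. The delicate point to make rigorous is this disintegration: one must check that the density of $\lambda_{t_*}$ against the enlarged base measure is obtained from $\wt$ by exactly one application of the jump kernel weighted by $B$, with no double-counting at $t_*$ and no leakage from the regular dynamics across the isolated point $t_*$---which holds because local finiteness of $S$ makes the forced event at $t_*$ the only active constraint in a punctured neighborhood. The initial condition $w(0,x,S)=p_0(x)$ for $S$ with $0\notin S$ follows directly from $\Vr_0\equiv 1$ and $\Prob{\Xr_0\in\mathcal{E}}=\int_{\mathcal{E}}p_0$, and the two parts together characterize $w$.
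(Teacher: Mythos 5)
Your proof is correct and takes essentially the same approach as the paper: the regular part \cref{eq:two-part-filter-reg} comes from rerunning the infinitesimal argument of \cref{lemma:reg-filt} with the constraint set held fixed (legitimate, since all prescribed times in $S_t$ lie in the past and the constraint is therefore $\Hr_t$-measurable), and the singular part \cref{eq:two-part-filter-sing} from the fact that the forced event at $t_*\in S$ contributes density $\alpha(t_*,x',x)\,B(t_*,x',x)$ per unit time against the enlarged reference measure $\dd{S_t}$. The paper merely organizes the same computation as an induction on the cardinality of $S_t$, with the base case given by \cref{lemma:reg-filt} and the singular step verified by integrating the history density \cref{eq:Hdens} directly; your disintegration at $t_*$, including the bookkeeping of the extra Lebesgue factor $\dd{t_*}$, is precisely what that integration yields.
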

\begin{proof}
  The proof proceeds by induction on the cardinality of $S_t$.
  The base case, for which $S_t=\emptyset$, follows immediately from \zcref{lemma:reg-filt}.
  Assuming that it holds for $|S^{}_t|<K$, one has only to verify \zcref{eq:two-part-filter-sing}.
  This can be accomplished by integrating \zcref{eq:Hdens} directly.
\end{proof}

\begin{remark}
  In the same way that \zcref{eq:kfe-reg,eq:kfe-sing} can be represented as a single equation by means of a Dirac delta notation, the \zcref{eq:two-part-filter-reg,eq:two-part-filter-sing} can be collapsed into a more compact form if $\beta$ is allowed to have atoms at a countable set of time-points and the boost $B$ is adjusted appropriately.
\end{remark}

\subsection{Adjoint filter equation}
\label{sec:adjoint-filter}

Given driver, boost, and decay functions, a filter equation determines a family of measures on $\Xspace$.
Specifically, suppose that, for $s\le{t}$ and $S\subset{\R}$ locally finite, $w(s,x,t,x')$ satisfies the filter equation
\begin{align}
  \label{eq:gen-filt1}
  w(s,x,s,x') = &\indicator{x'=x}\\
  \label{eq:gen-filt2}
  \frac{\partial{w}}{\partial{t}}(s,x,t,x')=
  &\int{w(s,x,t,\xi)\,\beta(t,\xi,x')\,B(t,\xi,x')\,\dd{\xi}}\\
  &\qquad -\int{w(s,x,t,x')\,\beta(t,x',\xi)\,\dd{\xi}}
  -\lambda(t,x')\,w(s,x,t,x'),
  \qquad
  &t\notin{S},\notag\\
  \label{eq:gen-filt3}
  w(s,x,t,x')=&\int{\wt(s,x,t,\xi)\,\beta(t,\xi,x')\,B(t,\xi,x')\,\dd{\xi}},\qquad
  &t\in{S}.
\end{align}
Let $\omega$ be the measure whose density is $w$:
\begin{equation*}
  \omega(s,x,t,\mathcal{E})\coloneq\int_{\mathcal{E}}{w(s,x,t,x')\,\dd{x'}}.
\end{equation*}
If $f:\Xspace\to\R$ is a measurable function, then one defines its integral with respect to this measure in the usual way:
\begin{equation*}
  \omega(s,x,t,f) = \int{\omega(s,x,t,\dd{x'})\,f(x')} = \int{w(s,x,t,x')\,f(x')\,\dd{x'}}.
\end{equation*}
In particular, $\omega(s,x,t,f)$ is itself a measurable function of $x$ and we have the Chapman-Kolmogorov-like relation
\begin{equation}\label{eq:chapkolm}
  \omega(s,x,t,f) = \int{\omega(s,x,\tau,\omega(\tau,\xi,t,f))\,\dd{\xi}} = \int{\omega(s,x,\tau,\dd{\xi})\,\omega(\tau,\xi,t,f)},
\end{equation}
whenever $s\le{\tau}\le{t}$.
Let us now fix $t$ and $f$ and write $F(s,x)=\omega(s,x,t,f)$.
Let us take $\beta$, $B$, and $\lambda$ to be \cadlag\ in $t$ and $F$ to be \caglad\ in $s$.
Applying \zcref{eq:chapkolm} we obtain, for $\Delta>0$ and $s\notin{S}$,
\begin{equation*}
  \begin{split}
    F(s-\Delta,x) = &\left(1-\Delta\,\int{\beta(s-\Delta,x,\xi)\,\dd{\xi}}-\Delta\,\lambda(s-\Delta,x)\right)\,F(s,x)\\
    &\qquad +\Delta\,\int{\beta(s-\Delta,x,\xi)\,B(s-\Delta,x,\xi)\,F(s,\xi)\,\dd{\xi}}+o(\Delta).
  \end{split}
\end{equation*}
Taking $\Delta\downarrow{0}$ gives
\begin{equation}\label{eq:adjfilt1}
  \begin{split}
    -\frac{\partial{F}}{\partial{s}}
    &= \int{\leftlim{\beta}(s,x,\xi)\,\left[\leftlim{B}(s,x,\xi)\,F(s,\xi)-F(s,x)\right]\,\dd{\xi}}-\leftlim{\lambda}(s,x)\,F(s,x).
  \end{split}
\end{equation}
For $s\in{S}$, we have
\begin{equation}\label{eq:adjfilt2}
  F(s,x)
  =\int{\omega(s,x,s,\dd{\xi})\,\rightlim{F}(s,\xi)}
  =\int{\beta(s,x,\xi)\,B(s,x,\xi)\,\rightlim{F}(s,\xi)\,\dd{\xi}},
\end{equation}
where $\rightlim{F}$ denotes the right-limit of $F$.
\zcref[S]{eq:adjfilt1,eq:adjfilt2} constitute the \emph{adjoint form} of \zcref{eq:gen-filt2,eq:gen-filt3}.
Together with the final condition $F(t,x)=f(x)$, these determine $\omega(s,x,t,f)$ for all $s\le{t}$.
Note that, when $B=1$ and $\lambda=0$, the adjoint filter equation becomes the familiar Kolmogorov backward equation (\zcref{eq:kbe}) for the driving process.

\subsection{Solving filter equations by sequential Monte Carlo}
\label{sec:smc}

Filter equations afford a convenient means of computing expectations and likelihoods for pure jump processes.
This is facilitated by the following Lemma, the statement of which uses a one-sided Dirac delta function (\zcref{sec:deltas}).

\begin{lemma}
  \label{lemma:monte-carlo}
  \zcref[S]{eq:filter-eq-defn-reg,eq:filter-eq-defn-sing} are satisfied by $w(t,x)=\int_0^{\infty}{v\,u(t,x,v)\,\dd{v}}$, where $u(t,x,v)$ satisfies the KFE
  \begin{mathsize}{9pt}{10pt}
    \begin{equation}
      \begin{aligned}
        \zcsetup{reftype=pluralequation}
        \label{eq:filterlemma}
        \frac{\partial{u}}{\partial{t}}(t,x,v)=
        &\frac{\partial}{\partial{v}}\left[\lambda(t,x)\,v\,u(t,x,v)\right]
        +\int_0^{\infty}\int{u(t,x',v')\,\beta(t,x',x)\,\delta\!\left(B(t,x',x)\,v',v\right)\,\dd{x'}\,\dd{v'}}\\
        &\quad-\int_0^{\infty}\int{u(t,x,v)\,\beta(t,x,x')\,\delta\!\left(B(t,x,x')\,v,v'\right)\,\dd{x'}\,\dd{v'}},
        &t\notin{S},\\
        u(t,x,v)\,\dd{x}=
        &\int_0^{\infty}\int{\ut(t,x',v')\,\pi(t,x',\dd{x})\,\delta\!\left(A(t,x')\,B(t,x',x)\,v',v\right)\,\dd{x'}\,\dd{v'}},&t\in{S}.
      \end{aligned}
    \end{equation}
  \end{mathsize}%
  Here, $A(t,x)\coloneq{\int{\beta(t,x,x')\,\dd{x'}}}$
  and $\pi(t,x,\dd{x'})\coloneq\beta(t,x,x')\,\dd{x'}/A(t,x)$.
\end{lemma}
\begin{proof}
  For each $t\notin{S}$, we have
  \begin{mathsize}{9pt}{10pt}
    \begin{equation*}
      \begin{aligned}
        \frac{\partial{w}}{\partial{t}}(t,x)
        =&\int_0^{\infty}{v\,\frac{\partial{u}}{\partial{t}}(t,x,v)\,\dd{v}}\\
        =&\int_0^{\infty}\int\int_0^{\infty}{v\,u(t,x',v')\,\beta(t,x',x)\,\delta\!\left(B(t,x',x)\,v',v\right)\,\dd{v}\,\dd{x'}\,\dd{v'}}\\
        &\qquad-\int_0^{\infty}\int\int_0^{\infty}{v\,u(t,x,v)\,\beta(t,x,x')\,\delta\!\left(B(t,x,x')\,v,v'\right)\,\dd{v}\,\dd{x'}\,\dd{v'}}\\
        &\qquad+\int_0^{\infty}{v\,\tfrac{\partial}{\partial{v}}\left[\lambda(t,x)\,v\,u(t,x,v)\right]\,\dd{v}}.\\
      \end{aligned}
    \end{equation*}
  \end{mathsize}%
  Here, the non-explosivity assumption guarantees that we can differentiate under the integral sign and exchange the order of integration.
  Moreover, it ensures that $u\to{0}$ as $v\to{\infty}$.
  Hence, by evaluating the first integral with respect to $v$, the second with respect to $v'$, and the third by parts, we obtain
  \begin{mathsize}{9pt}{10pt}
    \begin{equation*}
      \frac{\partial{w}}{\partial{t}}(t,x)
      =\int{v'\,u(t,x',v')\,\beta(t,x',x)\,B(t,x',x)\,\dd{v'}\,\dd{x'}}
      -\int{v\,u(t,x,v)\,\beta(t,x,x')\,\dd{v}\,\dd{x'}}
      -\lambda(t,x)\,\int{v\,u(t,x,v)\,\dd{v}},
    \end{equation*}
  \end{mathsize}%
  which is simplified to obtain \zcref{eq:filter-eq-defn-reg}.
  Similarly, at each $t\in{S}$, we have
  \begin{mathsize}{9pt}{10pt}
    \begin{equation*}
      \begin{aligned}
        w(t,x)\,\dd{x}
        =&\int_0^{\infty}\int\int_0^{\infty}{v\,\ut(t,x',v')\,\pi(t,x',\dd{x})\,\delta\!\left(A(t,x')\,B(t,x',x)\,v',v\right)\,\dd{x'}\,\dd{v'}\,\dd{v}}\\
        =&\int_0^{\infty}\int{v'\,\ut(t,x',v')\,\pi(t,x',\dd{x})\,A(t,x')\,B(t,x',x)\,\dd{x'}\,\dd{v'}}\\
        =&\int{\wt(t,x')\,\beta(t,x',x)\,B(t,x',x)\,\dd{x'}}\,\dd{x}.
      \end{aligned}
    \end{equation*}
  \end{mathsize}%
  which is equivalent to \zcref{eq:filter-eq-defn-sing}
\end{proof}

\zcref[S]{eq:filterlemma} are recognizable as the KFE of a certain process $(\Xr_t,\Vr_t)$.
In particular, the driver $\Xr_t$ has KFE \zcref{eq:kfe2}.
$\Vr_t$ is \emph{directed} by $\Xr_t$ in the sense that $\Vr$ has jumps wherever $\Xr$ does:
when $\Xr$ jumps at time $t$ from $x$ to $x'$, $\Vr$ jumps by the multiplicative factor $B(t,x,x')\ge{0}$.
Between jumps, $\Vr_t$ decays deterministically and exponentially at rate $\lambda(t,x)$.
At the known times in $S$, $\Xr$ jumps according to the probability kernel $\pi$ and, $\Vr$ jumps by the factor $A(t,x)\,B(t,x,x')$.
If we view $\Vr_t$ as a weight, then \zcref{lemma:monte-carlo} tells us how the $\Vr_t$-weighted average of $\Xr_t$ evolves in time:
this average is simply $\int{w(t,x)\,\dd{x}}$.
Thus, \zcref{lemma:monte-carlo} gives a recipe for the integration of \zcref{eq:two-part-filter-reg,eq:two-part-filter-sing} in the Monte Carlo sense.
A pseudocode representation of this procedure is given in \zcref{alg:smc}.
For simplicity, \zcref{alg:smc} is written for the time-homogeneous case, in which neither $\beta(t,x)$ nor $\lambda(t,x)$ depend explicitly on $t$.
In the time-inhomogeneous case, the exponential waiting-time and weight updates would be replaced by the corresponding integrated-hazard expressions.
This is a simple sequential importance sampling algorithm and as such can be improved upon in many ways \citep{Doucet2001,Arulampalam2002,Chopin2020,Wills2023}.
Our purpose here is merely to illustrate the implications of \zcref{lemma:monte-carlo} by showing how the features of \zcref{eq:filter-eq-defn-reg,eq:filter-eq-defn-sing} translate into computations, in the simplest possible context.

\clearpage

\begin{algorithm}[H]
  \caption{
    \textbf{Sequential Monte Carlo filter.}
    The following is a simple Monte Carlo scheme for the filter given by \zcref{eq:filter-eq-defn-reg,eq:filter-eq-defn-sing}.
    It takes as input an initial time $t_0$, a final time $T$, and a sequence of observed event times $S=\Set{t_1,\dots,t_K}$.
    We order the latter so that $t_0\le{t_1}<t_2<\cdots<t_K<{t_{K+1}}\coloneq{T}$.
    In addition, we are given event rates $\beta$, a boost function $B$, a decay function $\lambda$, and the initial distribution of latent states $p_0$.
    As in \zcref{lemma:monte-carlo}, we let
    \begin{equation*}
      \begin{gathered}
        A(x)\coloneq{\int{\beta(x,x')\,\dd{x'}}},
        \qquad
        \pi(x,\dd{x'})\coloneq\frac{\beta(x,x')}{A(x)}\,\dd{x'}.
      \end{gathered}
    \end{equation*}
    For a given, fixed ensemble size, $J\in\Zp$, the algorithm returns an unbiased Monte Carlo estimate, $\rdm{\hat{\lik}}$, of the likelihood.
  }
  \label{alg:smc}
  \begin{algorithmic}[1]
    \Procedure{SMCFilter}{$t_0,T,S,\beta,B,\lambda,p_0,J$}
    \Statex Initialize an ensemble of latent states and weights:
    \State
    $x_j\sim{p_0(\cdot)}$, for $j=1,\dots,J$
    \State
    $w_j\gets{1}$, for $j=1,\dots,J$
    \For{$k=0,\dots,K$}
    \Comment loop over observed event times
    \For{$j=1,\dots,J$}
    \Comment loop over the ensemble
    \State $t\gets{t_{k}}$
    \If{$k>0$}
    \Comment singular part (\zcref{eq:filter-eq-defn-sing})
    \State $x'_j\sim\pi(x_j,\cdot)$
    \State $w_j\gets{w_j\,A(x_j)\,B(t,x_j,x'_j)}$
    \State $x_j\gets{x'_j}$
    \EndIf
    \While{$t<t_{k+1}$}
    \Comment regular part (\zcref{eq:filter-eq-defn-reg})
    \State $\Delta\sim{\mathrm{Exp}(A(x_j))}$
    \Comment Sojourn times are exponentially distributed.
    \If{$t+\Delta<{t_{k+1}}$}
    \State $x'_j\sim{\pi(x_j,\cdot)}$
    \State $t'\gets{t+\Delta}$
    \State $w_j\gets{w_j\,B(t',x_j,x'_j)\,\exp{(-\lambda(x_j)\,\Delta)}}$
    \State $x_j\gets{x'_j}$
    \State $t\gets{t'}$
    \Else
    \State $\Delta\gets{t_{k+1}-t}$
    \State $w_j\gets{w_j\,\exp{(-\lambda(x_j)\,\Delta)}}$
    \State $t\gets{t_{k+1}}$
    \EndIf
    \EndWhile
    \EndFor
    \EndFor
    \State $\rdm{\hat{\lik}}\gets\tfrac{1}{J}\displaystyle\sum_{j}{w_j}$
    \EndProcedure
  \end{algorithmic}
\end{algorithm}

\end{document}